\documentclass[10pt,a4paper]{article}
\usepackage[latin1]{inputenc}
\usepackage[T1]{fontenc}
\usepackage{amsmath,amscd}
\usepackage{graphicx}
\usepackage[all]{xy}
\usepackage{amsthm}
\usepackage{amssymb,url}
\usepackage{bbm}
\usepackage{ae}
\usepackage[all]{xy}

\newtheorem{sats}{Theorem}[section]
\newtheorem{sats*}{Theorem}

\newtheorem{lem}[sats]{Lemma}
\newtheorem{cor}[sats]{Corollary}
\newtheorem{prop}[sats]{Proposition}

\newcommand{\R}{\mathbbm{R}}
\newcommand{\C}{\mathbbm{C}}

\newcommand{\Z}{\mathbbm{Z}}

\newcommand{\N}{\mathbbm{N}}

\newcommand{\ellL}{\mathcal{L}}

\newcommand{\id}{ \mathrm{id}}
\newcommand{\rd}{\mathrm{d}}

\newcommand{\Ko}{\mathcal{K}}
\newcommand{\Bo}{\mathcal{B}}
\newcommand{\Co}{\mathcal{C}}
\newcommand{\He}{\mathcal{H}}
\newcommand{\T}{\mathbbm{T}}

\newcommand{\Fg}{\mathcal{F}}

\newcommand{\ind}{\mathrm{i} \mathrm{n} \mathrm{d} \,}

\newcommand{\wind}{\mathrm{wi} \mathrm{n} \mathrm{d} \,}

\newcommand{\coker}{\mathrm{c} \mathrm{o} \mathrm{k} \mathrm{e}
\mathrm{r}\,}

\renewcommand{\epsilon}{\varepsilon}
\renewcommand{\phi}{\varphi}

\newcommand{\im}{\mathrm{i} \mathrm{m} \,}

\newcommand{\supp}{\mathrm{s} \mathrm{u} \mathrm{p} \mathrm{p}\,}
\newcommand{\e}{\mathrm{e}}
\newcommand{\tra}{\mathrm{t}\mathrm{r}}

\title{Index formulas and charge deficiencies on the Landau levels}
\author{Magnus Goffeng}
\date{Department of Mathematical Sciences, Division of Mathematics\\
Chalmers university of Technology and University of Gothenburg}

\begin{document}
\maketitle

\bibliographystyle{amsplain}

\begin{abstract}
The notion of charge deficiency is studied from the view of $K$-theory of operator algebras and is applied to the Landau levels in $\R^{2n}$. We calculate the charge deficiencies at the higher Landau levels in $\R^{2n}$ by means of an Atiyah-Singer type index theorem. 
\end{abstract}

\section{Introduction}

The paper is a study of the charge deficiencies at the Landau levels in $\R^{2n}$. The Landau levels are the eigenspaces of the Landau Hamiltonian which is the energy operator for a quantum particle moving in $\R^{2n}$ under the influence of a constant magnetic field of full rank. 

In \cite{avronseilersimon}, the notion of charge deficiency was introduced as a measure off how much a flux tube changes a fermionic system in $\R^2$. The setting of  \cite{avronseilersimon} is a quantum system where the Fermi energy is in a gap and the question is what happens when the system is taken trough a cycle. Letting $P$ denote the projection onto the state space and $U$ the unitary transformation representing the cycle, the projection $Q$ onto the new state space after it had been taken through a cycle can be expressed as $Q=UPU^*$. The relative index $\ind(Q,P)$ is defined as an infinite dimensional analogue of $\dim Q-\dim P$ and is well defined whenever $Q-P$ is a compact operator. The condition that $Q-P$ is compact is equivalent to that $[P,U]$ is compact. In the setting of  \cite{avronseilersimon} the relative index represents the change in the number of fermions that $U$ produces. In \cite{avronseilersimon} the following formula was proven:
\[\ind(Q,P)=\ind(PUP).\]

For sufficiently nice systems in $\R^2$ one can choose the particular unitary given by multiplication by the bounded function $U:=z/|z|$. The condition on the system that is needed is that $P$ commutes with $U$ up to a compact operator. The charge deficiency of a projection $P$ in the sense of  \cite{avronseilersimon} is then defined using $U$ as 
\[c(P):=\ind (PUP).\]

The viewpoint we will have in this paper is that the charge deficiency is a $K$-homology class. This viewpoint lies in line with the view on $D$-brane charges in string theory, see more in \cite{bmrs}, \cite{reisszabo}. In the case studied in \cite{avronseilersimon} the charge deficiency is realized as an odd $K$-homology class on the circle $\T$. The unitary $U$ define a representation of $C(\T)$ and using the fact that $P$ commutes with $U$ up to a compact operator we get a $K$-homology class. Let us denote this $K$-homology class by $[P]$ and by $u$ we will denote the generator of $C(\T)$. In this notation, the charge deficiency is given by $c(P)=[P]\circ [u]\in KK(\C,\C)\cong \Z$, the Kasparov product between $[P]\in K^1(C(\T))$ and $[u]\in K_1(C(\T))$. Thus the charge deficiency is the image of $[P]$ under the isomorphism 
\[K^1(C(\T))=KK_1(C(\T),\C)\cong Hom(K_1(C(\T)),K_0(\C))\cong \Z,\]
where the first isomorphism is the natural mapping coming from the Universal Coefficient Theorem for $KK$-theory and the second isomorphism comes from choosing $[u]$ as a generator for $K_1(C(\T))$. So a better picture is that the $K$-homology class $[P]\in K^1(C(\T))$ is the charge deficiency of $P$.

The system we will consider in this paper consists of a particle moving in $\R^{2n}$ under the influence of a constant magnetic field $B$ of full rank. If we choose a linear vector potential $A$ satisfying $\rd A=B$ the Hamiltonian of this system is given by 
\[H_A:=(-i \nabla-A)^2,\]
This \emph{Landau Hamiltonian} should be viewed as a densely defined operator in the Hilbert space $L^2(\R^{2n})$. Taking $\mathcal{D}(H_A)=C^\infty_c(\R^{2n})$, the operator $H_A$ becomes essentially self-adjoint, see more in \cite{melroz}. Due to the identification $\R^{2n}=\C^n$ we will use the complex structure and we will assume that $B=\frac{i}{2}\sum \rd z_j\wedge \rd \bar{z}_j$.

The Landau Hamiltonian has a discrete spectrum with eigenvalues $\Lambda_\ell=2\ell+n$ for $\ell\in \N$ and the eigenspaces $\ellL^\ell$ are infinite dimensional. Let 
\[P_\ell:L^2(\R^{2n})\to \ellL^\ell\] 
denote the orthogonal projection to the $\ell$:th eigenspace. Our point of view on the charge deficiencies for the Landau levels is that they are $K$-homology classes of the sphere $S^{2n-1}$. For a bounded continuous function $a:\R^{2n}\to M_N(\C)$ we define the continuous function $a_r\in C(S^{2n-1})$ as
\[a_r(v):=  a(rv).\]
We let $A_N$ be the subalgebra of $C_b(\R^{2n})\otimes M_N(\C)$ such that $a_r$ converges uniformly in $v$ to a continuous function $a_\partial$ on $S^{2n-1}$. The mapping $a\mapsto a_\partial$ defines a $*$-homomorphism $A_N\to C(S^{2n-1})\otimes M_N(\C)$. The projection $P_\ell$ commutes up to a compact operator with $a\in A_N$ (see below in Theorem \ref{toeplaqu}) and 
\[P_\ell a|_{\ellL^\ell\otimes \C^N}: \ellL^\ell\otimes \C^N\to \ellL^\ell\otimes \C^N\] 
is Fredholm if and only if $a_\partial$ is invertible (see Proposition \ref{corfred}). Now we may present the main theorem of this paper:

\begin{sats*}
If $a_\partial$ is smooth and invertible, the index of $P_\ell a|_{\ellL^\ell\otimes \C^N}$ can be expressed as 
\[\ind (P_\ell a|_{\ellL^\ell\otimes \C^N})=\frac{-(\ell+n-1)!}{\ell!(2n-1)!(2\pi i)^n}\int _{S^{2n-1}}\tra((a_\partial^{-1}\rd a_\partial)^{2n-1}).\]
The charge deficiency $[P_\ell]\in K^1(C(S^{2n-1}))$ may be expressed in terms of the Bergman projection $P_B$ on the unit ball in $\C^n$ as 
\[[P_\ell]=\frac{(\ell+n-1)!}{\ell !(n-1)!}[P_B].\]
\end{sats*}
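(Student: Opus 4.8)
The plan is to reduce the computation to an index theorem for Toeplitz-type operators associated to the Landau levels and then to transport everything to the Bergman projection on the unit ball, where the known index formula for the Bergman–Toeplitz operator applies.

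First I would set up the symbol calculus. By the cited results (Theorem~\ref{toeplaqu} and Proposition~\ref{corfred}), the operator $P_\ell a|_{\ellL^\ell\otimes\C^N}$ is Fredholm precisely when $a_\partial$ is invertible, and its index depends only on the class $[a_\partial]\in K_1(C(S^{2n-1})\otimes M_N(\C))\cong K_1(C(S^{2n-1}))$. Hence the index defines a homomorphism $K_1(C(S^{2n-1}))\to\Z$, i.e.\ a class $[P_\ell]\in K^1(C(S^{2n-1}))$, and it suffices to compute its pairing with the generator of $K_1(C(S^{2n-1}))$. For $n\geq 2$ this generator is represented by a single unitary (the Bott-type element coming from the top-dimensional generator of $K^1$), while for $n=1$ it is $z/|z|$; in all cases the right-hand side $\frac{-(\ell+n-1)!}{\ell!(2n-1)!(2\pi i)^n}\int_{S^{2n-1}}\tra((a_\partial^{-1}\rd a_\partial)^{2n-1})$ is, up to normalization, the odd Chern character paired against $[a_\partial]$, so the first formula is equivalent to the second together with the known normalization for $[P_B]$.

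Next I would identify $[P_0]$ with a multiple of $[P_B]$. The lowest Landau level $\ellL^0$ in $\R^{2n}$ is the Bargmann–Fock space of entire functions that are $L^2$ against the Gaussian weight $\e^{-|z|^2/2}$; this is well known. The key step is a radial-limit/localization argument: the $K$-homology class $[P_0]$ on $S^{2n-1}$ computed via the multiplication representation of $A_N$ and the projection $P_0$ on Fock space coincides with the Bergman $K$-homology class $[P_B]$ of the unit ball, because both are determined by the Szeg\H{o}/Bergman-type behaviour near the boundary sphere; concretely one builds a homotopy (or an explicit operator-homotopy of Fredholm modules) between the Fock-space Toeplitz operator with symbol $a$ and the Bergman-space Toeplitz operator with symbol $a_\partial$, using that both projections have the same principal symbol as pseudodifferential/Toeplitz operators on the Heisenberg sphere. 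The comparison constant between the two normalizations (Fock vs.\ Bergman volume) is where the factor $\frac{(n-1)!}{(n-1)!}=1$ appears for $\ell=0$, i.e.\ $[P_0]=[P_B]$.

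Then I would pass from $\ell=0$ to general $\ell$. Here I would use the structure of the higher Landau levels: $\ellL^\ell$ is obtained from $\ellL^0$ by applying creation operators, and as a module over the symbol algebra (up to compacts) $\ellL^\ell$ is equivalent to a direct sum of $\binom{\ell+n-1}{n-1}=\frac{(\ell+n-1)!}{\ell!(n-1)!}$ copies of $\ellL^0$. More precisely, the relevant intertwiners are the components of the $\ell$-th symmetric power of the creation operators, which are bounded modulo compacts and compatible with the boundary symbol map $a\mapsto a_\partial$; this gives an isomorphism of Fredholm modules up to the stated multiplicity, hence $[P_\ell]=\frac{(\ell+n-1)!}{\ell!(n-1)!}[P_B]$. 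Combining this with the Bergman index formula for $P_B a|_{\text{Bergman}\otimes\C^N}$ (the Atiyah–Singer/Boutet de Monvel computation for the ball, which produces exactly the constant $\frac{-(n-1)!}{(2n-1)!(2\pi i)^n}\int_{S^{2n-1}}\tra((a_\partial^{-1}\rd a_\partial)^{2n-1})$ for $[P_B]$) yields the first displayed formula.

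**Main obstacle.** The hard part is the second step: rigorously showing that the Fredholm module over $C(S^{2n-1})$ given by $(A_N\to\mathcal B(\ellL^0\otimes\C^N),\,P_0)$ is equivalent to the Bergman module $[P_B]$ on the unit ball, i.e.\ controlling the near-boundary asymptotics of the Fock projection well enough to build the operator homotopy. Once that identification is in place, the multiplicity count for higher $\ell$ and the final index formula are essentially bookkeeping plus the classical Bergman-Toeplitz index theorem.
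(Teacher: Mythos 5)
Your outline reproduces the skeleton of the paper's argument: the index depends only on $[a_\partial]$ (Theorem~\ref{toeplaqu}, Proposition~\ref{corfred}); the Landau level decomposes as $\ellL^\ell=\bigoplus_{|\mathbbm{k}|=\ell}\ellL_\mathbbm{k}$, contributing the multiplicity $\frac{(\ell+n-1)!}{\ell!(n-1)!}$ (Lemma~\ref{sumlem}); each particular level gives the same extension class (Theorem~\ref{genlam}, Corollary~\ref{indepland}); and everything is pinned to the Bergman index formula~\eqref{indform}. But the step you yourself flag as the main obstacle --- proving $[P_0]=[P_B]$ --- is left unproved, and the route you sketch (an operator homotopy premised on the Fock and Bergman projections having ``the same principal symbol as Toeplitz operators on the Heisenberg sphere'') is not substantiated and is not the paper's route. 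The Fock space sits inside $L^2(\C^n)$ and the Bergman space inside $L^2(B_n)$; the kernels have very different boundary behaviour, and you do not specify the ambient algebra or symbol calculus in which such a homotopy would even be formulated, let alone constructed.

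The paper resolves this step in Section~6 by an explicit spectral comparison rather than a homotopy. The operators $P_0\lambda(z_i)P_0$ on $\ellL_0$ and $P_B\tilde{Z}_iP_B$ on $A^2(B_n)$ are both weighted unilateral shifts in the natural monomial orthonormal bases $\eta_\mathbbm{m}$ and $\mu_\mathbbm{m}$, with explicitly computed weight sequences (Propositions~\ref{firstisometry} and~\ref{secondisometry}). Stirling's formula (Lemma~\ref{gammaestimate}) shows the two sequences differ by $O(|\mathbbm{m}|^{-1})$, so the basis-identifying unitary $U\colon\mu_\mathbbm{m}\mapsto\eta_\mathbbm{m}$ intertwines the generators modulo compacts (Lemma~\ref{unitcomp}) and hence implements an isomorphism $\mathcal{T}_0\cong\mathcal{T}^n$ commuting with the symbol maps (Theorem~\ref{isot}), giving $[\mathcal{T}_0]=[\mathcal{T}^n]$ in $Ext(C(S^{2n-1}))$. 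A smaller point: the intertwiners used to compare the particular Landau levels are the coisometries $Q_j$ from the polar decompositions $q_j^*=E_jQ_j$, not the creation operators themselves, which are unbounded. So your plan is a correct roadmap, but the crucial identification $[P_0]=[P_B]$ needs the weighted-shift comparison (or an equally concrete substitute); without it the proposal has a genuine gap.
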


\section{The particular Landau levels}

The spectral theory of the Landau Hamiltonian is well known and we will review it briefly. See more in \cite{rozta}. We will let $\phi:=\frac{|z|^2}{4}$ and assume that the magnetic field $B$ is of the form $B=i\partial \bar{\partial} \phi$. Here $\partial$ is the complex linear part of the exterior differential $\rd$. Define the annihilation operators as
\[q_j:=2\frac{\partial}{\partial \bar{z}_j}+z_j \quad \mbox{for}\quad j=1,\ldots,n.\]
The adjoints are given by the creation operators $q_j^*:=-2\frac{\partial} {\partial z_j}+\bar{z}_j$. The annihilation and creation operators satisfy the following formulas:
\begin{align*}
[q_j,q_i]=[q_j^*,q_i^*]=0,& \quad [q_i,q^*_j]=2\delta_{ij}\quad \mbox{and} \quad
H_A=\sum_{j=1}^n q^*_jq_j+&n=\sum_{j=1}^nq_jq^*_j-n.
\end{align*}
Here we view $H_A$ as a densely defined operator in $L^2(\C^{n})$. Thus the lowest eigenvalue is $n$ with corresponding eigenspace $\ellL_0=\e^{-\phi}\Fg(\C^n)$ where $\Fg(\C^n):=L^2(\C^n,\e^{-2\phi})\cap \mathcal{O}(\C^n)$ denotes the Fock space. Here $\mathcal{O}(\C^n)$ denotes the space of holomorphic functions in $\C^n$. In one complex dimension there is only one creation operator $q^*$ and the eigenspaces are given by $\ellL_k=(q^*)^k\ellL_0$. Using multi-index notation, for $\mathbbm{k}=(k_1,\ldots,k_n)\in \N^n$ we define $q_{\mathbbm{k}}:=q_1^{k_1}\cdots q_n^{k_n}$ and 
\[\ellL_{\mathbbm{k}}:=q_{\mathbbm{k}}^*\ellL_0=\ellL_{k_1}\otimes \ellL_{k_2}\otimes \cdots \otimes \ellL_{k_n}.\] 
We will call this space for the \emph{particular} Landau level of height $\mathbbm{k}$. Using that $q_j$ and $q_j^*$ define a representation of the Heisenberg algebra in $n$ dimension we obtain the eigenvalues of $H_A$ as $\Lambda_{\ell}=2\ell+n$ with the corresponding eigenspaces 
\[\ellL^\ell:=\bigoplus_{|\mathbbm{k}|=\ell}\ellL_{\mathbbm{k}}=\bigoplus_{|\mathbbm{k}|=\ell}\ellL_{k_1}\otimes \ellL_{k_2}\otimes \cdots \otimes \ellL_{k_n}.\]
The $\ell$:th eigenspace $\ellL^\ell$ is called the Landau level of height $\ell$. Since the Hamiltonian commutes with the representation of $SU(n)$ on $\C^n$, its eigenspaces are $SU(n)$-invariant. Also the orthogonal projections $P_\ell:L^2(\C^n)\to \ellL^\ell$ are invariant under the $SU(n)$-action.

Recall that the vacuum subspace $\ellL_0\subseteq L^2(\C^n)$ has a reproducing kernel induced by the reproducing kernel on the Fock space. The reproducing kernel of $\Fg(\C^n)$ is given by $K(z,w)=\e^{\frac{w\cdot \bar{z}}{4}}$. So the reproducing kernel of $\ellL_0$ is given by 
\[K_0(z,w):=\e^{\frac{1}{4}(w\cdot \bar{z}-|z|^2-|w|^2)}.\]
This expression for the reproducing kernel implies that the orthogonal projection $P_0:L^2(\C^n)\to \ellL_0$ is given by 
\[P_0f(z)=\int_{\C^n} f(w)\overline{K_0(z,w)}\rd V.\]
By \cite{grigsobolev} the orthogonal projection $P_{\mathbbm{k}}:L^2(\C^n)\to \ellL_{\mathbbm{k}}$ onto the particular Landau level of height $\mathbbm{k}$ is also an integral operator with kernel 
\begin{equation}
\label{projhigher}
K_{\mathbbm{k}}(z,w)=\e^{\frac{1}{4}(w\cdot \bar{z}-|z|^2-|w|^2)}\prod_{j=1}^n L_{k_j}\left(\frac{1}{2}|z_j-w_j|^2\right).
\end{equation}
Here $L_k$ is the Laguerre polynomial of order $k$. Notice that the projections $P_\mathbbm{k}$ are not $SU(n)$-invariant in general.

\section{Toeplitz operators on the Landau levels} 

We want to study topological properties of the particular Landau levels using Toeplitz operators. The symbols will be taken from a suitable subalgebra of $C_b(\C^{n})$, the bounded functions on $\C^n$. The standard notation $\Bo(\He)$ will be used for the bounded operators on a separable Hilbert space $\He$ and the compact operators will be denoted by $\Ko(\He)$. We will let $\pi:C_b(\C^{n})\to \Bo(L^2(\C^n))$ denote the representation given by pointwise multiplication. This is clearly an $SU(n)$-equivariant mapping. Define the linear map $T_\mathbbm{k}:C_b(\C^{n})\to \Bo(\ellL_\mathbbm{k})$ by $T_\mathbbm{k}(a):=P_{\mathbbm{k}}\pi(a)|_{\ellL_\mathbbm{k}}$.

\begin{lem}
\label{complem}
If $a\in C_0(\C^{n})$ then $T_\mathbbm{k}(a)\in \Ko(\ellL_\mathbbm{k})$ for all $\mathbbm{k}\in \N^n$.
\end{lem}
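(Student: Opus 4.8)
The statement is that $T_{\mathbbm{k}}(a)$ is compact whenever $a \in C_0(\C^n)$. The natural strategy is a density/approximation argument: the map $a \mapsto T_{\mathbbm{k}}(a)$ is linear and norm-continuous (indeed $\|T_{\mathbbm{k}}(a)\| \le \|a\|_\infty$ since $P_{\mathbbm{k}}$ and $\pi(a)$ are contractions), so it suffices to prove compactness for $a$ ranging over a dense subset of $C_0(\C^n)$, together with the fact that $\Ko(\ellL_{\mathbbm{k}})$ is closed in $\Bo(\ellL_{\mathbbm{k}})$. A convenient dense subset is $C_c^\infty(\C^n)$, or even just smooth compactly supported functions; one could also reduce to functions of the form $a(z) = \e^{-|z|^2/4} p(z,\bar z)$ or to tensor products, but compactly supported symbols are the cleanest.

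**Key steps.** First I would record the contractivity estimate and the fact that the compacts form a closed two-sided ideal, reducing the problem to: show $T_{\mathbbm{k}}(a) \in \Ko(\ellL_{\mathbbm{k}})$ for $a \in C_c(\C^n)$. Second, I would use the explicit integral kernel. We have $T_{\mathbbm{k}}(a) = P_{\mathbbm{k}}\pi(a)|_{\ellL_{\mathbbm{k}}}$, and since $P_{\mathbbm{k}}$ is the integral operator with kernel $K_{\mathbbm{k}}(z,w)$ from \eqref{projhigher} and $\ellL_{\mathbbm{k}} = P_{\mathbbm{k}} L^2(\C^n)$, the operator $T_{\mathbbm{k}}(a)$ acting on $\ellL_{\mathbbm{k}}$ is represented (via the embedding $\ellL_{\mathbbm{k}} \hookrightarrow L^2(\C^n)$ and the projection back) by the integral kernel
\[
G_a(z,w) = \int_{\C^n} K_{\mathbbm{k}}(z,u)\, a(u)\, K_{\mathbbm{k}}(u,w)\, \rd V(u),
\]
or more simply one can look at $P_{\mathbbm{k}}\pi(a)P_{\mathbbm{k}}$ on all of $L^2$, whose kernel is $\int K_{\mathbbm{k}}(z,u)a(u)K_{\mathbbm{k}}(u,w)\,\rd V(u)$. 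The third step is to show this kernel is Hilbert--Schmidt, i.e. square-integrable in $(z,w) \in \C^n \times \C^n$; then $T_{\mathbbm{k}}(a)$ is a Hilbert--Schmidt operator, hence compact. The Gaussian factor $\e^{\frac14(w\cdot\bar z - |z|^2 - |w|^2)}$ in $K_{\mathbbm{k}}$ has modulus $\e^{-\frac14|z-w|^2}$ roughly (after absorbing the Laguerre polynomial growth, which is polynomial), and the compact support of $a$ confines $u$ to a bounded region, so the $u$-integral produces a kernel decaying like a Gaussian in $|z|$ and $|w|$ times a polynomial — manifestly in $L^2(\C^n \times \C^n)$.

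**Main obstacle.** The only real work is the Hilbert--Schmidt estimate: one must control the Laguerre polynomial factors $\prod_j L_{k_j}(\tfrac12|z_j - u_j|^2)$ and $\prod_j L_{k_j}(\tfrac12|u_j - w_j|^2)$, which grow polynomially in $|z-u|$ and $|u-w|$, against the Gaussian decay coming from the exponential factors. Since $u$ is confined to a compact set $\supp a$, we get $|z - u| \le |z| + C$ and likewise for $w$, so each Laguerre factor is bounded by a polynomial in $|z|$ (resp. $|w|$) uniformly in $u$; multiplying by $\e^{-\frac14 c(|z|^2 + |w|^2)}$ for a suitable $c>0$ still leaves something integrable. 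Care is needed because the ``cross term'' $w \cdot \bar z$ appears: after the $u$-integration one is left with kernel entries whose modulus is bounded by (polynomial in $|z|,|w|$) $\times\, \e^{-\frac14(|z|^2 + |w|^2)} \times (\mathrm{Vol}\,\supp a \cdot \sup|a|)$, using that $|K_{\mathbbm{k}}(z,u)| \le C_{\mathbbm{k},\supp a}(1+|z|)^{2|\mathbbm{k}|}\e^{-\frac14|z|^2 + C'}$ for $u \in \supp a$. Squaring and integrating in $(z,w)$ converges. An alternative that sidesteps kernel estimates entirely: approximate $a$ uniformly by finite linear combinations of products $\prod_j b_j(z_j)$ with $b_j \in C_c(\C)$, reduce via the tensor decomposition $\ellL_{\mathbbm{k}} = \ellL_{k_1}\otimes\cdots\otimes\ellL_{k_n}$ to the case $n=1$, and in one variable invoke the known compactness of Toeplitz operators with $C_0$ symbols on a single Landau level (a classical fact, provable by the same Hilbert--Schmidt computation with a single Laguerre polynomial). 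I would present the direct kernel argument as the main line and mention the tensor reduction as a remark.
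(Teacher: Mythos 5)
Your argument is correct, but it is not the route the paper takes. You both begin with the same reduction (by contractivity of $a \mapsto T_{\mathbbm{k}}(a)$ and closedness of the compacts, it suffices to treat $a \in C_c(\C^n)$), but from there the strategies diverge. The paper is \emph{soft}: it writes $T_{\mathbbm{k}}(a) = P_{\mathbbm{k}}\pi(a) R$, where $R: \ellL_{\mathbbm{k}} \to L^2(\C^n)$ is multiplication by the characteristic function of $\supp a$, and then argues that $R$ is compact from Cauchy-type estimates -- since elements of $\ellL_{\mathbbm{k}}$ are obtained by applying $q_{\mathbbm{k}}^*$ to Gaussian-weighted holomorphic functions, a bounded subset of $\ellL_{\mathbbm{k}}$ is a normal family on any compact set, so $R$ maps the unit ball into a precompact subset of $L^2(\supp a)$. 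This Montel/Arzel\`a--Ascoli argument never touches the explicit kernel. Your argument is \emph{hard}: you invoke the kernel formula \eqref{projhigher}, bound $|K_{\mathbbm{k}}(z,u)|$ for $u$ confined to $\supp a$ by a polynomial times a Gaussian in $|z|$, and conclude that $P_{\mathbbm{k}}\pi(a)P_{\mathbbm{k}}$ is Hilbert--Schmidt, hence compact, as is its compression to $\ellL_{\mathbbm{k}}$. Both are valid; yours buys more (membership in the Hilbert--Schmidt class, not merely the compacts, and self-containedness once one accepts \eqref{projhigher}), while the paper's is shorter and independent of the precise form of the kernel, relying only on the analytic structure of $\ellL_{\mathbbm{k}}$. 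One small caution about your sketch: the estimate $|\e^{\frac14(w\cdot\bar z - |z|^2 - |w|^2)}| \approx \e^{-c|z-w|^2}$ that you quote as ``roughly'' true holds exactly for the correctly normalized Bergman-type kernel; you should carry out the modulus computation rather than hand-wave it, since the paper's displayed constant ($\tfrac14$ on all three terms) would in fact give extra decay and make $K_{\mathbbm{k}}$ itself Hilbert--Schmidt, which cannot be right for an infinite-rank projection -- a signal that the cross term carries a $\tfrac12$. Getting the exponent right is exactly where your Hilbert--Schmidt bound would otherwise go astray.
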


The proof of this lemma is analogous to the proof for the same statement for Toeplitz operators on a pseudoconvex domain from \cite{venugopalkrishna}. 

\begin{proof}
It is sufficient to prove the claim for $a\in C_c(\C^{n})$, since $T_{\mathbbm{k}}$ is continuous and $C_c(\C^{n})\subseteq C_0(\C^{n})$ is dense. Define the compact set $K:=\supp(a)$. Let $R:\ellL_{\mathbbm{k}}\to L^2(\C^n)$ denote the operator given by multiplication by $\chi_{K}$, the characteristic function of $K$. We have $T_{\mathbbm{k}}(a)=P_{\mathbbm{k}}\pi(a)R$ so the Lemma holds if $R$ is compact. That $R$ is compact follows from Cauchy estimates of holomorphic functions on a compact set.
\end{proof}

Define the $SU(n)$-invariant $C^*$-subalgebra $A\subseteq C_b(\C^{n})$ as consisting of functions $a$ such that $a(rv)$ converges uniformly in $v$ as $r\to \infty$ to a continuous function $a_\partial:S^{2n-1}\to \C$ when $r\to \infty$. Thus we obtain a surjective $SU(n)$-equivariant $*$-homomorphism $\pi_\partial:A\to C(S^{2n-1})$ given by 
\[\pi_\partial(a)(v):=  \lim_{r\to \infty}a(rv).\] 
The mapping $\pi_\partial$ satisfies $\ker \pi_\partial=C_0(\C^{n})$. We will henceforth consider $T_{\mathbbm{k}}$ as a mapping from $A$ to $\Bo(\ellL_\mathbbm{k})$.

If we let $B_n$ denote the open unit ball in $\C^n$, another view on $A$ is as the image of the $SU(n)$-equivariant $*$-monomorphism $C(\overline{B_n})\to C_b(B_n)\cong C_b(\C^n)$ where the last isomorphism comes from an $SU(n)$-equivariant homeomorphism $B_n\cong \C^n$.

\begin{sats}
\label{toeplaqu}
The projection $P_{\mathbbm{k}}$ satisfies $[P_{\mathbbm{k}},\pi(a)]\in \Ko(L^2(\C^n))$ for all $a\in A$. Therefore the $*$-linear mapping $T_{\mathbbm{k}}:A\to \Bo(\ellL_\mathbbm{k})$ satisfies 
\[T_{\mathbbm{k}}(ab)- T_{\mathbbm{k}}(a)T_{\mathbbm{k}}(b)\in \Ko(\ellL_\mathbbm{k}).\]
\end{sats}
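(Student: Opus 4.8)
The plan is to reduce the multiplicativity statement to the commutator statement $[P_{\mathbbm{k}},\pi(a)]\in\Ko(L^2(\C^n))$, and then to prove that by splitting $A$ along the extension $0\to C_0(\C^n)\to A\xrightarrow{\pi_\partial}C(S^{2n-1})\to0$. For the reduction I would restrict everything to $\ellL_{\mathbbm{k}}$, where $P_{\mathbbm{k}}$ acts as the identity, and note that
\[T_{\mathbbm{k}}(ab)-T_{\mathbbm{k}}(a)T_{\mathbbm{k}}(b)=P_{\mathbbm{k}}\pi(a)(\id-P_{\mathbbm{k}})\pi(b)\big|_{\ellL_{\mathbbm{k}}}=-P_{\mathbbm{k}}\pi(a)(\id-P_{\mathbbm{k}})[P_{\mathbbm{k}},\pi(b)]\big|_{\ellL_{\mathbbm{k}}},\]
a product of bounded operators with the compact operator $[P_{\mathbbm{k}},\pi(b)]$, hence compact. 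So everything hinges on the commutator statement.

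To prove it, put $\Co:=\{a\in A:[P_{\mathbbm{k}},\pi(a)]\in\Ko(L^2(\C^n))\}$. By the Leibniz identity $[P_{\mathbbm{k}},\pi(ab)]=\pi(a)[P_{\mathbbm{k}},\pi(b)]+[P_{\mathbbm{k}},\pi(a)]\pi(b)$, the relation $[P_{\mathbbm{k}},\pi(a^*)]=-[P_{\mathbbm{k}},\pi(a)]^*$, and norm-continuity of $a\mapsto[P_{\mathbbm{k}},\pi(a)]$, the set $\Co$ is a norm-closed $*$-subalgebra of $A$, so it suffices to check it on a generating set. First I would verify $C_0(\C^n)\subseteq\Co$: for $a\in C_c(\C^n)$ the Cauchy-estimate argument from the proof of Lemma \ref{complem} shows that $\pi(a)|_{\ellL_{\mathbbm{k}}}$, hence $\pi(a)P_{\mathbbm{k}}$, is compact; taking adjoints (with $\bar a\in C_c(\C^n)$) $P_{\mathbbm{k}}\pi(a)$ is compact too, so $[P_{\mathbbm{k}},\pi(a)]$ is compact, and density of $C_c(\C^n)$ in $C_0(\C^n)$ plus closedness of $\Co$ finishes this. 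Since $\ker\pi_\partial=C_0(\C^n)\subseteq\Co$, we get $\Co=\pi_\partial^{-1}(\pi_\partial(\Co))$ with $\pi_\partial(\Co)$ a unital $*$-subalgebra of $C(S^{2n-1})$; by Stone--Weierstrass $\pi_\partial(\Co)=C(S^{2n-1})$, i.e. $\Co=A$, once $\pi_\partial(\Co)$ contains the coordinate functions $v\mapsto v_j$, $j=1,\dots,n$. Hence it is enough to produce, for each $j$, an element $b_j\in A$ with $\pi_\partial(b_j)(v)=v_j$ and $[P_{\mathbbm{k}},\pi(b_j)]\in\Ko(L^2(\C^n))$; I would take $b_j(z):=\eta(|z|)\,z_j/|z|$ with $\eta$ a smooth function on $[0,\infty)$ that vanishes near $0$ and equals $1$ on $[1,\infty)$, so that $|b_j|\le1$ and $b_j$ coincides with $z\mapsto z_j/|z|$ outside the unit ball.

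The analytic heart is then the compactness of $[P_{\mathbbm{k}},\pi(b_j)]$. Its integral kernel is $\big(b_j(w)-b_j(z)\big)K_{\mathbbm{k}}(z,w)$ with $K_{\mathbbm{k}}$ as in \eqref{projhigher}, and since each $L_{k_j}$ is a polynomial of degree $k_j$ and $\mathrm{Re}\,\tfrac14\big(w\cdot\bar z-|z|^2-|w|^2\big)\le-\tfrac18|z-w|^2$, one obtains $|K_{\mathbbm{k}}(z,w)|\le g(z-w)$ with $g(u):=C_{\mathbbm{k}}(1+|u|^2)^{|\mathbbm{k}|}\e^{-|u|^2/8}\in L^1(\C^n)\cap L^2(\C^n)$. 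Now fix $R>0$, let $E_R$ be multiplication by the indicator of $\{|z|\le R\}$, and split $[P_{\mathbbm{k}},\pi(b_j)]=E_R[P_{\mathbbm{k}},\pi(b_j)]+(\id-E_R)[P_{\mathbbm{k}},\pi(b_j)]$. The first term has kernel supported in $\{|z|\le R\}$ and, using $|b_j(z)-b_j(w)|\le2$ and $g\in L^2(\C^n)$, is Hilbert--Schmidt, hence compact. For the second term I would apply Schur's test, splitting the $w$-integral of $|b_j(w)-b_j(z)|\,g(z-w)$ over $\{|z|>R\}$ at $|z-w|=\sqrt{R}$: on $\{|z-w|\le\sqrt{R}\}$ one has $|z|,|w|\ge1$ (for $R$ large), so $b_j$ agrees there with $z\mapsto z_j/|z|$ and the elementary estimate $|z_j/|z|-w_j/|w||\le2|z-w|/\min(|z|,|w|)$ gives $|b_j(w)-b_j(z)|=O(1/\sqrt{R})$, contributing $O(\|g\|_{L^1}/\sqrt{R})$; on $\{|z-w|>\sqrt{R}\}$ one uses $|b_j(w)-b_j(z)|\le2$ together with $\int_{|u|>\sqrt{R}}g(u)\,\rd V(u)\to0$. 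The symmetric estimate with $z$ and $w$ exchanged controls the other Schur integral, so $\|(\id-E_R)[P_{\mathbbm{k}},\pi(b_j)]\|\to0$ as $R\to\infty$, and therefore $[P_{\mathbbm{k}},\pi(b_j)]$ is a norm-limit of compact operators, hence compact.

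I expect the algebraic identity, the $C^*$-reduction via Stone--Weierstrass, and the re-use of Lemma \ref{complem} to be routine. The main obstacle — and essentially the only real computation — is the last paragraph: reading the Gaussian-times-polynomial majorant $g$ off \eqref{projhigher} and then carrying out the two-regime Schur estimate, which works precisely because the rapid off-diagonal decay of $K_{\mathbbm{k}}$ is set against the vanishing oscillation at spatial infinity of the boundary coordinate $b_j$. A more structural but less self-contained alternative would be to transfer the problem to the unit ball through the isomorphism $A\cong C(\overline{B_n})$ recorded before the theorem and quote the classical fact that Toeplitz operators with continuous symbols on a Bergman-type space over $\overline{B_n}$ form an algebra modulo compacts; but identifying $(\ellL_{\mathbbm{k}},P_{\mathbbm{k}})$ with such a space modulo compacts appears about equally hard.
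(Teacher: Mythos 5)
Your proof is correct, and it takes a genuinely different route from the paper's. The paper approximates a general $a\in A$ by Lipschitz elements and then uses a quantitative decomposition (Lemma~\ref{divlip}): one writes $a=g_\epsilon+h_\epsilon$ with $h_\epsilon\in C_0(\C^n)$ compactly handled by Lemma~\ref{complem}, and $g_\epsilon$ globally Lipschitz with constant $\epsilon$; the commutator $[P_\mathbbm{k},\pi(g_\epsilon)]$ is then dominated pointwise by $\epsilon$ times a fixed convolution-type operator $B$ built from the Gaussian kernel bound, so $\|\mathfrak{q}([P_\mathbbm{k},\pi(a)])\|\leq\epsilon\|B\|$ for all $\epsilon>0$, forcing compactness. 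You instead observe that the set $\Co$ of symbols with compact commutator is a closed $*$-subalgebra of $A$ containing $C_0(\C^n)$, so by Stone--Weierstrass it suffices to check the angular generators $b_j(z)=\eta(|z|)z_j/|z|$; for these you run a two-regime Schur test — Hilbert--Schmidt near the origin, and far out a split at $|z-w|=\sqrt R$ balancing the $O(|z-w|/\min(|z|,|w|))$ oscillation of $b_j$ against the Gaussian tail of the kernel. Both arguments rest on the same two analytic facts (the off-diagonal Gaussian decay of $K_\mathbbm{k}$ from equation~\eqref{projhigher} and the decay of the angular oscillation at infinity), but the paper's Lipschitz decomposition gives a uniform, quantitative Calkin-norm estimate valid for the whole dense subalgebra $A_1$ in one stroke, whereas your reduction-to-generators is more elementary in spirit and does not need the uniform estimate — at the cost of a slightly more hands-on Schur computation. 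Both are complete proofs; your reduction of the multiplicativity statement to the commutator statement is the same as what the paper implicitly uses.
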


The proof is based on a similar result from \cite{coburnberger} where the Fock space was used to define a Toeplitz quantization of a certain subalgebra of $L^\infty(\C^n)$. The case of the Fock space is more or less the same as the case $\mathbbm{k}=0$ for Landau quantization. To prove the Theorem we need a lemma similar to part $(iv)$ of Theorem $5$ of \cite{coburnberger}. Using the isomorphism $A\cong C(\overline{B_n})$ we define the   dense subalgebra $A_1\subseteq A$ as the inverse image of the Lipschitz continuous functions in $C(\overline{B_n})$.

\begin{lem}
\label{divlip}
For $a\in A_1$ then for any $\epsilon>0$ we may write $a=g_\epsilon+h_\epsilon$  where $h_\epsilon\in C_0(\C^n)$ and $g_\epsilon\in A$ satisfies
\begin{equation}
\label{felip}
|g_\epsilon(z)-g_\epsilon(w)|\leq \epsilon |z-w| \quad \forall z,w\in \C^n.
\end{equation}
\end{lem}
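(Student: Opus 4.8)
The plan is to transport $a$ to the ball, use Lipschitz continuity there, and peel off a compactly supported remainder. Fix once and for all the $SU(n)$-equivariant homeomorphism $\C^n\cong B_n$ to be $z\mapsto z/(1+|z|)$ (any radial diffeomorphism with this behaviour near the sphere works equally well), so that $a\in A$ corresponds to $\tilde a\in C(\overline{B_n})$ via $a(z)=\tilde a\!\left(\tfrac{z}{1+|z|}\right)$, with $\pi_\partial(a)=\tilde a|_{S^{2n-1}}$, and $a\in A_1$ precisely when $\tilde a$ is Lipschitz, say with constant $L$.

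First I would record the elementary estimate
\[\left|\frac{z}{1+|z|}-\frac{w}{1+|w|}\right|\le\left(\frac{1}{1+|z|}+\frac{1}{1+|w|}\right)|z-w|\qquad(z,w\in\C^n),\]
which follows by writing the left-hand side as $\frac{(z-w)+(z|w|-w|z|)}{(1+|z|)(1+|w|)}$ and using $|\,z|w|-w|z|\,|\le(|z|+|w|)\,|z-w|$. Composing with $\tilde a$ gives
\[|a(z)-a(w)|\le \frac{2L}{R}\,|z-w|\qquad\text{whenever }|z|,|w|\ge R.\]
Thus the restriction of $a$ to the closed set $\{|z|\ge R\}$ is Lipschitz with constant tending to $0$ as $R\to\infty$; and, crucially, this bound is with respect to the ambient Euclidean distance, so no convexity or geodesic issue arises in what follows.

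Given $\epsilon>0$, choose $R$ so large that $\tfrac{2\sqrt2\,L}{R}\le\epsilon$. Let $g_\epsilon\colon\C^n\to\C$ be a Lipschitz extension of $a|_{\{|z|\ge R\}}$ obtained by applying the McShane--Whitney extension theorem to the real and imaginary parts; it can be taken with Lipschitz constant $\le\sqrt2\cdot\tfrac{2L}{R}\le\epsilon$, which is exactly \eqref{felip}. Since $g_\epsilon$ agrees with the bounded function $a$ on $\{|z|\ge R\}$ and is continuous on the compact ball $\{|z|\le R\}$, it is bounded; and since $g_\epsilon(rv)=a(rv)$ for all $r\ge R$, it lies in $A$ with $\pi_\partial(g_\epsilon)=a_\partial$. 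Finally $h_\epsilon:=a-g_\epsilon$ is continuous and vanishes on $\{|z|\ge R\}$, hence $h_\epsilon\in C_c(\C^n)\subseteq C_0(\C^n)$, giving the desired decomposition $a=g_\epsilon+h_\epsilon$.

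The only step carrying genuine content is the first estimate: it is precisely here that the geometry of the radial homeomorphism $\C^n\cong B_n$ is used to turn Lipschitz continuity of $\tilde a$ near the boundary sphere into a vanishingly small Lipschitz constant for $a$ near infinity. Everything after that — the McShane--Whitney extension and the verifications that $g_\epsilon\in A$ is bounded and that $h_\epsilon\in C_0(\C^n)$ — is routine.
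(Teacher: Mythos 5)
Your proof is correct, and it follows a genuinely different route from the paper's. The paper takes $g_\epsilon(z):=\chi_\epsilon(z)\,a_\partial(z/|z|)$, where $\chi_\epsilon$ is a Lipschitz radial cutoff vanishing near the origin; the Lipschitz bound then comes from the estimate $\bigl|\tfrac{z}{|z|}-\tfrac{w}{|w|}\bigr|\le 2\,\tfrac{|z-w|}{|w|}$ combined with the small Lipschitz constant $1/R$ of the cutoff, and $h_\epsilon=a-g_\epsilon\in C_0(\C^n)$ holds because $a(z)-a_\partial(z/|z|)\to 0$ at infinity by the definition of $A$. You instead observe that $a$ itself, restricted to $\{|z|\ge R\}$, already has Lipschitz constant $O(1/R)$ (via the analogous estimate for the map $z\mapsto z/(1+|z|)$ into the ball), and then invoke McShane--Whitney to extend that restriction to a globally Lipschitz $g_\epsilon$ with the same small constant. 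Your approach buys a stronger conclusion for free, namely $h_\epsilon\in C_c(\C^n)$, and sidesteps the Leibniz-rule bookkeeping for the product $\chi_\epsilon\cdot a_\partial(z/|z|)$ (in the paper one must really control both the $2C/R$ term from $a_\partial$ and a $\|a_\partial\|_\infty/R$ term from the cutoff, a point glossed over by the condition ``$R>2C/\epsilon$''). The paper's construction, on the other hand, uses only that the boundary value $a_\partial$ is Lipschitz on $S^{2n-1}$, a slightly weaker hypothesis than requiring the transported function $\tilde a$ to be Lipschitz on all of $\overline{B_n}$, though under the paper's definition of $A_1$ both are available. The $\sqrt2$ loss you incur by extending real and imaginary parts separately is harmless; it can be avoided by citing Kirszbraun's theorem for $\C$-valued functions, but McShane--Whitney keeps the argument elementary.
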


\begin{proof}
Let $C$ denote the Lipschitz constant of $\pi_\partial(a)$. Take an $\epsilon>0$ and let $\chi_\epsilon$ be a Lipshitz continuous $SU(n)$-invariant cutoff such that $\chi_\epsilon(z)=0$ for $|z|\leq R$ and $\chi_\epsilon(z)=0$ for $|z|\geq 2R$ where $R=R(\epsilon,C)$ is to be defined later. To shorten notation, define $a_\partial:= \pi_\partial(a)$. Let 
\[g_\epsilon(z):=\chi_\epsilon(z)\cdot a_\partial \left(z/|z|\right)\]
and $h_\epsilon:=a-g_\epsilon$. Clearly $h_\epsilon\in C_0(\C^n)$ and $g_\epsilon\in A$ so what remains to be proven is that $R$ can be chosen in such a way that $g_\epsilon$ satisfies equation \eqref{felip}. 

We have elementary estimates 
\[\left| \frac{z}{|z|}-\frac{w}{|w|}\right|\leq \frac{|z-w|}{|z|}+\left| \frac{w}{|z|}-\frac{w}{|w|}\right|\leq 2\frac{|z-w|}{|w|}.\]
Thus for $z,w\neq 0$ the function $a_\partial$ satisfies 
\[\left| a_\partial\left(\frac{z}{|z|}\right)-a_\partial\left(\frac{w}{|w|}\right)\right| \leq \frac{2C}{|w|}|z-w|.\]
The function $\chi_\epsilon$ has Lipschitz coefficient $1/R$ so if we take $R>2C/\epsilon$ then $g_\epsilon$ satisfies equation \eqref{felip}.
\end{proof}

Let $\mathcal{C}(L^2(\C^n)):=\Bo(L^2(C^n))/\Ko(L^2(\C^n))$ denote the Calkin algebra and $\mathfrak{q}$ the quotient mapping.

\begin{proof}[Proof of Theorem \ref{toeplaqu}]
Since Lipschitz continuous functions are dense in $A$ we may assume that $a\in A_1$, so by Lemma \ref{divlip} we can for any $\epsilon>0$ write $a=g_\epsilon+h_\epsilon$. In this case we have for $f\in L^2(\C^n)$
\[[P_\mathbbm{k},\pi(g_\epsilon)]f(z)=\int (g_\epsilon(z)-g_\epsilon(w))K_\mathbbm{k}(z,w)f(w)\rd w.\]
Define the operator 
\[Bf(z):=\int |z-w|K_\mathbbm{k}(z,w)f(w)\rd w.\]
By equation \eqref{projhigher} we have that for some $C$ the integral kernel of $B$ is bounded by 
\[|z-w||K_\mathbbm{k}(z,w)|\leq C|z-w|^{|\mathbbm{k}|+1}\e^{-\frac{1}{8}|z-w|^2}.\] 
Therefore the kernel of $B$ is dominated by the kernel of a bounded convolution operator and $\|B\|<\infty$. The estimate \eqref{felip} for $g_\epsilon$ implies that 
\[\|[P_\mathbbm{k},\pi(g_\epsilon)]\|\leq \epsilon \|B\|.\]
Using that $[P_\mathbbm{k},\pi(g_\epsilon)]=[P_\mathbbm{k},\pi(a)]$ modulo compact operators, by Lemma \ref{complem}, we have the inequality
\[\|\mathfrak{q}([P_\mathbbm{k},a])\|_{\mathcal{C}(L^2(\C^n))}\leq \epsilon \|B\| \quad \forall \epsilon >0.\]
Therefore $\mathfrak{q}([P_\mathbbm{k},a])=0$ and $[P_\mathbbm{k},a]$ is compact.
\end{proof}

Theorem \ref{toeplaqu} implies that the mapping $\tilde{\beta}_\mathbbm{k}:=\mathfrak{q}\circ T_\mathbbm{k}:A\to \mathcal{C}(\ellL_\mathbbm{k})$ is a well defined $*$-homomorphism. Define the $C^*$-algebra 
\[\tilde{\mathcal{T}}_{\mathbbm{k}}:=\{a\oplus x\in A\oplus \Bo(\ellL_\mathbbm{k}):\; \tilde{\beta}_\mathbbm{k}(a)=\mathfrak{q}(x)\}.\]
This $C^*$-algebra contains $\Ko$ as an ideal via the embedding $k\mapsto 0\oplus k$ and we obtain a short exact sequence
\begin{equation}
\label{extdisc}
0\to \Ko\to \tilde{\mathcal{T}}_{\mathbbm{k}}\to A\to 0.
\end{equation}

\begin{lem}
\label{sumlem}
Let $(\mathbbm{k}_p)_{p=1}^N\subseteq \N^n$ be a finite collection of distinct $n$-tuples of integers. Then the mapping 
\[A\ni a\mapsto \mathfrak{q}\left(\left(\sum_{p=1}^N P_{\mathbbm{k}_p}\right)\pi(a)\left(\sum_{p=1}^N P_{\mathbbm{k}_p}\right)\right)\in \Co(\oplus_{p=1}^N \ellL_{\mathbbm{k}_p})\]
coincides with the mapping 
\[A\ni a\mapsto \oplus_{p=1}^N \tilde{\beta}_{\mathbbm{k}_p}(a)\in \Co(\oplus_{p=1}^N \ellL_{\mathbbm{k}_p}).\]
\end{lem}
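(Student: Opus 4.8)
The plan is to show that the difference of the two maps vanishes in the Calkin algebra, which amounts to showing that the ``cross terms'' $P_{\mathbbm{k}_p}\pi(a)P_{\mathbbm{k}_q}$ for $p\neq q$ are compact for every $a\in A$. Indeed, writing $P:=\sum_{p=1}^N P_{\mathbbm{k}_p}$, we have $P\pi(a)P=\sum_{p,q}P_{\mathbbm{k}_p}\pi(a)P_{\mathbbm{k}_q}$, and the diagonal terms $P_{\mathbbm{k}_p}\pi(a)P_{\mathbbm{k}_p}$ are exactly $T_{\mathbbm{k}_p}(a)$, which under $\mathfrak{q}$ give $\oplus_p\tilde\beta_{\mathbbm{k}_p}(a)$ after identifying $\Co(\oplus_p\ellL_{\mathbbm{k}_p})$ with the appropriate block structure. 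So the whole content of the lemma is that $\mathfrak{q}(P_{\mathbbm{k}_p}\pi(a)P_{\mathbbm{k}_q})=0$ for $p\neq q$.

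The first step is to reduce to a dense subalgebra: since $a\mapsto P_{\mathbbm{k}_p}\pi(a)P_{\mathbbm{k}_q}$ is norm-continuous and $A_1$ is dense in $A$, it suffices to treat $a\in A_1$. By Lemma \ref{complem} the $C_0$-part contributes a compact operator, so by Lemma \ref{divlip} we may assume $a=g_\epsilon$ satisfies the global Lipschitz estimate \eqref{felip} with constant $\epsilon$. The second step is the key estimate: since $\ellL_{\mathbbm{k}_q}$ and $\ellL_{\mathbbm{k}_p}$ are orthogonal, for $f\in\ellL_{\mathbbm{k}_q}$ we may write
\[
(P_{\mathbbm{k}_p}\pi(g_\epsilon)f)(z)=\int\bigl(g_\epsilon(z)-g_\epsilon(w)\bigr)K_{\mathbbm{k}_p}(z,w)f(w)\,\rd w,
\]
using that $\int K_{\mathbbm{k}_p}(z,w)f(w)\,\rd w=(P_{\mathbbm{k}_p}f)(z)=0$ because $f\perp\ellL_{\mathbbm{k}_p}$. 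This is the same trick used in the proof of Theorem \ref{toeplaqu}: the reproducing-kernel identity lets us insert the difference $g_\epsilon(z)-g_\epsilon(w)$ for free. Then exactly as there, the operator $B$ with kernel $|z-w|\,|K_{\mathbbm{k}_p}(z,w)|$ is bounded (dominated by a convolution operator, using \eqref{projhigher} and the Gaussian decay), so $\|P_{\mathbbm{k}_p}\pi(g_\epsilon)P_{\mathbbm{k}_q}\|\le\epsilon\|B\|$, hence $\|\mathfrak{q}(P_{\mathbbm{k}_p}\pi(a)P_{\mathbbm{k}_q})\|\le\epsilon\|B\|$ for all $\epsilon>0$, giving compactness.

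The last step is purely bookkeeping: assemble the block-matrix identity. Modulo compacts, $P\pi(a)P$ is block-diagonal with blocks $\mathfrak{q}(T_{\mathbbm{k}_p}(a))=\tilde\beta_{\mathbbm{k}_p}(a)$, which is precisely $\oplus_{p=1}^N\tilde\beta_{\mathbbm{k}_p}(a)$ under the canonical identification of $\Co(\oplus_p\ellL_{\mathbbm{k}_p})$ with $N\times N$ matrices over the Calkin algebras of the summands (here one uses that $\Ko(\oplus_p\ellL_{\mathbbm{k}_p})$ splits as the direct sum of $\Ko(\ellL_{\mathbbm{k}_p})$ plus the off-diagonal blocks of compacts). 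I expect no real obstacle here; the only mild subtlety is to state the Calkin-algebra block decomposition cleanly, but the substantive estimate is an immediate reprise of the argument already carried out for Theorem \ref{toeplaqu}, with orthogonality of distinct Landau levels playing the role that made the reproducing-kernel cancellation work.
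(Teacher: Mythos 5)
Your proof is correct, but you have reconstructed the Lipschitz/reproducing-kernel estimate from scratch when the paper derives the compactness of the cross terms purely algebraically from Theorem \ref{toeplaqu}. You correctly identify that the lemma reduces to showing $P_{\mathbbm{k}_p}\pi(a)P_{\mathbbm{k}_q}\in\Ko$ for $p\ne q$, and you then re-run the $g_\epsilon+h_\epsilon$ decomposition, the kernel cancellation using $\int K_{\mathbbm{k}_p}(z,w)f(w)\,\rd w=0$ for $f\perp\ellL_{\mathbbm{k}_p}$, and the bound by the operator $B$ with kernel $|z-w|\,|K_{\mathbbm{k}_p}(z,w)|$. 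All of that is sound (up to a harmless sign in the kernel identity). The paper instead observes that since $\ellL_{\mathbbm{k}_p}\perp\ellL_{\mathbbm{k}_q}$ one has $P_{\mathbbm{k}_q}=(1-P_{\mathbbm{k}_p})P_{\mathbbm{k}_q}$, hence
\[P_{\mathbbm{k}_p}\pi(a)P_{\mathbbm{k}_q}=P_{\mathbbm{k}_p}\pi(a)(1-P_{\mathbbm{k}_p})P_{\mathbbm{k}_q}=[P_{\mathbbm{k}_p},\pi(a)](1-P_{\mathbbm{k}_p})P_{\mathbbm{k}_q},\]
which is compact immediately by Theorem \ref{toeplaqu}. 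So the paper's proof is a three-line corollary of the already-established commutator compactness, whereas yours redoes the hard analytic work. Your version is self-contained and would survive even if one only knew the estimate on each single level (rather than the full commutator statement), but in the present context it duplicates effort; the algebraic reduction is the cleaner route, and in general when you find yourself re-proving a quantitative estimate it is worth pausing to ask whether the already-proved qualitative statement (here, $[P_\mathbbm{k},\pi(a)]$ compact) can be fed into a short algebraic identity instead.
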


\begin{proof}
The Lemma follows if we show that $P_\mathbbm{k}\pi(a)P_{\mathbbm{k}'}\in \Ko(L^2(\C^n))$ for $\mathbbm{k}\neq\mathbbm{k}'$. But Theorem  \ref{toeplaqu} implies that $P_\mathbbm{k}\pi(a)(1-P_\mathbbm{k})\in \Ko(L^2(\C^n))$. So the Lemma follows from 
\[P_\mathbbm{k}\pi(a)P_{\mathbbm{k}'}=P_\mathbbm{k}\pi(a)(1-P_\mathbbm{k})P_{\mathbbm{k}'}.\]
\end{proof}

In particular we can look at the collection of all $\mathbbm{k}$:s such that $|\mathbbm{k}|=\ell$. We will define the $SU(n)$-equivariant mapping $\tilde{\beta}_{\ell}:A\to \Co(\ellL^\ell)$ as 
\[a\mapsto \oplus_{|\mathbbm{k}|=\ell}\tilde{\beta}_\mathbbm{k}(a).\] 
Just as for the particular Landau levels we define
\[\tilde{\mathcal{T}}^\ell:=\{a\oplus x\in A\oplus \Bo(\ellL^\ell):\;\tilde{\beta}_\ell(a)=\mathfrak{q}(x)\}.\]
The projection map $\tilde{\mathcal{T}}^\ell\to A$ given by $a\oplus x\mapsto a$ defines an $SU(n)$-equivariant extension 
\[0\to \Ko\to \tilde{\mathcal{T}}^\ell\to A\to 0.\]

\begin{lem}
\label{equikernellem}
The kernel of $\tilde{\beta}_\ell$ is $C_0(\C^n)$.
\end{lem}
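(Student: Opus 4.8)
The plan is to show that $\tilde\beta_\ell$ vanishes exactly on $C_0(\C^n)$. First I would establish the inclusion $C_0(\C^n)\subseteq\ker\tilde\beta_\ell$: by Lemma \ref{sumlem} the map $\tilde\beta_\ell$ agrees with $a\mapsto\mathfrak{q}\left(\left(\sum_{|\mathbbm{k}|=\ell}P_\mathbbm{k}\right)\pi(a)\left(\sum_{|\mathbbm{k}|=\ell}P_\mathbbm{k}\right)\right)$, i.e. with $\mathfrak{q}\circ T$ where $T(a)=P_\ell\pi(a)|_{\ellL^\ell}$ using $P_\ell=\sum_{|\mathbbm{k}|=\ell}P_\mathbbm{k}$. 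Each summand $\tilde\beta_\mathbbm{k}=\mathfrak{q}\circ T_\mathbbm{k}$ kills $C_0(\C^n)$ by Lemma \ref{complem}, so their direct sum does too; hence $C_0(\C^n)\subseteq\ker\tilde\beta_\ell$.

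For the reverse inclusion, note that $\tilde\beta_\ell$ factors through $A/C_0(\C^n)\cong C(S^{2n-1})$ via $\pi_\partial$, giving a $*$-homomorphism $\bar\beta_\ell:C(S^{2n-1})\to\Co(\ellL^\ell)$. I must show $\bar\beta_\ell$ is injective. Since $C(S^{2n-1})$ is commutative, its ideals correspond to closed subsets of $S^{2n-1}$, so it suffices to show that for every $v_0\in S^{2n-1}$ there is $a\in A$ with $\pi_\partial(a)(v_0)\neq 0$ but $\tilde\beta_\ell(a)=0$ is \emph{false} — more precisely, that no nonzero function in $C(S^{2n-1})$ lies in $\ker\bar\beta_\ell$. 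The cleanest route is to exploit $SU(n)$-equivariance: $\bar\beta_\ell$ intertwines the $SU(n)$-action on $C(S^{2n-1})$ (which acts transitively on $S^{2n-1}$) with the action on $\Co(\ellL^\ell)$ induced by the unitary representation on $\ellL^\ell$. Therefore $\ker\bar\beta_\ell$ is an $SU(n)$-invariant closed ideal of $C(S^{2n-1})$, corresponding to an $SU(n)$-invariant closed subset of $S^{2n-1}$; by transitivity this subset is either empty or all of $S^{2n-1}$. So it is enough to produce a single $a\in A$ with $\tilde\beta_\ell(a)\neq 0$, equivalently with $T(a)=P_\ell\pi(a)|_{\ellL^\ell}$ noncompact.

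To exhibit such an $a$, I would take $a$ approaching $1$ at infinity — say $a\in A$ with $\pi_\partial(a)\equiv 1$ and $a$ not in $C_0$; then $a-1\in C_0(\C^n)$, so $T(a)=T(1)=\id_{\ellL^\ell}$ modulo compacts by Lemma \ref{complem}, and since $\ellL^\ell$ is infinite dimensional its identity is noncompact, giving $\tilde\beta_\ell(a)=\mathfrak{q}(\id)\neq 0$. This shows $\ker\bar\beta_\ell\neq C(S^{2n-1})$, hence the associated invariant subset is empty, hence $\bar\beta_\ell$ is injective, hence $\ker\tilde\beta_\ell=C_0(\C^n)$.

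The main obstacle is justifying the factorization and the equivariance bookkeeping: one must check that the $SU(n)$-action on $\Co(\ellL^\ell)$ is well defined (the representation on $\ellL^\ell$ normalizes $\Ko(\ellL^\ell)$, which is automatic for a unitary representation) and that $\tilde\beta_\ell$ is genuinely equivariant, which follows from $SU(n)$-invariance of $P_\ell$ stated in Section 2 together with equivariance of $\pi$. Once the ideal $\ker\bar\beta_\ell$ is identified as $SU(n)$-invariant, transitivity of the action on the sphere does the rest, and the identity-modulo-compacts argument supplies the needed nonvanishing with essentially no computation.
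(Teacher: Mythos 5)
Your proof is correct and follows essentially the same route as the paper: both pass to the quotient $A/C_0(\C^n)\cong C(S^{2n-1})$, use $SU(n)$-equivariance of $\tilde\beta_\ell$ to see that $\ker\bar\beta_\ell$ is an invariant ideal, invoke transitivity of $SU(n)$ on $S^{2n-1}$ (what the paper calls ``$C(S^{2n-1})$ is $SU(n)$-simple'') to reduce to the two trivial possibilities, and rule out the full ideal by unitality/noncompactness of the identity on the infinite-dimensional $\ellL^\ell$. Only a cosmetic slip: $\ker\bar\beta_\ell\neq C(S^{2n-1})$ forces the corresponding invariant closed subset to be \emph{all} of $S^{2n-1}$, not the empty set; the conclusion $\ker\bar\beta_\ell=\{0\}$ is nonetheless what you wanted.
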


\begin{proof}
Lemma \ref{complem} implies that $C_0(\C^n)\subseteq \ker \tilde{\beta}_\ell$. To prove the reverse inclusion we observe that  the mapping $\tilde{\beta}_\ell$ is a unital $SU(n)$-equivariant $*$-homomorphism. Since $\tilde{\beta}_\ell$ is equivariant, the ideal $\ker \tilde{\beta}_\ell\subseteq A$ is $SU(n)$-invariant. The inclusion $C_0(\C^n)\subseteq \ker \tilde{\beta}_\mathbbm{k}$ implies that there is an equivariant surjection $C(S^{2n-1})\to A/\ker\tilde{\beta}_\ell$ which must be an isomorphism since $C(S^{2n-1})$ is $SU(n)$-simple and $\tilde{\beta}_\ell$ is unital. It follows that $\ker\tilde{\beta}_\ell=C_0(\C^n)$.
\end{proof}

It is interesting that although the statement of Lemma \ref{equikernellem} sounds algebraic, it is really the analytic statement that $T_\ell(a)$ is compact if and only if $a$ vanishes at infinity. And this is proven with algebraic methods!

\begin{prop}
\label{corfred}
If $u\in A\otimes M_N$, the operator $T_{\ell}(u)$ is Fredholm if and only if $\pi_\partial(u)$ is invertible.
\end{prop}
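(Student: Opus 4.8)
The plan is to identify the image of $T_\ell(u)$ in the Calkin algebra with $\pi_\partial(u)$ via an injective $*$-homomorphism, so that the Fredholm property of $T_\ell(u)$ turns into plain invertibility in $C(S^{2n-1})\otimes M_N$.

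First I would record that, by Theorem \ref{toeplaqu}, $\tilde\beta_\ell=\mathfrak{q}\circ T_\ell$ is a unital $*$-homomorphism $A\to\Co(\ellL^\ell)$, which by Lemma \ref{sumlem} is the one obtained by compressing $\pi(a)$ to $\ellL^\ell=\bigoplus_{|\mathbbm{k}|=\ell}\ellL_{\mathbbm{k}}$. Extending $T_\ell$ entrywise to $A\otimes M_N$ and using the canonical identification $\Co(\ellL^\ell\otimes\C^N)\cong\Co(\ellL^\ell)\otimes M_N$, the class of $T_\ell(u)$ in the Calkin algebra equals $(\tilde\beta_\ell\otimes\id_N)(u)$. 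Since an operator is Fredholm exactly when its Calkin-algebra class is invertible, $T_\ell(u)$ is Fredholm if and only if $(\tilde\beta_\ell\otimes\id_N)(u)$ is invertible in $\Co(\ellL^\ell)\otimes M_N$.

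Next, Lemma \ref{equikernellem} gives $\ker\tilde\beta_\ell=C_0(\C^n)=\ker\pi_\partial$, so $\tilde\beta_\ell\otimes\id_N$ has kernel $C_0(\C^n)\otimes M_N$ and factors as $\bar\beta_\ell\circ(\pi_\partial\otimes\id_N)$, where $\bar\beta_\ell: C(S^{2n-1})\otimes M_N\to\Co(\ellL^\ell)\otimes M_N$ is an injective unital $*$-homomorphism. An injective unital $*$-homomorphism of unital $C^*$-algebras is isometric, hence spectrum-preserving, and therefore reflects invertibility. Consequently $(\tilde\beta_\ell\otimes\id_N)(u)=\bar\beta_\ell(\pi_\partial(u))$ is invertible if and only if $\pi_\partial(u)$ is invertible in $C(S^{2n-1})\otimes M_N$, i.e. if and only if the matrix $\pi_\partial(u)(v)$ is invertible for every $v\in S^{2n-1}$. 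Chaining the two equivalences yields the proposition.

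I do not expect a genuine obstacle in this argument; the real analytic content already sits in Theorem \ref{toeplaqu} and Lemma \ref{equikernellem}. What remains is bookkeeping: that amplifying a $*$-homomorphism $\phi$ by $M_N$ sends $\ker\phi$ to $(\ker\phi)\otimes M_N$ and preserves injectivity, that $\Co(\ellL^\ell\otimes\C^N)$ splits compatibly as $\Co(\ellL^\ell)\otimes M_N$, and the elementary $C^*$-algebra fact that injective unital $*$-homomorphisms detect invertibility.
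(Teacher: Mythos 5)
Your argument is correct and is essentially the paper's own proof: Atkinson's theorem reduces Fredholmness of $T_\ell(u)$ to invertibility of $\tilde\beta_\ell(u)$ in the Calkin algebra, and Lemma \ref{equikernellem} (equal kernels $\ker\tilde\beta_\ell=\ker\pi_\partial=C_0(\C^n)$) then gives the equivalence with invertibility of $\pi_\partial(u)$. You simply make explicit the step the paper leaves implicit, namely that the induced map $\bar\beta_\ell$ on $C(S^{2n-1})\otimes M_N$ is an injective unital $*$-homomorphism and hence reflects invertibility.
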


\begin{proof}
By Atkinson's Theorem  $T_{\ell}(u)$ is Fredholm if and only if $\tilde{\beta}_\ell(u)$ is invertible. Lemma \ref{equikernellem} implies that $\ker \pi_\partial=\ker \tilde{\beta}_\ell$ so $\tilde{\beta}_\ell(u)$ is invertible if and only if $\pi_\partial (u)$ is invertible.
\end{proof}

\section{Pulling symbols back from $S^{2n-1}$}

To put the Toeplitz operators on a Landau level in a suitable homological picture, we must pass from $A$ to $C(S^{2n-1})$. This is a consequence of the circumstance that $A$ is homotopy equivalent to $\C$, so $A$ does not contain any relevant topological information. With Lemma \ref{equikernellem} in mind we define the Toeplitz algebra $\mathcal{T}_\mathbbm{k}$ for $C(S^{2n-1})$ as if $\beta_\mathbbm{k}$ were injective. So let $\lambda:C(S^{2n-1})\to \Bo(L^2(\C^n))$ denote the $*$-representation defined by 
\begin{equation}
\label{deflam}
\lambda(a)f(z)=a\left(\frac{z}{|z|}\right)f(z).
\end{equation}
Take $\chi_0\in C^\infty (\R)$ to be a smooth function such that $\chi_0(x)=0$ for $|x|\leq 1$ and $1-\chi_0\in C_c^\infty(\R)$. We define the cut-off $\chi(z):=\chi_0(|z|)$ and the operator 
\begin{equation}
\label{almostproj}
\tilde{P}_\mathbbm{k}:=P_\mathbbm{k}\chi. 
\end{equation} 
For the operator $\tilde{P}_\mathbbm{k}$, $\mathfrak{q}(\tilde{P}_\mathbbm{k})$ is a projection by Lemma \ref{complem}. We let $\mathcal{T}_\mathbbm{k}$ be the $C^*$-algebra generated by $\tilde{P}_\mathbbm{k}\lambda(C(S^{2n-1}))\tilde{P}_\mathbbm{k}^*$.

\begin{sats}
\label{genlam}
For any $\mathbbm{k},\mathbbm{k}'\in \N^n$ there exist a unitary 
\[Q_{\mathbbm{k},\mathbbm{k}'}:\ellL_{\mathbbm{k}'}\to \ellL_\mathbbm{k}\]
such that $Ad(Q_{\mathbbm{k},\mathbbm{k}'}):\mathcal{T}_\mathbbm{k}\to \mathcal{T}_{\mathbbm{k}'}$ is an isomorphism satisfying 
\begin{equation}
\label{step}
\mathfrak{q}(\tilde{P}_{\mathbbm{k}'}\lambda(a)\tilde{P}_{\mathbbm{k}'}^*)=\mathfrak{q}\circ Ad(Q_{\mathbbm{k},\mathbbm{k}'})(\tilde{P}_{\mathbbm{k}}\lambda(a)\tilde{P}_{\mathbbm{k}}^*). 
\end{equation}
Furthermore, for any $\mathbbm{k}\in \N^n$, the representation of $\mathcal{T}_\mathbbm{k}$ on $\ellL_\mathbbm{k}$ given by the inclusion $\mathcal{T}_\mathbbm{k}\subseteq \Bo(\ellL_\mathbbm{k})$ is irreducible and has the cyclic vector $\xi_\mathbbm{k}$ defined by 
\[\xi_{\mathbbm{k}}(z):=q^*_\mathbbm{k}(\e^{-|z|^2/4}).\] 
Up to normalization the cyclic vectors satisfy 
\[Q_{\mathbbm{k},\mathbbm{k}'}\xi_{\mathbbm{k}'}=\xi_\mathbbm{k}.\]
\end{sats}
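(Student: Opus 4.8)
The plan is to build the unitary $Q_{\mathbbm{k},\mathbbm{k}'}$ from the creation and annihilation operators, exploiting that they intertwine the different particular Landau levels. Concretely, for a single complex dimension the operators $q$ and $q^*$ map $\ellL_k$ to $\ellL_{k-1}$ and $\ellL_{k+1}$ respectively, and on the eigenspace $\ellL_k$ one has $q^*q = 2k$, $qq^* = 2(k+1)$; hence suitably normalized powers of $q^*$ (resp.\ $q$) give unitaries $\ellL_k \to \ellL_{k'}$. In $n$ dimensions one takes the tensor product over the $n$ factors, defining $Q_{\mathbbm{k},\mathbbm{k}'}$ as a normalized monomial in the $q_j$'s and $q_j^*$'s carrying $\ellL_{\mathbbm{k}'}=\bigotimes_j\ellL_{k'_j}$ onto $\ellL_{\mathbbm{k}}=\bigotimes_j\ellL_{k_j}$. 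Since $\xi_{\mathbbm{k}}=q^*_{\mathbbm{k}}(\e^{-|z|^2/4})$ is by construction the image of the vacuum vector under $q^*_{\mathbbm{k}}$, the normalization of $Q_{\mathbbm{k},\mathbbm{k}'}$ can be fixed so that $Q_{\mathbbm{k},\mathbbm{k}'}\xi_{\mathbbm{k}'}=\xi_{\mathbbm{k}}$ up to the stated normalization; this is essentially a bookkeeping computation with Heisenberg commutation relations.

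The substantive point is the relation \eqref{step}, i.e.\ that $Ad(Q_{\mathbbm{k},\mathbbm{k}'})$ carries $\tilde P_{\mathbbm{k}}\lambda(a)\tilde P_{\mathbbm{k}}^*$ to $\tilde P_{\mathbbm{k}'}\lambda(a)\tilde P_{\mathbbm{k}'}^*$ modulo compacts. The strategy here is to work entirely in the Calkin algebra. By Lemma \ref{complem} and Theorem \ref{toeplaqu}, modulo $\Ko$ the cut-off $\chi$ is irrelevant and $\mathfrak{q}(\tilde P_{\mathbbm{k}}\lambda(a)\tilde P_{\mathbbm{k}}^*)$ depends only on $\mathfrak{q}(P_{\mathbbm{k}}\pi(a_\partial(z/|z|))P_{\mathbbm{k}})$, which is the image of $\tilde\beta_{\mathbbm{k}}(a)$ under the inclusion; by Lemma \ref{equikernellem} this $*$-homomorphism factors through $\pi_\partial$ and so depends on $\mathbbm{k}$ only through the abstract algebra $C(S^{2n-1})$. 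Thus it suffices to show that conjugation by $Q_{\mathbbm{k},\mathbbm{k}'}$ implements, modulo compacts, the identification of the two representations $\tilde\beta_{\mathbbm{k}}$ and $\tilde\beta_{\mathbbm{k}'}$ of $C(S^{2n-1})$. For this I would compute $Q_{\mathbbm{k},\mathbbm{k}'}^* P_{\mathbbm{k}} \pi(a) P_{\mathbbm{k}} Q_{\mathbbm{k},\mathbbm{k}'}$ and show it differs from $P_{\mathbbm{k}'}\pi(a)P_{\mathbbm{k}'}$ by a compact operator when $a$ is (the radial extension of) a Lipschitz function on the sphere, by the same convolution-domination estimate used in the proof of Theorem \ref{toeplaqu}: the commutator of $\pi(a)$ with the relevant Laguerre-type kernel difference is governed by $|z-w|$ times a kernel bounded by $|z-w|^{m}\e^{-|z-w|^2/8}$, hence bounded, and $\epsilon\|B\|\to 0$; then pass to the dense subalgebra.

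For irreducibility and cyclicity of $\xi_{\mathbbm{k}}$: the inclusion $\mathcal{T}_{\mathbbm{k}}\subseteq\Bo(\ellL_{\mathbbm{k}})$ contains $\Ko(\ellL_{\mathbbm{k}})$ (from Lemma \ref{complem}, since $\mathcal{T}_{\mathbbm{k}}$ contains Toeplitz operators with compactly supported symbols and these generate the compacts on an irreducible piece), and a $C^*$-algebra containing all compacts acts irreducibly; any nonzero vector is then cyclic, in particular $\xi_{\mathbbm{k}}$. Alternatively one checks directly that $\xi_{\mathbbm{k}}$ is cyclic: the vacuum $\e^{-|z|^2/4}$ spans a reproducing-kernel line in $\ellL_0$ on which Toeplitz operators with radial-limit symbols act with dense range (this is the classical Berger--Coburn statement for the Fock space, our $\mathbbm{k}=0$ case), and $\xi_{\mathbbm{k}}=q^*_{\mathbbm{k}}(\text{vacuum})$ transports this under the intertwiner. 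I expect the main obstacle to be the compact-difference estimate in \eqref{step}: one must control $Q_{\mathbbm{k},\mathbbm{k}'}^*P_{\mathbbm{k}}\pi(a)P_{\mathbbm{k}}Q_{\mathbbm{k},\mathbbm{k}'}-P_{\mathbbm{k}'}\pi(a)P_{\mathbbm{k}'}$ at the level of integral kernels, keeping track of how the monomial in $q_j,q_j^*$ acts on the explicit Laguerre kernels \eqref{projhigher}, and then invoke the Lipschitz cancellation exactly as in the proof of Theorem \ref{toeplaqu}.
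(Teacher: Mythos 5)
Your overall plan (build $Q_{\mathbbm{k},\mathbbm{k}'}$ out of normalized creation/annihilation operators and show it intertwines modulo compacts) is the same idea as the paper's, which uses the coisometry $Q_j$ from the polar decomposition $q_j^*=E_jQ_j$ as the one-step intertwiner. But two of your reductions have real gaps.

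First, your argument that $\Ko\subseteq\mathcal{T}_\mathbbm{k}$ ``since $\mathcal{T}_\mathbbm{k}$ contains Toeplitz operators with compactly supported symbols'' is not available: $\mathcal{T}_\mathbbm{k}$ is by definition generated by $\tilde P_\mathbbm{k}\lambda(C(S^{2n-1}))\tilde P_\mathbbm{k}^*$, i.e.\ by symbols pulled back \emph{radially} from the sphere, and such symbols are never compactly supported. You seem to be conflating $\mathcal{T}_\mathbbm{k}$ with the larger algebra $\tilde{\mathcal{T}}_\mathbbm{k}$ built over $A\supseteq C_0(\C^n)$. The paper's route is the opposite of yours: it first observes that $\tilde P_\mathbbm{k}\lambda(ab)\tilde P_\mathbbm{k}^*-\tilde P_\mathbbm{k}\lambda(a)\tilde P_\mathbbm{k}^*P_\mathbbm{k}\lambda(b)\tilde P_\mathbbm{k}^*$ is a nonzero compact, so that \emph{if} the action is irreducible then $\Ko\subseteq\mathcal{T}_\mathbbm{k}$, and then proves irreducibility separately. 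For that it is not enough to produce a cyclic vector; the paper shows $\xi_0$ defines a \emph{pure} state (the orthogonality condition forces the holomorphic factor $f_0$ to be real, hence constant), and then carries irreducibility up the levels via the intertwiners. Your proposal has only cyclicity, which does not yield irreducibility by itself.

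Second, the claim that because $\tilde\beta_\mathbbm{k}$ factors through $\pi_\partial$ the Calkin-algebra image ``depends on $\mathbbm{k}$ only through the abstract algebra $C(S^{2n-1})$'' is a non sequitur: two injective $*$-homomorphisms of $C(S^{2n-1})$ into the Calkin algebra need not be conjugate (and showing they are is exactly what \eqref{step} demands). You still have to construct the intertwiner and verify the conjugation identity. The paper does this by a clean algebraic computation, not a kernel estimate: writing $\rho_j(T)=Q_j^*TQ_j|_{\ellL_\mathbbm{k}}$ and using $q_j=2\partial/\partial\bar z_j+z_j$ to produce $\rho_j(\tilde P_{\mathbbm{k}+e_j}\lambda(a)\tilde P_{\mathbbm{k}+e_j}^*)=\tilde P_\mathbbm{k}\lambda(a)\tilde P_\mathbbm{k}^*+cP_\mathbbm{k}[\partial/\partial\bar z_j,\chi^2\lambda(a)]P_{\mathbbm{k}+e_j}q_j^*|_{\ellL_\mathbbm{k}}$; the error term is compact because the bounded commutator is sandwiched between projections onto distinct levels, which Theorem~\ref{toeplaqu} makes compact. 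Your plan to redo this at the level of the Laguerre kernels \eqref{projhigher} with a Lipschitz--convolution estimate is not carried out, and it would be substantially more laborious than the commutator argument. Finally, note that the induction hypothesis $\Ko\subseteq\mathcal{T}_\mathbbm{k}$ is what allows one to absorb that error term into $\mathcal{T}_\mathbbm{k}$, so the irreducibility and the intertwining argument are genuinely intertwined in the paper's proof and cannot be decoupled the way your outline suggests.
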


\begin{proof}
Let us start with observing that for any $a,b\in C(S^{2n-1})$ we have 
\[\tilde{P}_\mathbbm{k}\lambda(ab)\tilde{P}_\mathbbm{k}^*-\tilde{P}_\mathbbm{k}\lambda(a)\tilde{P}_\mathbbm{k}^*P_\mathbbm{k}\lambda(b)\tilde{P}_\mathbbm{k}^*\in \Ko.\]
So if $\mathcal{T}_\mathbbm{k}$ acts irreducibly on $\ellL_\mathbbm{k}$, then $\Ko\subseteq \mathcal{T}_\mathbbm{k}$.

First we will construct a cyclic vector for the $\mathcal{T}_\mathbbm{k}$-action on $\ellL_\mathbbm{k}$ and use the cyclic vector in $\ellL_0$ to show that $\mathcal{T}_0$ acts irreducibly on $\ellL_0$. Then we will show that for $\mathbbm{k}$ such that $\mathcal{T}_\mathbbm{k}$ acts irreducibly on $\ellL_\mathbbm{k}$ and $1\leq j\leq n$ there is an isomorphism $\mathcal{T}_\mathbbm{k}\cong \mathcal{T}_{\mathbbm{k}+e_j}$ induced by a unitary intertwining the $\mathcal{T}_\mathbbm{k}$-action on $L_{\mathbbm{k}}$ with the $\mathcal{T}_{\mathbbm{k}+e_j}$-action on $\ellL_{\mathbbm{k}+e_j}$. 

Consider the elements $\xi_{\mathbbm{m},\mathbbm{k}}\in \ellL_\mathbbm{k}$ for $\mathbbm{m}\in \N^n$ defined by
\[\xi_{\mathbbm{m},\mathbbm{k}}(z):=q^*_\mathbbm{k}(z^\mathbbm{m}\e^{-|z|^2/4}).\]
The elements $\xi_{\mathbbm{m},\mathbbm{k}}$ form an orthogonal basis for $\ellL_\mathbbm{k}$. As in the statement of the theorem, we define $\xi_\mathbbm{k}:=\xi_{0,\mathbbm{k}}$. For $a\in C(S^{2n-1})$ we have
\begin{align*}
\langle \xi_{\mathbbm{m},\mathbbm{k}},\tilde{P}_\mathbbm{k}a\tilde{P}_\mathbbm{k}^*\xi_\mathbbm{k}\rangle =\langle \xi_{\mathbbm{m},\mathbbm{k}},\chi^2 a\xi_\mathbbm{k}\rangle&=\\
\int _{\C^n}  \bar{q}^*_\mathbbm{k}(\bar{z}^\mathbbm{m}\e^{-|z|^2/4})q^*_\mathbbm{k}(\e^{-|z|^2/4})\chi^2(z)a\left(\frac{z}{|z|}\right)\rd V&=\int _{S^{2n-1}}p_{\mathbbm{m}}(\bar{z})a(z)\rd S,
\end{align*}
for some polynomials $p_\mathbbm{m}$ of degree at most $2|\mathbbm{k}|+|\mathbbm{m}|$. It follows that $\mathcal{T}_\mathbbm{k}\xi_\mathbbm{k}$ span $\ellL_\mathbbm{k}$ and therefore $\overline{\mathcal{T}_\mathbbm{k}\xi_\mathbbm{k}}=\ellL_\mathbbm{k}$. Thus $\xi_\mathbbm{k}$ is a cyclic vector for the $\mathcal{T}_\mathbbm{k}$-action. 

By standard theory $\mathcal{T}_0$ acts irreducibly on $\ellL_0$ if and only if there are no non-zero $\xi_0',\xi_0''\in \ellL_0$ such that $\xi_0=\xi_0'+\xi_0''$ and $\mathcal{T}_0\xi_0'\perp \mathcal{T}_0\xi_0''$. Assume that for some $\xi_0'\in \ellL_0$ we have $\mathcal{T}_0\xi_0'\perp \mathcal{T}_0(\xi_0-\xi_0')$. The orthogonality condition implies that $\langle \tilde{P}_0a\tilde{P}_0^*(\xi_0-\xi_0'),\xi_0'\rangle=0$ for all $a\in C(S^{2n-1})$ and $P_0$ is self-adjoint so this relation is equivalent to $\langle \chi^2a\xi_0,\xi_0'\rangle=\langle \chi^2a\xi_0',\xi_0'\rangle$ for all $a\in C(S^{2n-1})$. There exist a holomorphic function $f_0$ such that $\xi_0'(z)=f_0(z)\e^{-|z|^2/4}$ and the equation $\langle \chi^2a\xi_0,\xi_0'\rangle=\langle \chi^2a\xi_0',\xi_0'\rangle$ implies
\[\int_{\C^n}\overline{f_0(z)}\e^{-|z|^2/2}\chi^2(z)a\left(\frac{z}{|z|}\right)\rd V=\int_{\C^n}|f_0(z)|^2\e^{-|z|^2/2}\chi^2(z)a\left(\frac{z}{|z|}\right)\rd V.\]
Hence $f_0$ must be real, and since it is holomorphic it must be constant. Thus $\xi_0'$ is in the linear span of $\xi_0$ and $\xi_0$ defines a pure state. Since the $\mathcal{T}_0$-action on $\ellL_0$ has a pure state, it is irreducible. 

Assume that $\mathcal{T}_\mathbbm{k}$ acts irreducibly on $\ellL_\mathbbm{k}$. Consider the polar decomposition of the unbounded operator $q_j$ on $L^2(\C^n)$, that is $q_j^*=E_jQ_j$ where $Q_j$ is a coisometry and $E_j$ is a strictly positive unbounded operator. Clearly $E_j$ is diagonal on the energy levels and
\[E_j=\bigoplus _{\mathbbm{k}'\in \N^n} \sqrt{k'_j}\;P_{\mathbbm{k}'}.\]
We define the $*$-homomorphism $\rho_j:\mathcal{T}_{\mathbbm{k}+e_j}\to \Bo(\ellL_\mathbbm{k})$ by $\rho_j(T):= Q_j^*TQ_j|_{\ellL_\mathbbm{k}}$. Since $Q_j$ is a coisometry this is clearly a $*$-monomorphism. It follows from the fact that $q_j^*|:\ellL_\mathbbm{k}\to \ellL_{\mathbbm{k}+e_j}$ is an isomorphism, that $Q_j|:\ellL_\mathbbm{k}\to \ellL_{\mathbbm{k}+e_j}$ is unitary, so $\rho_j$ is unital. If $a\in C^\infty(S^{2n-1})$ then for some non-zero constant $c$ we have
\begin{align*}
\rho_j(\tilde{P}_{\mathbbm{k}+e_j}\lambda(a)\tilde{P}_{\mathbbm{k}+e_j}^*)=c q_j\tilde{P}_{\mathbbm{k}+e_j}\lambda(a)\tilde{P}_{\mathbbm{k}+e_j} ^*q_j^*|_{\ellL_\mathbbm{k}}&=\\
=cP_\mathbbm{k}\left[\frac{\partial}{\partial \bar{z}_j},\chi^2\lambda(a)\right]P_{\mathbbm{k}+e_j}q_j^*|_{\ellL_\mathbbm{k}}&+\tilde{P}_{\mathbbm{k}}\lambda(a)\tilde{P}_{\mathbbm{k}}^*\in \mathcal{T}_\mathbbm{k},
\end{align*}
because Theorem \ref{toeplaqu} implies $P_\mathbbm{k}bP_{\mathbbm{k}+e_j}\in \Ko(L^2(\C^n))$ for $b\in A$ and by the induction assumption $\Ko\subseteq \mathcal{T}_\mathbbm{k}$. So we obtain a $*$-monomorphism $\rho_j: \mathcal{T}_{\mathbbm{k}+e_j}\to \mathcal{T}_{\mathbbm{k}}$. However, we have cyclic vectors $\xi_\mathbbm{k}$ and $\xi_{\mathbbm{k}+e_j}$ for $\mathcal{T}_\mathbbm{k}$ respectively $\mathcal{T}_{\mathbbm{k}+e_j}$. For these vectors, $Q_j\xi_\mathbbm{k}$ is a multiple of $\xi_{\mathbbm{k}+e_j}$ so 
\[\ellL_{\mathbbm{k}+e_j}=\overline{\mathcal{T}_{\mathbbm{k}+e_j}\xi_{\mathbbm{k}+e_j}}\xrightarrow{Q_j^*}\overline{\mathcal{T}_{\mathbbm{k}}\xi_\mathbbm{k}}.\]
Therefore $\rho_j$ is surjective and an isomorphism. We conclude that $\mathcal{T}_\mathbbm{k}$ is independent of $\mathbbm{k}$ and the representations on $\ellL_\mathbbm{k}$ are irreducible since $\xi_0$ is pure and the $\mathcal{T}_\mathbbm{k}$-actions are all equivalent.
\end{proof}

In \cite{bpr} a weaker, but more explicit, statement was proven in complex dimension $1$. Lemma $9.2$ of \cite{bpr} gives an explicit expression of $Q_{k,0}^*T_k(a)Q_{k,0}$ if $a\in A$ is smooth as 
\[Q_{k,0}^*T_k(a)Q_{k,0}=T_0(\mathcal{D}_k(a)),\]
where $\mathcal{D}_k:=\id +\sum _{j=1}^k d_{j,k}\Delta^{j}$, for some explicit constants $d_{j,k}$ and $\Delta$ is the Laplacian on $\C$.  \\

For $i=1,\ldots, n$ we let $z_i:S^{2n-1}\to \C$ denote the coordinate functions of the embedding $S^{2n-1}\subseteq \C^n$. Clearly $z_i\in C(S^{2n-1})$.

\begin{cor}
\label{genlem}
The operators $P_\mathbbm{k}\lambda(z_i)P_\mathbbm{k}^*$ together with $\Ko$ generate $\mathcal{T}_\mathbbm{k}$ as a $C^*$-algebra.
\end{cor}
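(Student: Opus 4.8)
The plan is to show that the coordinate functions $z_i$, together with the constant function $1$, generate $C(S^{2n-1})$ as a $C^*$-algebra, and then push this through the Toeplitz construction. For the first part, I would invoke the Stone--Weierstrass theorem: the unital $*$-subalgebra of $C(S^{2n-1})$ generated by $z_1,\ldots,z_n$ contains all $z_i$ and all $\bar z_i$, hence all polynomials in the real coordinates, which separate points of $S^{2n-1}$ and are closed under complex conjugation; therefore this subalgebra is dense, and its closure is all of $C(S^{2n-1})$.

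Next I would transfer this to the Toeplitz side. Recall $\mathcal{T}_\mathbbm{k}$ is the $C^*$-algebra generated by $\{\tilde P_\mathbbm{k}\lambda(a)\tilde P_\mathbbm{k}^*: a\in C(S^{2n-1})\}$, and by the computation at the start of the proof of Theorem~\ref{genlam} we have $\tilde P_\mathbbm{k}\lambda(ab)\tilde P_\mathbbm{k}^*\equiv \tilde P_\mathbbm{k}\lambda(a)\tilde P_\mathbbm{k}^*\,P_\mathbbm{k}\lambda(b)\tilde P_\mathbbm{k}^*$ modulo $\Ko$; since by Theorem~\ref{genlam} the representation is irreducible, $\Ko\subseteq \mathcal{T}_\mathbbm{k}$, so the map $a\mapsto \mathfrak q(\tilde P_\mathbbm{k}\lambda(a)\tilde P_\mathbbm{k}^*)$ is a $*$-homomorphism from $C(S^{2n-1})$ into $\mathcal{T}_\mathbbm{k}/\Ko$. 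A $*$-homomorphism is continuous and sends generators to generators, so $\mathcal{T}_\mathbbm{k}/\Ko$ is generated by the images of $1$ and the $z_i$; lifting and adjoining $\Ko$, the $C^*$-algebra generated by $\Ko$ together with $\{\tilde P_\mathbbm{k}\lambda(1)\tilde P_\mathbbm{k}^*,\ \tilde P_\mathbbm{k}\lambda(z_i)\tilde P_\mathbbm{k}^*\}$ is all of $\mathcal{T}_\mathbbm{k}$. Finally I would observe that $\tilde P_\mathbbm{k}\lambda(1)\tilde P_\mathbbm{k}^*=P_\mathbbm{k}\chi^2 P_\mathbbm{k}^*$ equals $P_\mathbbm{k}$ modulo $\Ko$ (since $1-\chi^2\in C_c$ and Lemma~\ref{complem} applies), and $\tilde P_\mathbbm{k}\lambda(z_i)\tilde P_\mathbbm{k}^*=P_\mathbbm{k}\chi^2\lambda(z_i)P_\mathbbm{k}^*$ differs from $P_\mathbbm{k}\lambda(z_i)P_\mathbbm{k}^*$ by a compact operator for the same reason; hence the cut-offs can be removed at the cost of elements of $\Ko$, which are already available.

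The only genuinely delicate point is making sure the reduction modulo $\Ko$ is reversible in the right direction: knowing that the images in the Calkin quotient generate $\mathcal{T}_\mathbbm{k}/\Ko$ does not by itself recover $\mathcal{T}_\mathbbm{k}$ unless we already know $\Ko\subseteq \mathcal{T}_\mathbbm{k}$ and that the chosen lifts lie in $\mathcal{T}_\mathbbm{k}$. Both are in hand: $\Ko\subseteq\mathcal{T}_\mathbbm{k}$ follows from irreducibility (Theorem~\ref{genlam}), and $P_\mathbbm{k}\lambda(z_i)P_\mathbbm{k}^*$ differs from the generator $\tilde P_\mathbbm{k}\lambda(z_i)\tilde P_\mathbbm{k}^*\in\mathcal{T}_\mathbbm{k}$ by a compact operator, hence itself lies in $\mathcal{T}_\mathbbm{k}$. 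So the statement follows by combining Stone--Weierstrass with the standard fact that, for a $C^*$-algebra containing the compacts, a generating set for the Calkin quotient lifts to a generating set modulo the compacts.

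\begin{proof}
By the Stone--Weierstrass theorem the unital $*$-algebra generated by the coordinate functions $z_1,\dots,z_n$ is dense in $C(S^{2n-1})$: it is closed under conjugation (it contains each $\bar z_i$), it contains the constants, and it separates points of $S^{2n-1}\subseteq\C^n$. Hence $C(S^{2n-1})$ is generated as a $C^*$-algebra by $1,z_1,\dots,z_n$.

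By the first displayed identity in the proof of Theorem~\ref{genlam}, $\tilde P_\mathbbm{k}\lambda(ab)\tilde P_\mathbbm{k}^*-\tilde P_\mathbbm{k}\lambda(a)\tilde P_\mathbbm{k}^*\,P_\mathbbm{k}\lambda(b)\tilde P_\mathbbm{k}^*\in\Ko$ for all $a,b\in C(S^{2n-1})$, and Theorem~\ref{genlam} gives $\Ko\subseteq\mathcal{T}_\mathbbm{k}$. Therefore the composition $C(S^{2n-1})\to\mathcal{T}_\mathbbm{k}\to\mathcal{T}_\mathbbm{k}/\Ko$, $a\mapsto \mathfrak q(\tilde P_\mathbbm{k}\lambda(a)\tilde P_\mathbbm{k}^*)$, is a $*$-homomorphism. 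A $*$-homomorphism is continuous and carries a generating set to a generating set, so $\mathcal{T}_\mathbbm{k}/\Ko$ is generated by the images of $1,z_1,\dots,z_n$. Since $\Ko\subseteq\mathcal{T}_\mathbbm{k}$, it follows that $\mathcal{T}_\mathbbm{k}$ is generated, as a $C^*$-algebra, by $\Ko$ together with the elements $\tilde P_\mathbbm{k}\lambda(1)\tilde P_\mathbbm{k}^*$ and $\tilde P_\mathbbm{k}\lambda(z_i)\tilde P_\mathbbm{k}^*$, $i=1,\dots,n$.

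It remains to remove the cut-offs. We have $\tilde P_\mathbbm{k}\lambda(1)\tilde P_\mathbbm{k}^*=P_\mathbbm{k}\chi^2P_\mathbbm{k}^*$ and $\tilde P_\mathbbm{k}\lambda(z_i)\tilde P_\mathbbm{k}^*=P_\mathbbm{k}\chi^2\lambda(z_i)P_\mathbbm{k}^*$, while $P_\mathbbm{k}$ and $P_\mathbbm{k}\lambda(z_i)P_\mathbbm{k}^*$ differ from these by the operators $P_\mathbbm{k}(1-\chi^2)P_\mathbbm{k}^*$ and $P_\mathbbm{k}(1-\chi^2)\lambda(z_i)P_\mathbbm{k}^*$ respectively; since $1-\chi^2\in C_c(\C^n)$, Lemma~\ref{complem} shows these differences lie in $\Ko$. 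As $\Ko\subseteq\mathcal{T}_\mathbbm{k}$, we conclude that $P_\mathbbm{k}\lambda(z_i)P_\mathbbm{k}^*\in\mathcal{T}_\mathbbm{k}$ for each $i$, that $P_\mathbbm{k}\in\mathcal{T}_\mathbbm{k}$, and that $\mathcal{T}_\mathbbm{k}$ is generated by $\Ko$ together with the operators $P_\mathbbm{k}\lambda(z_i)P_\mathbbm{k}^*$, $i=1,\dots,n$.
\end{proof}
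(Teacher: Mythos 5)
Your proof is correct and rests on the same two pillars as the paper's: Stone--Weierstrass on $S^{2n-1}$ and the fact that compact errors can be absorbed once $\Ko\subseteq\mathcal{T}_\mathbbm{k}$ (Theorem~\ref{genlam}). The packaging differs in a small but genuine way. The paper stays in $\Bo(\ellL_\mathbbm{k})$: it approximates $a$ by polynomials $R_j$, shows $P_\mathbbm{k}\lambda(R_j)P_\mathbbm{k}$ agrees modulo $\Ko$ with the corresponding polynomial in the generators, and passes to the limit using the explicit Toeplitz norm estimate $\|P_\mathbbm{k}\lambda(R_j)P_\mathbbm{k}-P_\mathbbm{k}\lambda(a)P_\mathbbm{k}\|\le\|R_j-a\|_\infty$. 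You instead quotient out to $\mathcal{T}_\mathbbm{k}/\Ko$ first, observe that $a\mapsto\mathfrak q(\tilde P_\mathbbm{k}\lambda(a)\tilde P_\mathbbm{k}^*)$ is a $*$-homomorphism, and invoke the abstract facts that $*$-homomorphisms are contractive with closed image and send generating sets to generating sets, then lift. These are really the same mechanism (the explicit norm estimate is exactly what makes your map a bounded $*$-homomorphism), but your version trades a hands-on computation for general $C^*$-theory. One step you leave slightly compressed: ``carries a generating set to a generating set, so $\mathcal{T}_\mathbbm{k}/\Ko$ is generated by the images'' silently uses that the $*$-homomorphism $C(S^{2n-1})\to\mathcal{T}_\mathbbm{k}/\Ko$ is \emph{surjective}; this is true because its image is a closed $*$-subalgebra containing the generators $\mathfrak q(\tilde P_\mathbbm{k}\lambda(C(S^{2n-1}))\tilde P_\mathbbm{k}^*)$, but it deserves a sentence. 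With that filled in, your argument is complete.
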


\begin{proof}
Let $U$ denote the $C^*$-algebra generated by $P_\mathbbm{k}\lambda(z_i)P_\mathbbm{k}$ and $\Ko$. The $C^*$-algebra $\mathcal{T}_\mathbbm{k}$ is  constructed as the $C^*$-algebra generated by the linear space $P_\mathbbm{k}\lambda(C(S^{2n-1}))P_\mathbbm{k}$ because $P_\mathbbm{k}\lambda(a)P_\mathbbm{k}-\tilde{P}_\mathbbm{k}\lambda(a)\tilde{P}^*_\mathbbm{k}\in \Ko$. So it is sufficient to prove  $P_\mathbbm{k}\lambda(C(S^{2n-1}))P_\mathbbm{k}\subseteq U$. Given a function $a\in C(S^{2n-1})$ the Stone-Weierstrass theorem implies that there is a sequence of polynomials $R_j=R_j(z,\bar{z})$ such that $R_j\to a$ in $C(S^{2n-1})$. The functions $R_j$ are polynomials so it follows that  
\[P_\mathbbm{k}\lambda(R_j)P_\mathbbm{k}-R_j(P_\mathbbm{k}\lambda(z)P_\mathbbm{k}, P_\mathbbm{k}\lambda(z^*)P_\mathbbm{k})\in \Ko\]
 and $P_\mathbbm{k}\lambda(R_j)P_\mathbbm{k}\in U$. Finally $\|P_\mathbbm{k}\lambda(R_j)P_\mathbbm{k}-P_\mathbbm{k}\lambda(a)P_\mathbbm{k}\|_{\Bo(\ellL_\mathbbm{k})}\leq \|R_j-a\|_{C(S^{2n-1})}$ which implies $P_\mathbbm{k}\lambda(a)P_\mathbbm{k}\in U$.
\end{proof}

\begin{cor}
\label{kernellem}
The mapping $\beta_\mathbbm{k}:C(S^{2n-1})\to \Co(\ellL_\mathbbm{k})$ induced from $\tilde{\beta}_\mathbbm{k}$ is injective, so if $u\in A\otimes M_N$ the operator $T_{\mathbbm{k}}(u)$ is Fredholm if and only if $\pi_\partial(u)$ is invertible.
\end{cor}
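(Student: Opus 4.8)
The plan is to prove that $\beta_{\mathbbm{k}}$ is injective and then deduce the Fredholm criterion. For injectivity, I would exploit the structure established in Theorem \ref{genlam}: the representation of $\mathcal{T}_{\mathbbm{k}}$ on $\ellL_{\mathbbm{k}}$ is irreducible, and the observation made at the start of the proof of Theorem \ref{genlam} shows that $\Ko(\ellL_{\mathbbm{k}}) \subseteq \mathcal{T}_{\mathbbm{k}}$, so that $\Ko$ sits inside $\mathcal{T}_{\mathbbm{k}}$ as an essential ideal. Quotienting by $\Ko$ gives a short exact sequence $0 \to \Ko \to \mathcal{T}_{\mathbbm{k}} \to \mathcal{T}_{\mathbbm{k}}/\Ko \to 0$, and the map $\beta_{\mathbbm{k}} : C(S^{2n-1}) \to \Co(\ellL_{\mathbbm{k}})$ factors through the quotient $\mathcal{T}_{\mathbbm{k}}/\Ko$. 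Thus $\ker \beta_{\mathbbm{k}}$ is an ideal in $C(S^{2n-1})$, hence corresponds to an open subset $V \subseteq S^{2n-1}$; I want to show $V = \emptyset$.

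The key input is $SU(n)$-equivariance together with the irreducibility already proven. Since $\lambda$ is $SU(n)$-equivariant and $P_{\mathbbm{k}}$ has the property that the whole construction of $\mathcal{T}_{\mathbbm{k}}$ intertwines the $SU(n)$-actions (this is used implicitly throughout, e.g. in Lemma \ref{equikernellem}), the ideal $\ker \beta_{\mathbbm{k}}$ is $SU(n)$-invariant, so $V$ is an $SU(n)$-invariant open subset of $S^{2n-1}$. But $SU(n)$ acts transitively on $S^{2n-1}$, so either $V = \emptyset$ or $V = S^{2n-1}$. The latter would force $\beta_{\mathbbm{k}} = 0$, i.e. $P_{\mathbbm{k}}\lambda(a)P_{\mathbbm{k}}^*$ compact for every $a$; but then $\mathcal{T}_{\mathbbm{k}} = \Ko$ would not act irreducibly on the infinite-dimensional space $\ellL_{\mathbbm{k}}$ (or, more elementarily, applying this to $a = 1$ would make $P_{\mathbbm{k}}\chi^2$ compact, contradicting that $\ellL_{\mathbbm{k}}$ is infinite-dimensional and $\chi = 1$ off a compact set, using Lemma \ref{complem}). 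Hence $V = \emptyset$ and $\beta_{\mathbbm{k}}$ is injective. Alternatively, and perhaps more cleanly, one argues exactly as in Lemma \ref{equikernellem}: $\beta_{\mathbbm{k}}$ is a unital $SU(n)$-equivariant $*$-homomorphism out of the $SU(n)$-simple algebra $C(S^{2n-1})$, so it is automatically injective.

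Once injectivity of $\beta_{\mathbbm{k}}$ is in hand, the Fredholm statement follows just as in Proposition \ref{corfred}. Given $u \in A \otimes M_N$, Atkinson's theorem says $T_{\mathbbm{k}}(u)$ is Fredholm if and only if its image $\tilde{\beta}_{\mathbbm{k}}(u)$ in the Calkin algebra of $\ellL_{\mathbbm{k}}$ is invertible. By Lemma \ref{equikernellem} applied to the individual $\tilde{\beta}_{\mathbbm{k}}$ (whose kernel is also $C_0(\C^n)$, by the same $SU(n)$-simplicity argument, or since $\ker\tilde\beta_{\mathbbm{k}} \supseteq C_0(\C^n)$ and $\tilde\beta_{\mathbbm{k}}$ descends to the injective $\beta_{\mathbbm{k}}$), we have $\ker \tilde{\beta}_{\mathbbm{k}} = \ker \pi_\partial = C_0(\C^n)$, so $\tilde{\beta}_{\mathbbm{k}}$ descends to an injective $*$-homomorphism on $A/C_0(\C^n) \cong C(S^{2n-1})$; consequently $\tilde{\beta}_{\mathbbm{k}}(u)$ is invertible if and only if $\pi_\partial(u)$ is invertible in $C(S^{2n-1}) \otimes M_N$.

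The main obstacle is making the equivariance bookkeeping precise — in particular verifying that $\mathcal{T}_{\mathbbm{k}}$, defined via the cutoff $\chi$ and the operators $\tilde P_{\mathbbm{k}}$, genuinely carries a compatible $SU(n)$-action modulo compacts so that $\beta_{\mathbbm{k}}$ is honestly $SU(n)$-equivariant; since $\chi$ is radial and $P_{\mathbbm{k}}$ is not $SU(n)$-invariant, one must check that the relevant $SU(n)$-action on $\mathcal{T}_{\mathbbm{k}}/\Ko$ is well-defined (it is: $P_\mathbbm{k}$ permutes among the $\ellL_{\mathbbm{k}'}$ with $|\mathbbm{k}'| = |\mathbbm{k}|$, but $\tilde\beta_\mathbbm{k}$ and $\beta_\mathbbm{k}$ still carry a natural $SU(n)$-action after quotienting). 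Everything else is a direct application of results already proven, so the corollary is essentially a packaging of Theorem \ref{genlam} and Lemma \ref{equikernellem}.
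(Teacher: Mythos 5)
Your argument for the Fredholm criterion is fine and matches the paper, but the injectivity step has a genuine gap. Both of your arguments for injectivity rest on the claim that $\beta_{\mathbbm{k}}$ is $SU(n)$-equivariant, and this is precisely what fails: the paper notes explicitly that the individual projections $P_{\mathbbm{k}}$ are \emph{not} $SU(n)$-invariant (only the full level projections $P_\ell = \sum_{|\mathbbm{k}|=\ell} P_{\mathbbm{k}}$ are), so the $*$-homomorphism $\beta_{\mathbbm{k}}$ does not intertwine the $SU(n)$-actions and there is no reason for $\ker\beta_{\mathbbm{k}}$ to be an $SU(n)$-invariant ideal. Your attempted repair in the final paragraph — that $g\in SU(n)$ ``permutes'' the particular levels $\ellL_{\mathbbm{k}'}$ with $|\mathbbm{k}'|=|\mathbbm{k}|$ — is not correct either: a generic $g$ sends $\ellL_{\mathbbm{k}}$ to a subspace of the full level $\ellL^{|\mathbbm{k}|}$ that is not any single $\ellL_{\mathbbm{k}'}$, so conjugation by $g$ does not induce a map $\mathcal{T}_{\mathbbm{k}}\to\mathcal{T}_{\mathbbm{k}'}$. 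Irreducibility of the $\mathcal{T}_{\mathbbm{k}}$-action and $\Ko\subseteq\mathcal{T}_{\mathbbm{k}}$, which you do invoke correctly from Theorem \ref{genlam}, are not enough by themselves: they show that $\mathcal{T}_{\mathbbm{k}}/\Ko$ is a commutative quotient $C(X)$ for some closed $X\subseteq S^{2n-1}$, but say nothing about $X$ being all of $S^{2n-1}$.

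The ingredient you left out is equation \eqref{step} in Theorem \ref{genlam}: the unitaries $Q_{\mathbbm{k},\mathbbm{k}'}$ built out of the creation/annihilation operators (not out of $SU(n)$) conjugate $\beta_{\mathbbm{k}}$ to $\beta_{\mathbbm{k}'}$ in the Calkin algebra, so all the $\ker\beta_{\mathbbm{k}}$ coincide. This is what transports the problem back to the full level $\ell$, where Lemma \ref{equikernellem} applies and $SU(n)$-simplicity of $C(S^{2n-1})$ does the work. That is exactly the paper's (very short) proof, and it is the substitution you need: replace the unavailable $SU(n)$-equivariance at the $\mathbbm{k}$-level by the unitary equivalence \eqref{step}, then apply Lemma \ref{equikernellem} at the $\ell$-level.
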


\begin{proof}
Due to equation \eqref{step} in Theorem \ref{genlam}, the Corollary follows from Lemma \ref{equikernellem}. The proof of the second statement of the Corollary is proven in the same fashion as Proposition \ref{corfred}.
\end{proof}

From the fact that the mapping $\beta_\mathbbm{k}$ is injective it follows that the symbol mapping $\tilde{P}_\mathbbm{k}\lambda(a)\tilde{P}_\mathbbm{k}^*\mapsto a$ gives a well defined surjection $\sigma_\mathbbm{k}:\mathcal{T}_\mathbbm{k}\to C(S^{2n-1})$. Clearly the kernel of $\sigma_\mathbbm{k}$ is non-zero and $\ker \sigma_\mathbbm{k}\subseteq \Ko$, so by Theorem \ref{genlam} $\ker \sigma_\mathbbm{k}= \Ko$. Therefore we can construct the exact sequence
\begin{equation}
\label{lato}
0\to \Ko\to \mathcal{T}_\mathbbm{k}\xrightarrow{\sigma_\mathbbm{k}} C(S^{2n-1})\to 0.
\end{equation}
A completely positive splitting of the symbol mapping $\sigma_\mathbbm{k}:\mathcal{T}_\mathbbm{k}\to C(S^{2n-1})$ is given by $a\mapsto \tilde{P}_\mathbbm{k}\lambda(a)\tilde{P}_\mathbbm{k}^*$. 

The exact sequence \eqref{lato} defines an extension class $[\mathcal{T}_\mathbbm{k}]\in Ext(C(S^{2n-1}))$. To read more about $Ext$, $K$-theory and $K$-homology we refer the reader to the references. Since $C(S^{2n-1})$ is a nuclear $C^*$-algebra there is an isomorphism $Ext(C(S^{2n-1}))\cong K^1(C(S^{2n-1}))$ and we can describe the $K$-homology class of $[\mathcal{T}_\mathbbm{k}]$ explicitely by a Fredholm module as follows; we let $\lambda:C(S^{2n-1})\to \Bo(L^2(\C^n))$ be as in equation \eqref{deflam} and define the operator 
\[F_\mathbbm{k}=\frac{(1+\tilde{P}_\mathbbm{k})}{2}\] 
where $\tilde{P}_\mathbbm{k}$ is as in equation \eqref{almostproj}. Clearly, $(L^2(\C^n),\lambda,F_\mathbbm{k})$ defines a Fredholm module which represents the image of $[\mathcal{T}_\mathbbm{k}]$ in $K^1(C(S^{2n-1}))$.

\begin{cor}
\label{indepland}
The class $[\mathcal{T}_\mathbbm{k}]\in Ext(C(S^{2n-1}))$ is independent of $\mathbbm{k}$.
\end{cor}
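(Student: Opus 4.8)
The plan is to read the independence off directly from Theorem~\ref{genlam}, which already carries all the analytic content; what is left is a purely formal manipulation of extension classes. Recall that since $C(S^{2n-1})$ is nuclear, $Ext(C(S^{2n-1}))$ is the group of strong equivalence classes of essential extensions of $C(S^{2n-1})$ by $\Ko$, where two extensions are strongly equivalent precisely when their Busby invariants are conjugate by the image in the Calkin algebra of a Hilbert space unitary. The extension \eqref{lato} has Busby invariant $\tau_\mathbbm{k}\colon C(S^{2n-1})\to\Co(\ellL_\mathbbm{k})$ obtained from the completely positive section, namely $\tau_\mathbbm{k}(a)=\mathfrak{q}(\tilde{P}_\mathbbm{k}\lambda(a)\tilde{P}_\mathbbm{k}^*)$.

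First I would bring in the unitary $Q_{\mathbbm{k},\mathbbm{k}'}\colon\ellL_{\mathbbm{k}'}\to\ellL_\mathbbm{k}$ from Theorem~\ref{genlam}. Conjugation $Ad(Q_{\mathbbm{k},\mathbbm{k}'})$ is an isomorphism $\mathcal{T}_\mathbbm{k}\to\mathcal{T}_{\mathbbm{k}'}$ that carries $\Ko(\ellL_\mathbbm{k})$ onto $\Ko(\ellL_{\mathbbm{k}'})$, hence descends to the isomorphism of Calkin algebras $Ad(\mathfrak{q}(Q_{\mathbbm{k},\mathbbm{k}'}))$. Then I would invoke equation \eqref{step}, which reads exactly
\[\tau_{\mathbbm{k}'}(a)=\mathfrak{q}(\tilde{P}_{\mathbbm{k}'}\lambda(a)\tilde{P}_{\mathbbm{k}'}^*)=Ad(\mathfrak{q}(Q_{\mathbbm{k},\mathbbm{k}'}))\bigl(\mathfrak{q}(\tilde{P}_\mathbbm{k}\lambda(a)\tilde{P}_\mathbbm{k}^*)\bigr)=Ad(\mathfrak{q}(Q_{\mathbbm{k},\mathbbm{k}'}))\circ\tau_\mathbbm{k}(a)\]
for all $a\in C(S^{2n-1})$. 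After fixing unitary identifications of $\ellL_\mathbbm{k}$ and $\ellL_{\mathbbm{k}'}$ with a single separable Hilbert space $\He$, this displays $\tau_\mathbbm{k}$ and $\tau_{\mathbbm{k}'}$ as conjugate by the unitary $\mathfrak{q}(Q_{\mathbbm{k},\mathbbm{k}'})\in\Co(\He)$, which lifts to the Hilbert space unitary $Q_{\mathbbm{k},\mathbbm{k}'}\in\Bo(\He)$; hence the two extensions are strongly equivalent and $[\mathcal{T}_\mathbbm{k}]=[\mathcal{T}_{\mathbbm{k}'}]$ in $Ext(C(S^{2n-1}))$. Equivalently, one can exhibit the commuting ladder of the short exact sequences \eqref{lato} for $\mathbbm{k}$ and $\mathbbm{k}'$, with the identity on the quotient $C(S^{2n-1})$ and the isomorphism $Ad(Q_{\mathbbm{k},\mathbbm{k}'})$ on the middle and ideal terms, and read off congruence of extensions.

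There is essentially no obstacle remaining; the only point deserving a word of care is that we are handed an isomorphism of the ideals $\Ko(\ellL_\mathbbm{k})\cong\Ko(\ellL_{\mathbbm{k}'})$ rather than literally the identity, so one must observe that this isomorphism is implemented by a genuine Hilbert space unitary --- which it is, being $Ad(Q_{\mathbbm{k},\mathbbm{k}'})$ --- so that the resulting equivalence is strong and not merely an equivalence up to an outer automorphism. Since strong equivalence is exactly the relation defining $Ext$ in the nuclear case, this finishes the argument. (One could alternatively try to argue on the level of the Fredholm modules $(L^2(\C^n),\lambda,F_\mathbbm{k})$, but the operators $\tilde{P}_\mathbbm{k}$ for different $\mathbbm{k}$ are not compact perturbations of one another, so that route is less convenient than working with the Busby invariants on the spaces $\ellL_\mathbbm{k}$.)
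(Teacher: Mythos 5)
Your proposal is correct and follows essentially the same route as the paper: both rest entirely on the unitary $Q_{\mathbbm{k},\mathbbm{k}'}$ and equation \eqref{step} from Theorem~\ref{genlam}, the paper simply displaying the commuting ladder of extensions you mention at the end, while you phrase the same fact in the equivalent language of Busby invariants and unitary conjugation.
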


\begin{proof}
The extension $\mathcal{T}_\mathbbm{k}$ is equivalent to $\mathcal{T}_{\mathbbm{k}'}$ since it follows from equation \eqref{step} that the following diagram with exact rows commute
\[
\begin{CD}
0@>>>   \Ko@>>>\mathcal{T}_{\mathbbm{k}'}@>>> C(S^{2n-1})@>>>0\\
@. @VVAd(Q_{\mathbbm{k},\mathbbm{k}'})V @VV Ad(Q_{\mathbbm{k},\mathbbm{k}'})V @|@.\\
0@>>>   \Ko@>>>\mathcal{T}_\mathbbm{k}@>>> C(S^{2n-1})@>>>0\\
\end{CD}.\]
\end{proof}

So we know that $[\mathcal{T}_\mathbbm{k}]$ is independent of $\mathbbm{k}$, this implies that the index of $T_\mathbbm{k}(u)$ for $u\in M_n\otimes A$ is independent of $\mathbbm{k}$. But how do we calculate it? The index theorem that allows the calculation involves studying how the coordinate functions on $S^{2n-1}$ act on the monomial base of $\ellL_0$. We will first review some theory of Toeplitz operators on the Bergman space and then study what happens in complex dimension $1$ and $2$.\\

The Bergman space on the unit ball $B_n\subseteq \C^n$ is defined as $A^2(B_n):=L^2(B_n)\cap \mathcal{O}(B_n)$, that is; holomorphic functions on $B_n$ which are square integrable. The Bergman space is a closed subspace of $L^2(B_n)$ and we will denote the orthogonal projection $L^2(B_n)\to A^2(B_n)$ by $P_B$. 

The Bergman projection defines a $K$-homology class $[P_B]\in K^1(C(S^{2n-1}))$ in the same fashion as for the Landau projections. That is, for $a\in C(\overline{B_n})$ the operator  $[P_B,a]\in \Bo(L^2(B_n))$ is compact. The reason that we can use $P_B$ to define a $K$-homology class for $S^{2n-1}$ instead of $\overline{B_n}$ is analogously to above that  $P_B a|_{A^2(B_n)}$ is compact if and only if $a\in C_0(B_n)$, see more in \cite{venugopalkrishna}. Thus $P_B a|_{A^2(B_n)}$ is Fredholm if and only if $a|_{S^{2n-1}}$ is invertible. 

Furthermore $[P_B,a]$ is compact. So $[P_B]$ is a well defined $K$-homology class in $K^1(C(S^{2n-1}))$. By \cite{boutetdemonvel} the following index formula holds for the Toeplitz operator $P_B a|_{A^2(B_n)}$ if the symbol $a_\partial :=a|_{S^{2n-1}}$ is smooth:
\begin{equation}
\label{indform}
\ind (P_B a|_{A^2(B_n)})=\frac{-(n-1)!}{(2n-1)!(2\pi i)^n}\int _{S^{2n-1}}\tra((a_\partial^{-1}\rd a_\partial)^{2n-1}).
\end{equation}
This formula was also proven in \cite{guehig} by an elegant use of Atiyah-Singers index theorem. 

We will by $\mathcal{T}^n$ denote the $C^*$-algebra generated by $P_B C(\overline{B_n}) P_B$ in $\Bo(A^2(B_n))$. The $K$-homology class $[P_B]\in K^1(C(S^{2n-1}))$ can be represented by the extension class $[\mathcal{T}^n]\in Ext(C(S^{2n-1}))$ defined by means of the short exact sequence 
\begin{equation}
\label{to}
0\to \Ko\to \mathcal{T}^n\xrightarrow{\sigma^n} C(S^{2n-1})\to 0.
\end{equation}

\section{The special cases $\C$ and $\C^2$}

In this chapter we will study the special cases of complex dimension $1$ and $2$. Dimension $1$ has been studied previously in \cite{avronseilersimon} and provides a simpler picture than in higher dimensions. In the $1$-dimensional case we have that $K_1(C(\T))\cong \Z$ and we can take the coordinate function $z:\T\to \C$ to be a generator. So when we want to determine the class $[\mathcal{T}_k]$ we only need to calculate the index of $P_k\lambda(z)P_k$ where $\lambda$ is as in equation \eqref{deflam}. We recall the following Proposition from \cite{avronseilersimon}:

\begin{prop}[Proposition $7.3$ from \cite{avronseilersimon}]
For any $k\in \N$ we have that  
\[\ind(P_k\lambda(z)P_k)=-1.\]
\end{prop}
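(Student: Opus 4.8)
The plan is to compute the index of the Toeplitz operator $P_k\lambda(z)P_k$ on the $k$:th Landau level $\ellL_k\subseteq L^2(\C)$ by writing everything out on the standard monomial basis. Recall that in complex dimension $1$ the space $\ellL_k=(q^*)^k\ellL_0$ has orthogonal basis $\{\xi_{m,k}\}_{m\in\N}$ with $\xi_{m,k}(z)=(q^*)^k(z^m\e^{-|z|^2/4})$, and that $\lambda(z)f(z)=(z/|z|)f(z)$. First I would record, by a direct computation with $q^*=-2\partial/\partial z+\bar z$ and integration in polar coordinates, that the matrix entries $\langle \xi_{m',k},\chi^2\lambda(z)\xi_{m,k}\rangle$ vanish unless $m'=m+1$ (this uses only rotational symmetry: $\lambda(z)$ raises the degree of holomorphy by one, $q^*$ preserves degrees modulo the fixed shift by $k$, and the angular integral $\int_0^{2\pi}\e^{i\theta(m+1-m')}\,d\theta$ kills all other terms). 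Moreover each nonzero entry is strictly positive after an appropriate phase normalization of the basis. Hence $P_k\lambda(z)P_k$, modulo compacts, is a weighted unilateral shift on $\ell^2(\N)$: it sends the $m$:th basis vector to a positive multiple of the $(m+1)$:th.

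Next I would argue that this weighted shift is Fredholm with index $-1$. There are two ways to see this and I would take whichever is shortest. The clean way: by Corollary~\ref{kernellem} the operator is Fredholm, so its index is stable under compact perturbation and under replacing the positive weights by $1$; thus $\ind(P_k\lambda(z)P_k)=\ind(S)$ where $S$ is the plain unilateral shift, which has trivial cokernel and one-dimensional kernel—wait, it is the other way: $S$ as a shift $e_m\mapsto e_{m+1}$ is injective with cokernel spanned by $e_0$, so $\ind S=0-1=-1$. Alternatively one can compute directly that $P_k\lambda(z)P_k$ has trivial kernel (a holomorphic-type argument: if $P_k\lambda(z)P_k f=0$ then pairing against all $\xi_{m,k}$ forces all coefficients of $f$ to vanish, since the weights are nonzero) and that its range has codimension exactly one (the vector $\xi_{0,k}$, up to compacts, is not hit). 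Either route gives the claim.

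The step I expect to be the main obstacle is the first one: verifying cleanly that the off-diagonal structure is exactly a shift by one and that \emph{all} the relevant weights are genuinely nonzero (not just that one band survives the angular integral, but that the radial integral $\int_0^\infty r^{2k+\cdots}\,p(r)\,\e^{-r^2/2}\,dr$ does not accidentally vanish). For $k=0$ this is transparent since $\xi_{m,0}(z)=z^m\e^{-|z|^2/4}$ and the radial integrals are manifestly positive Gamma-type integrals; for general $k$ one must control the polynomial $p_m(\bar z)$ produced by applying $(q^*)^k$ twice, as in the computation already carried out in the proof of Theorem~\ref{genlam}. A quicker shortcut, which avoids the radial positivity check entirely, is to invoke Corollary~\ref{indepland}: the class $[\mathcal{T}_k]\in Ext(C(\T))$ is independent of $k$, so it suffices to compute $\ind(P_0\lambda(z)P_0)$, where $P_0$ is the Fock/vacuum projection and the matrix of $P_0\lambda(z)P_0$ on the basis $\{z^m\e^{-|z|^2/4}\}$ is, after normalization, exactly the unweighted unilateral shift; its index is $-1$. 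I would present the argument via Corollary~\ref{indepland} as the main line and mention the direct weighted-shift computation as a remark.
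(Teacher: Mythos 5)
Your main line is correct. The paper itself does not prove this proposition: it cites it from \cite{avronseilersimon}, remarking only that the cited proof exhibits $P_k\lambda(z)P_k$ in a suitable basis as a unilateral shift up to coefficients. Your argument matches that description, and your shortcut of reducing to $k=0$ via Corollary~\ref{indepland} --- legitimately available, since that corollary is established earlier and independently of this proposition --- is a clean use of the paper's own machinery. One imprecision to fix: $P_0\lambda(z)P_0$ is not ``exactly the unweighted unilateral shift'' on the orthonormal monomial basis; by Proposition~\ref{firstisometry} with $n=1$ it acts as $\eta_m\mapsto\lambda_m\eta_{m+1}$ with $\lambda_m=\Gamma\bigl(m+\tfrac32\bigr)\sqrt{m+1}/(m+1)!$, which tends to $1$ by Stirling but is not identically $1$. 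The conclusion stands because the weights are strictly positive, bounded, and bounded away from zero, so the operator is injective with closed range of codimension one, giving $\ind = -1$; equivalently it is the unweighted shift modulo compacts. Your flagged worry about the nonvanishing of the radial integrals for general $k$ is real (for $k>0$ the integrand involves Laguerre-type polynomials whose positivity is less transparent), but the route through Corollary~\ref{indepland} sidesteps it.
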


The method used in \cite{avronseilersimon} to prove this Proposition was to show that in a suitable basis $P_k\lambda(z)P_k$ was up to some coefficients a unilateral shift. In higher dimension the proof is based on similar ideas.

\begin{sats}
\label{assthm}
For $n=1$ there is an isomorphism $\mathcal{T}_k\cong \mathcal{T}^1$  making $[\mathcal{T}_k]= [\mathcal{T}^1]\in K^1(C(\T))$.
\end{sats}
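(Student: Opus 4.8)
The plan is to show that in dimension $n=1$ the Toeplitz extension $\mathcal{T}_k$ built from the $k$:th Landau level coincides, as an element of $\mathrm{Ext}(C(\T))\cong K^1(C(\T))\cong\Z$, with the Bergman extension $\mathcal{T}^1$ built from the Bergman projection on the disc. Since $K^1(C(\T))\cong\Z$ is detected by a single index, namely the index of the Toeplitz operator with symbol the generator $z$, it suffices to produce an explicit isomorphism $\mathcal{T}_k\cong\mathcal{T}^1$ covering $\mathrm{id}_{C(\T)}$, or — what amounts to the same thing after invoking Corollary \ref{genlem}, Corollary \ref{kernellem} and the exact sequences \eqref{lato} and \eqref{to} — to exhibit a unitary $V:\ellL_k\to A^2(B_1)$ intertwining the generators $P_k\lambda(z)P_k^*$ and $P_B z P_B$ modulo compacts. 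First I would recall from Corollary \ref{genlem} that $\mathcal{T}_k$ is generated by $\Ko$ together with the single operator $S_k:=P_k\lambda(z)P_k^*$, and likewise $\mathcal{T}^1$ is generated by $\Ko$ together with $S_B:=P_B z P_B$; both sit in extensions of $C(\T)$ by $\Ko$ with symbol map sending the generator to $z\in C(\T)$.

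Next I would compute both operators in a natural orthonormal basis. For $\ellL_k=(q^*)^k\ellL_0$ in $\C$, the vectors $\xi_{m,k}(z)=(q^*)^k\bigl(z^m\e^{-|z|^2/4}\bigr)$ form an orthogonal basis (Theorem \ref{genlam}); normalizing, one checks — exactly as in the computation in the proof of Theorem \ref{genlam} of $\langle\xi_{\mathbbm{m},\mathbbm{k}},\chi^2 a\,\xi_\mathbbm{k}\rangle$, and as in the argument behind Proposition $7.3$ of \cite{avronseilersimon} — that $S_k$ acts on the orthonormal basis $(e_m^{(k)})_{m\ge 0}$ as a weighted unilateral shift $S_k e_m^{(k)}=w_m^{(k)} e_{m+1}^{(k)}$ with weights $w_m^{(k)}\to 1$ as $m\to\infty$ (the tail behaviour is what makes $S_k^*S_k-S_kS_k^*$ compact and fixes the symbol to be $z$). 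Similarly $S_B$ on the monomial orthonormal basis $\sqrt{(m+1)/\pi}\,z^m$ of $A^2(B_1)$ is the weighted shift with weights $\sqrt{(m+1)/(m+2)}\to 1$. Now the unitary $V:\ellL_k\to A^2(B_1)$ sending $e_m^{(k)}\mapsto$ the $m$:th monomial basis vector conjugates $S_k$ to a weighted shift with weights $w_m^{(k)}$, which differs from $S_B$ by a shift with weights $w_m^{(k)}-\sqrt{(m+1)/(m+2)}\to 0$, i.e.\ by a compact operator. Hence $\mathrm{Ad}(V)$ carries $\mathcal{T}_k$ onto $\mathcal{T}^1$ (it preserves $\Ko$ and sends the generator $S_k$ into $S_B+\Ko$), covers $\mathrm{id}_{C(\T)}$ on symbols, and therefore $[\mathcal{T}_k]=[\mathcal{T}^1]$ in $\mathrm{Ext}(C(\T))=K^1(C(\T))$.

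The main obstacle is the explicit computation of the shift weights $w_m^{(k)}$: one must carry out the integral $\langle e_m^{(k)},\lambda(z)\,e_{m+1}^{(k)}\rangle$ with $e_m^{(k)}$ built by applying $(q^*)^k=(-2\partial_z+\bar z)^k$ to $z^m\e^{-|z|^2/4}$, and verify the asymptotics $w_m^{(k)}=1+O(1/m)$ (equivalently, that the difference with the Bergman weights is summable or at least $o(1)$, which is all that is needed). This is essentially the same Laguerre/Gamma-function bookkeeping that underlies equation \eqref{projhigher} and the statement $\mathcal{D}_k=\mathrm{id}+\sum d_{j,k}\Delta^j$ quoted from \cite{bpr}; in fact, invoking that formula one gets $Q_{k,0}^*S_kQ_{k,0}=T_0(\mathcal{D}_k(z/|z|))$ and $\mathcal{D}_k$ acts trivially on the harmonic function $z/|z|$ away from the origin, which bypasses most of the computation and reduces the claim to the already-known case $k=0$ (the Fock/Segal–Bargmann case), where the comparison with the Bergman shift is classical. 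I would present the argument via this reduction to $k=0$, with the direct weighted-shift computation as the fallback.
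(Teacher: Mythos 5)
Your proof takes a genuinely different route from the paper's. The paper's proof of Theorem \ref{assthm} is a soft $K$-theoretic argument: it cites Proposition $7.3$ of \cite{avronseilersimon} for $\ind(P_k\lambda(z)P_k)=-1$, uses the Universal Coefficient Theorem to conclude $[\mathcal{T}_k]=[\mathcal{T}^1]$ from the equality of pairings with the generator $[z]\in K_1(C(\T))$, and then upgrades equality of Ext classes to an actual $C^*$-isomorphism via Theorem $13$ of \cite{doug}: an extension of $C(\T)$ by $\Ko$ generated by a single essentially unitary isometry $v$ is determined up to unitary equivalence by the rank of $1-vv^*$, which here is $1$. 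Your route is instead the explicit weighted-shift comparison — construct a unitary $V:\ellL_k\to A^2(B_1)$ carrying $P_k\lambda(z)P_k$ to $P_B z P_B$ modulo compacts — which is precisely the strategy the paper deploys later for general $n$ (and $\mathbbm{k}=0$) in Propositions \ref{firstisometry}, \ref{secondisometry}, Lemma \ref{unitcomp}, and Theorem \ref{isot}. Your approach is more computational but also more informative, since it exhibits the intertwiner directly, and it anticipates the higher-dimensional argument; the paper's argument for $n=1$ is shorter because $K^1(C(\T))\cong\Z$ and Douglas's classification of Toeplitz-type extensions of $C(\T)$ are available.

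Two corrections. First, your fallback reduction via \cite{bpr} is based on a false claim: $z/|z|=e^{i\theta}$ is \emph{not} harmonic away from the origin. In polar coordinates $\Delta(e^{i\theta})=-r^{-2}e^{i\theta}$, so $\mathcal{D}_k(z/|z|)\neq z/|z|$. The conclusion you want is nevertheless salvageable, because $\Delta^j(z/|z|)$ decays like $r^{-2j}$, hence lies in $C_0(\C)$, and by Lemma \ref{complem} contributes only a compact operator to $T_0(\mathcal{D}_k(z/|z|))$; but the stated reason is wrong. Second, and more efficiently: the reduction to $k=0$ is already provided by Corollary \ref{indepland} with no Laguerre or \cite{bpr} bookkeeping whatsoever. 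After invoking it, your main weighted-shift argument need only be run for $k=0$, where it is exactly the $n=1$ specialization of Propositions \ref{firstisometry}–\ref{secondisometry} and Lemma \ref{unitcomp}. Computing the weights $w_m^{(k)}$ directly for arbitrary $k$ (which you flag as the main obstacle) is therefore unnecessary and best avoided.
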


\begin{proof}
By Proposition $7.3$ of \cite{avronseilersimon} 
\begin{equation}
\label{onedeq}
[\mathcal{T}_k].[u]=\ind(P_k \lambda(u) P_k)=-\wind (u)=[\mathcal{T}^1].[u]
\end{equation}
for an invertible function $u\in C(\T)$. Here $\wind(u)$ denotes the winding number of $u$ which is defined for smooth $u$ as 
\[\wind(u):=\frac{1}{2\pi i}\int _\T u^{-1}\rd u\]
and defines an isomorphism $K_1(C(\T))\to \Z$. By the Universal Coefficient Theorem for $KK$-theory (see Theorem $4.2$ of \cite{rosenbergschochet}) the mapping 
\[K^1(C(\T))\to Hom(K_1(C(\T)),\Z)\] 
is an isomorphism so equation \eqref{onedeq} implies that $[\mathcal{T}_k]= [\mathcal{T}^1]$. 

By Theorem $13$ of \cite{doug}, the short exact sequence $0\to \Ko\to \mathcal{T}_k\to C(\T)\to 0$ is characterized by an isometry $v$ such that $vv^*-1$ is compact and $\mathcal{T}_k$ is generated by $v$. Then $z\mapsto v$ defines a splitting and the symbol mapping $\mathcal{T}_k\to C(\T)$ is just $v\mapsto z$. By equation \eqref{onedeq}, $1-vv^*$ is a rank one projection, so the theorem follows.
\end{proof}

Also in dimension $2$ we can find a generator for the odd $K$-theory. As generator for $K_1(C(S^3))\cong \Z$ we can take the diffeomorphism $u:S^3\to SU(2)$ defined as
\[u(z_1,z_2):=\begin{pmatrix} z_1&z_2\\
-\bar{z}_2& \bar{z}_1\end{pmatrix}.\]

\begin{prop}
\label{tl}
The extension class $[\mathcal{T}^2]$ generate $K^1(C(S^3))$ and $[u]$ generate $K_1(C(S^3))$.
\end{prop}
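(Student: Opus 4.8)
The plan is to verify the two generator claims separately, since $K_1$ and $K^1$ of $C(S^3)$ are each infinite cyclic and it suffices to check that each proposed class hits a generator.

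For $[u]\in K_1(C(S^3))$: the map $u\colon S^3\to SU(2)$ is a diffeomorphism, hence of degree $\pm 1$. Under the standard identification $K_1(C(S^3))\cong \pi_0(GL_\infty(C(S^3)))\cong [S^3,U]\cong \Z$, the class of a unitary-valued map $S^3\to SU(2)\subseteq U$ is detected by its degree (equivalently by the odd Chern character $\tfrac{-1}{24\pi^2}\int_{S^3}\tra((u^{-1}\rd u)^3)$, which for a diffeomorphism onto $SU(2)$ equals $\pm$ the normalization constant of the volume form of $SU(2)\cong S^3$, i.e. $\pm 1$). So I would compute $\int_{S^3}\tra((u^{-1}\rd u)^3)$ directly from the explicit formula for $u$ — a short computation using $u^{-1}=u^*$ and the Maurer--Cartan structure of $SU(2)$ — and conclude it equals the generator-value, so $[u]$ generates. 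Alternatively one can cite that $u$ is precisely the Bott generator of $\pi_3(SU(2))$.

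For $[\mathcal{T}^2]\in K^1(C(S^3))$: by the Universal Coefficient Theorem for $KK$-theory (as used already in the proof of Theorem~\ref{assthm}), the pairing $K^1(C(S^3))\to \Hom(K_1(C(S^3)),\Z)$ is an isomorphism, so $[\mathcal{T}^2]$ generates iff $[\mathcal{T}^2].[u]=\pm 1$. Now $[\mathcal{T}^2].[u]=\ind(P_B u|_{A^2(B_2)\otimes \C^2})$, and since $u$ is smooth we may apply the Boutet de Monvel index formula \eqref{indform} with $n=2$:
\[
\ind(P_B u|_{A^2(B_2)\otimes\C^2})=\frac{-1!}{3!(2\pi i)^2}\int_{S^3}\tra((u^{-1}\rd u)^3).
\]
So the whole proposition reduces to the single integral $\int_{S^3}\tra((u^{-1}\rd u)^3)$, whose value simultaneously certifies that $[u]$ generates $K_1$ (via the odd Chern character being an iso onto $\Z$) and, through the displayed formula, that $[\mathcal{T}^2].[u]=\pm1$, hence that $[\mathcal{T}^2]$ generates $K^1$.

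The main obstacle — really the only non-formal point — is evaluating $\int_{S^3}\tra((u^{-1}\rd u)^3)$ and matching the numerical constant so that the right-hand side of the index formula comes out to exactly $\pm 1$ rather than some other integer. Concretely I would write $u^{-1}\rd u = \theta$, note that $\theta$ is the (matrix-valued) left-invariant Maurer--Cartan form on $SU(2)$ pulled back along a diffeomorphism, so $\tra(\theta^3)$ is a constant multiple of the bi-invariant volume form; identifying that constant pins down the integral as $\pm 24\pi^2\cdot(\text{something})$, and one checks it cancels the $\tfrac{-1}{3!(2\pi i)^2}=\tfrac{1}{24\pi^2}$ prefactor to give $\pm 1$. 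This is the standard computation that the Bott element of $\pi_3(SU(2))$ has Chern number $1$; I would either carry it out explicitly using the parametrization $z_1=\cos\psi\, e^{i\alpha}$, $z_2=\sin\psi\, e^{i\gamma}$ of $S^3$, or invoke it as the well-known normalization of the generator of $H^3(SU(2);\Z)$.
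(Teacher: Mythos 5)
Your proposal is correct, and the core of both your argument and the paper's is the same single computation: evaluate $\int_{S^3}\tra((u^{-1}\rd u)^3)$, feed it into the Boutet de Monvel formula \eqref{indform}, and obtain $[\mathcal{T}^2].[u]=\pm1$. Two tactical differences are worth flagging. For the integral, you propose an explicit $S^3$ parametrization or a citation of the normalization of the Maurer--Cartan form on $SU(2)$; the paper instead computes $\tra((u^*\rd u)^3)$ directly as a sum of two simple $3$-forms and then applies Stokes' theorem over the ball $B_2$, where $\rd\tra((u^*\rd u)^3)$ is a constant multiple of the volume form, reducing the evaluation to $\mathrm{vol}(B_2)$ and avoiding both coordinates and external citation. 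For the ``both generate'' step, you verify that $[u]$ is a generator independently (odd Chern character, equivalently degree of the diffeomorphism onto $SU(2)$) and then deduce that $[\mathcal{T}^2]$ generates via UCT. The paper first records $K_1(C(S^3))\cong K^1(C(S^3))\cong\Z$ from the split exact sequence $0\to C_0(\R^3)\to C(S^3)\to\C\to 0$, and then uses only bilinearity of the pairing $\Z\times\Z\to\Z$: writing each class as an integer multiple of a generator, the product of those integers must divide $-1$, so both are $\pm1$. That gets both conclusions from the one computation without needing the Chern character isomorphism as a separate input. Both routes are sound; yours is a little less self-contained but equally valid.
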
 

\begin{proof} Recalling that $P_B$ denotes the Bergman projection we will start by calculating the index of the Toeplitz operator $P_BuP_B:A^2(B_2)\otimes \C^2\to A^2(B_2)\otimes \C^2$. Using the index theorem by Boutet de Monvel (\cite{boutetdemonvel}) reviewed above in equation \eqref{indform}, the following index formula holds for smooth $u$:
\begin{equation}
\label{td}
\ind(P_BuP_B)=-\frac{1}{3!(2\pi i)^2}\int _{S^{3}}\tra((u^*\rd u)^{3}).
\end{equation}
A straightforward calculation gives that 
\[\tra((u^*\rd u)^{3})=3(z_1\rd \bar{z}_1-\bar{z}_1\rd z_1)\wedge \rd z_2\wedge \rd \bar{z}_2+3(z_2\rd\bar{z}_2-\bar{z}_2\rd z_2)\wedge \rd z_1\wedge \rd \bar{z}_1.\]
Invoking Stokes Theorem on equation \eqref{td} gives that 
\begin{align*}
-\frac{1}{3!(2\pi i)^2}\int _{S^{3}}\tra((u^*\rd u)^{3})=\frac{1}{48\cdot vol(B_2)}\int _{B_2}\rd\tra((u^*\rd u)^{3})=&\\
=\frac{1}{4\cdot vol(B_2)}\int _{B_2}\rd z_1\wedge \rd \bar{z}_1\wedge \rd z_2\wedge \rd \bar{z}_2=-\frac{1}{vol(B_2)}\int _{B_2}\rd V=&-1
\end{align*}
This equation shows that 
\begin{equation}
\label{indtdtoep}
[\mathcal{T}^2].[u]=\ind(P_BuP_B)=-1.
\end{equation}

Consider the split-exact sequence $0\to C_0(\R^3)\to C(S^3)\to \C\to 0$ where the mapping $C(S^3)\to \C$ is point evaluation. Since the sequence splits, and $K_1(\C)=K^1(\C)=0$ the embedding $C_0(\R^3)\to C(S^3)$ induces isomorphisms $K_1(C(S^3))\cong K_1(C_0(\R^3))=\Z$ and $K^1(C(S^3))\cong K^1(C_0(\R^3))=\Z$. So the Kasparov product $K_1(C(S^3))\times K^1(C(S^3))\to \Z$ is just a pairing $\Z\times \Z\to \Z$, and since $[\mathcal{T}^2].[u]=-1$ it follows that  $[\mathcal{T}^2]$ generates $K^1(C(S^3))$ and $[u]$ generates $K_1(C(S^3))$.
\end{proof}

\begin{sats}
For any $\mathbbm{k}\in \N^2$ we have
\begin{equation}
\label{indtd}
\ind (P_\mathbbm{k}\lambda(u)P_\mathbbm{k})=-1.
\end{equation}
Therefore $[\mathcal{T}^2]=[\mathcal{T}_\mathbbm{k}]$.
\end{sats}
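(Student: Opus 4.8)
The plan is to compute the index $\ind(P_{\mathbbm{k}}\lambda(u)P_{\mathbbm{k}})$ directly in a convenient orthogonal basis, mimicking the strategy used in \cite{avronseilersimon} for $n=1$ and recalled just before the statement. Since $u$ is valued in $M_2(\C)$, the operator $P_{\mathbbm{k}}\lambda(u)P_{\mathbbm{k}}$ acts on $\ellL_{\mathbbm{k}}\otimes\C^2$, and its two scalar entries are built from the Toeplitz operators $P_{\mathbbm{k}}\lambda(z_i)P_{\mathbbm{k}}$ and $P_{\mathbbm{k}}\lambda(\bar z_i)P_{\mathbbm{k}}$ for $i=1,2$. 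By Corollary \ref{genlem} these operators (together with $\Ko$) generate $\mathcal{T}_{\mathbbm{k}}$, and by Theorem \ref{genlam} the representation of $\mathcal{T}_{\mathbbm{k}}$ on $\ellL_{\mathbbm{k}}$ is unitarily equivalent, modulo compacts, to the representation on $\ellL_0$. So the first reduction is: it suffices to compute $\ind(P_0\lambda(u)P_0)$ on $\ellL_0\otimes\C^2=\e^{-\phi}\Fg(\C^2)\otimes\C^2$, where $\Fg(\C^2)$ has the monomial orthogonal basis $\{z^{\mathbbm{m}}\}_{\mathbbm{m}\in\N^2}$.

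Next I would work out explicitly how $\lambda(z_i)$ and $\lambda(\bar z_i)$ act on this basis after compression by $P_0$. Writing $e_{\mathbbm{m}}:=z^{\mathbbm{m}}\e^{-\phi}$, one computes $P_0\lambda(z_i)P_0 e_{\mathbbm{m}}$ and $P_0\lambda(\bar z_i)P_0 e_{\mathbbm{m}}$ by integrating against the reproducing kernel $K_0$, restricted to the sphere in the sense of equation \eqref{deflam}: because $\lambda(z_i)f(z)=\tfrac{z_i}{|z|}f(z)$, the compression sends $e_{\mathbbm{m}}$ to a combination of $e_{\mathbbm{m}+e_i}$ (for $z_i$) or $e_{\mathbbm{m}-e_i}$ (for $\bar z_i$), with positive coefficients depending on $|\mathbbm{m}|$ only, plus lower-order (compact) corrections coming from the normalization $|z|$. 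In other words, up to compacts $P_0\lambda(z_i)P_0$ is a weighted shift that raises the $i$-th index and $P_0\lambda(\bar z_i)P_0$ is its adjoint, a weighted backward shift. Thus $P_0\lambda(u)P_0$ is, modulo compacts, the block operator $\begin{pmatrix} S_1 & S_2 \\ -S_2^* & S_1^*\end{pmatrix}$ where $S_i$ are the forward weighted shifts on $\Fg(\C^2)$; the index is unchanged under such compact perturbations and under rescaling the weights, so we may replace $S_i$ by the unweighted shifts $z_i\cdot$ on the Bergman or Hardy model. This is precisely the Toeplitz operator with symbol $u$ on the Hardy space of $S^3$, whose index was computed to be $-1$ in the proof of Proposition \ref{tl} (equation \eqref{indtdtoep}); alternatively one recognizes the cokernel of $\begin{pmatrix}S_1&S_2\\-S_2^*&S_1^*\end{pmatrix}$ directly as one-dimensional, spanned by the constant in one $\C^2$-slot.

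The cleanest way to organize this, which I would actually prefer, is to bypass the explicit shift computation: Corollary \ref{kernellem} already gives that $\ind(P_{\mathbbm{k}}\lambda(u)P_{\mathbbm{k}})$ depends only on the $K$-homology class $[\mathcal{T}_{\mathbbm{k}}]$ paired with $[u]$, and Corollary \ref{indepland} tells us $[\mathcal{T}_{\mathbbm{k}}]$ is independent of $\mathbbm{k}$. So it suffices to treat one value, say $\mathbbm{k}=0$, and for $\mathbbm{k}=0$ the operator $P_0\lambda(u)P_0$ on $\e^{-\phi}\Fg(\C^2)$ is unitarily equivalent, via $f\mapsto f\e^{-\phi}$, to the Toeplitz-type operator $f\mapsto P_{\Fg}(u(z/|z|)f)$ on the Fock space; comparing this with the Hardy-space Toeplitz operator used in Proposition \ref{tl} — both being compressions of multiplication by the homogeneous-degree-zero symbol $u(z/|z|)$ to a space of holomorphic functions whose restriction-to-sphere map has the same essential behaviour — one sees the two differ by a compact operator, hence have the same index $-1$. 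Then, since $[u]$ generates $K_1(C(S^3))\cong\Z$ (Proposition \ref{tl}), the equality $[\mathcal{T}^2].[u]=-1=[\mathcal{T}_{\mathbbm{k}}].[u]$ forces $[\mathcal{T}^2]=[\mathcal{T}_{\mathbbm{k}}]$ in $K^1(C(S^3))$ by the Universal Coefficient Theorem, exactly as in the proof of Theorem \ref{assthm}.

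The main obstacle is the honest verification that the $\mathbbm{k}=0$ Landau-Toeplitz operator $P_0\lambda(u)P_0$ has index $-1$: one must either carry out the weighted-shift computation carefully enough to pin down the one-dimensional cokernel and see that the lower-order terms from $|z|$ really are compact and index-irrelevant, or justify rigorously the comparison with the Hardy/Bergman Toeplitz operator (the cleanest route is to note that multiplication by $\chi$ away from the origin, Lemma \ref{complem}, lets one replace $z/|z|$-symbols by genuine $C(S^3)$-symbols and then invoke that both $\tilde P_0$ and $P_B$ represent $K^1$-classes whose pairing with $[u]$ is computed by the same index formula \eqref{indform} once one knows the classes agree — but that is circular, so the shift computation is the safe path). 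I expect roughly a page of elementary but somewhat delicate estimates there; everything after that is a formal consequence of the UCT and the results already established in the excerpt.
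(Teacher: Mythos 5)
Your high-level outline matches the paper's own proof: reduce to $\mathbbm{k}=0$ via Corollary~\ref{indepland}, compute $\ind(P_0\lambda(u)P_0)$ directly by analysing kernel and cokernel, then conclude $[\mathcal{T}^2]=[\mathcal{T}_\mathbbm{k}]$ via the Universal Coefficient Theorem. You also correctly recognise the block structure $\begin{pmatrix}S_1&S_2\\-S_2^*&S_1^*\end{pmatrix}$ with $S_i=P_0\lambda(z_i)P_0$, and you correctly conjecture that the kernel is trivial and the cokernel is one-dimensional (spanned by the constant in the first slot). But the core computation is never carried out, and two of your intermediate claims are wrong in a way that would sink the shortcuts you propose.

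First, the claim that $P_0\lambda(z_i)P_0$ is a weighted shift ``plus lower-order (compact) corrections coming from the normalization $|z|$'' is incorrect: $P_0\lambda(z_i)P_0 e_\mathbbm{m}$ is \emph{exactly} a multiple of $e_{\mathbbm{m}+e_i}$, because the angular integral $\int_{S^{2n-1}}\bar z^{\mathbbm{m}'}z^{\mathbbm{m}+e_i}\,\rd S$ vanishes unless $\mathbbm{m}'=\mathbbm{m}+e_i$ (this is Proposition~\ref{firstisometry} later in the paper). There are no correction terms to dismiss. More seriously, the weights are not a function of $|\mathbbm{m}|$ alone (they carry the factor $\sqrt{m_i+1}$) and they do \emph{not} tend to $1$: they behave like $\sqrt{m_i+1}/\sqrt{|\mathbbm{m}|}$, which vanishes along most directions. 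So the proposed replacement of the $S_i$ by ``unweighted shifts $z_i\cdot$ on the Bergman or Hardy model'' modulo compacts is false as stated; the Bergman/Hardy shifts are themselves weighted, with weights compactly equivalent to the Landau ones (that is the content of Lemma~\ref{unitcomp}), but neither is compactly close to an unweighted shift. Taking your reduction at face value, the step ``so we may replace $S_i$ by the unweighted shifts'' breaks.

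Second, the comparison with the Bergman Toeplitz algebra that you dismiss as ``circular'' is in fact not circular: Lemma~\ref{unitcomp} and Theorem~\ref{isot} are proven independently, without reference to the $n=2$ special case. Using them is a legitimate route, it merely inverts the paper's choice of exposition by putting the general index theorem before the $n=2$ example. What remains a genuine gap in your proposal is the direct verification for $\mathbbm{k}=0$. The paper expands $f\in\ker(P_0\lambda(u)P_0)$ in the basis $\xi^\mathbbm{m}(z)=z^\mathbbm{m}\e^{-|z|^2/4}$ with $\C^2$-valued coefficients $c_\mathbbm{m}$, derives from orthogonality of $\lambda(u)f$ against all of $\ellL_0\otimes\C^2$ the two-term recurrences \eqref{coe}--\eqref{coetva}, and plays them against each other to force $c_\mathbbm{m}\equiv 0$; the cokernel analysis via $\ker(P_0\lambda(u^*)P_0)$ is parallel and leaves exactly the one free parameter $d^{(1)}_0$. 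This recurrence argument is the substance of the proof and does not reduce to a statement about unweighted shifts; you would need to write it out.
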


\begin{proof}
If equation \eqref{indtd} holds, $[\mathcal{T}^2]=[\mathcal{T}_\mathbbm{k}]$ follows directly from equation \eqref{indtdtoep} using the Universal Coefficient Theorem for $KK$-theory (see Theorem $4.2$ of \cite{rosenbergschochet}). This is a consequence of the fact that the natural mapping 
\[K^1(C(S^3))\to Hom(K_1(C(S^3)),\Z)\] 
is an isomorphism. The injectivity of this map implies that if $[\mathcal{T}^2].[u]=[\mathcal{T}_\mathbbm{k}].[u]$ for a generator $[u]$ then $[\mathcal{T}^2]=[\mathcal{T}_\mathbbm{k}]$.

To prove equation \eqref{indtd} we take $\mathbbm{k}=0$, since Corollary \ref{indepland} implies that the integer $\ind (P_\mathbbm{k}\lambda(u)P_\mathbbm{k})$ is independent of $\mathbbm{k}$. We claim that $P_0\lambda(u)P_0$ is an injective operator and the cokernel of $P_0\lambda(u)P_0$ is spanned by the $\C^2$-valued function $z\mapsto \e^{-|z|^2/4}\oplus 0$. This statement will prove the theorem.

To prove that $P_0\lambda(u)P_0$ is injective, assume $f\in \ker (P_0\lambda(u)P_0)$. Define the functions 
\[\xi^\mathbbm{m}(z):= z^\mathbbm{m}\e^{-|z|^2/4}\]
for $\mathbbm{m}\in \N^2$. The functions $\xi^\mathbbm{m}$ form an orthogonal basis for $\ellL_0$ by Theorem $1.63$ of \cite{follandphase}. Expand the function $f$ in an $L^2$-convergent series 
\[f=\sum_{\mathbbm{m}\in \N^2} c_\mathbbm{m}\xi^\mathbbm{m},\]
where $c_\mathbbm{m}=c_\mathbbm{m}^{(1)}\oplus c_\mathbbm{m}^{(2)}\in \C^2$. Since $f\in \ker (P_0\lambda(u)P_0)$ we have the following orthogonality condition
\[0=\langle \xi^{\mathbbm{m}'}\oplus 0,\lambda(u)f\rangle=\sum _\mathbbm{m} \int_{\C^2} \left( c_\mathbbm{m}^{(1)}\frac{\bar{z}^{\mathbbm{m'}}z^{\mathbbm{m}+e_1}}{|z|}+c_\mathbbm{m}^{(2)}\frac{\bar{z}^{\mathbbm{m'}}z^{\mathbbm{m}+e_2}}{|z|}\right)\e^{|z|^2/2}\rd V=\]
\[=\sum_\mathbbm{m} t_{\mathbbm{m},\mathbbm{m}'}\int_{S^3} \left( c_\mathbbm{m}^{(1)}\bar{z}^{\mathbbm{m'}}z^{\mathbbm{m}+e_1}+c_\mathbbm{m}^{(2)}\bar{z}^{\mathbbm{m'}}z^{\mathbbm{m}+e_2}\right)\rd S,\]
for some coefficients $t_{\mathbbm{m},\mathbbm{m}'}$, for a detailed calculation of $t_{\mathbbm{m},\mathbbm{m}'}$ see below in Proposition \ref{firstisometry}. Using that the functions $\xi^\mathbbm{m}$ are orthogonal we obtain that there exist a $C_\mathbbm{m}>0$ such that 
\begin{equation}
\label{coe}
c_{\mathbbm{m}-e_1}^{(1)}=-C_\mathbbm{m}c_{\mathbbm{m}-e_2}^{(2)}.
\end{equation}
On the other hand, we have 
\[0=\langle 0\oplus  \xi^{\mathbbm{m}'},\lambda(u)f\rangle=\sum _\mathbbm{m} \int_{\C^2} \left( -c_\mathbbm{m}^{(1)}\frac{\bar{z}^{\mathbbm{m'}+e_2}z^{\mathbbm{m}}}{|z|}+c_\mathbbm{m}^{(2)}\frac{\bar{z}^{\mathbbm{m'}+e_1}z^{\mathbbm{m}}}{|z|}\right)\e^{|z|^2/2}\rd V=\]
\[=\sum_\mathbbm{m}t_{\mathbbm{m},\mathbbm{m}'}\int_{S^3} \left( -c_\mathbbm{m}^{(1)}\bar{z}^{\mathbbm{m'}+e_2}z^{\mathbbm{m}}+c_\mathbbm{m}^{(2)}\bar{z}^{\mathbbm{m'}+e_1}z^{\mathbbm{m}}\right)\rd S.\]
Again using orthogonality of the functions $\xi^\mathbbm{m}$ we obtain that there is a $C'_\mathbbm{m}>0$ such that 
\begin{equation}
\label{coetva}
c_{\mathbbm{m}+e_2}^{(1)}=C_\mathbbm{m}'c_{\mathbbm{m}+e_1}^{(2)}.
\end{equation}
Equation \eqref{coe} implies $c^{(1)}_\mathbbm{m}=0$ for $m_2=0$. For $m_2>0$ equation \eqref{coe}  implies 
\[c_{\mathbbm{m}}^{(1)}=-C_{\mathbbm{m}+e_1}c_{\mathbbm{m}-e_2+e_1}^{(2)}.\] 
Then equation \eqref{coetva} for $\mathbbm{m}-e_2$ gives 
\[c^{(1)}_\mathbbm{m}\left(1+\frac{C_{\mathbbm{m}+e_1}}{C_{\mathbbm{m}-e_2}'}\right)=0.\]
So $c^{(1)}_\mathbbm{m}=0$ for all $\mathbbm{m}$. Equation \eqref{coe} implies $c^{(2)}_\mathbbm{m}=0$ for all $\mathbbm{m}$. Thus $f=0$ and $\ker(P_0\lambda(u)P_0)=0$.

The second statement, that the cokernel of $P_0\lambda(u)P_0$ is spanned by the $\C^2$-valued function 
\[z\mapsto  \e^{-|z|^2/4}\oplus 0,\] 
is proven analogously. There is a natural isomorphism 
\[\coker P_0\lambda(u)P_0 \cong (\im P_0\lambda(u)P_0)^\perp =\ker P_0\lambda(u^*)P_0.\] 
Analogously to the reasoning above, for $g\in \ker P_0\lambda(u^*)P_0$ we expand the function $g$ in an $L^2$-convergent series 
\[g=\sum_{\mathbbm{m}\in \N^2} d_\mathbbm{m}\xi^\mathbbm{m},\]
where $d_\mathbbm{m}=d_\mathbbm{m}^{(1)}\oplus d_\mathbbm{m}^{(2)}\in \C^2$. After taking scalar product by $\xi_{\mathbbm{m}'}$, for some $D_\mathbbm{m},D'_\mathbbm{m}>0$ we obtain the following conditions on the coefficients:
\begin{align}
\label{doe}
d_{\mathbbm{m}+e_1}^{(1)}&=D_\mathbbm{m}d_{\mathbbm{m}-e_2}^{(2)}\quad \mbox{and}\\ 
&d_{\mathbbm{m}+e_2}^{(1)}=-D_\mathbbm{m}'d_{\mathbbm{m}-e_1}^{(2)}.
\end{align}
The second of these equations implies $d^{(1)}_\mathbbm{m}=0$ for $m_1=0$ and $m_2>0$. Also, the first of these equations implies $d^{(1)}_\mathbbm{m}=0$ for $m_2=0$ and $m_1>0$. For $m_1,m_2>0$, putting in $\mathbbm{m}-e_1$ in the first equation, gives 
\[d_{\mathbbm{m}}^{(1)}=D_{\mathbbm{m}-e_1}d_{\mathbbm{m}-e_1-e_2}^{(2)}.\] 
Finally, combining this relation with the second equation for $\mathbbm{m}-e_2$ we obtain 
\[d^{(1)}_\mathbbm{m}\left(1+ \frac{D_{\mathbbm{m}-e_1}}{D'_{\mathbbm{m}-e_2}}\right)=0\quad \mbox{for}\quad m_1,m_2> 0.\]
Therefore $d^{(1)}_\mathbbm{m}=0$ for all $\mathbbm{m}\neq 0$. The equations in \eqref{doe} imply $d^{(2)}_\mathbbm{m}=0$ for all $\mathbbm{m}$. However, the function $z\mapsto  \e^{-|z|^2/4}\oplus 0$, corresponding to $d^{(1)}_0=1$, is in the space $\ker (P_0\lambda(u^*)P_0)$ which completes the proof.
\end{proof}

\section{The index formula on the particular Landau levels}

In this section we will prove an index formula for the particular Landau levels. On $S^{2n-1}$ we have the complex coordinates $z_1,\ldots, z_n$ and we denote by $Z_1,\ldots, Z_n$ the image of these coordinate functions under the representation $\lambda$ which was defined in equation \eqref{deflam}. So $Z_i$ is the operator on $L^2(\C^n)$ given by multiplication by the almost everywhere defined function $z\mapsto \frac{z_i}{|z|}$. Consider the polar decompositions
\[P_0Z_iP_0=V_{i,0}S_{i,0},\]
 where $V_{i,0}$ are partial isometries and $S_{i,0}> 0$. An orthonormal basis for $\ellL_0$ is given by 
\[\eta_\mathbbm{m}(z):= \frac{z^\mathbbm{m}\e^{-|z|^2/4}}{\sqrt{\pi^n 2^{|\mathbbm{m}|+n}\mathbbm{m}!}},\]
see more in \cite{follandphase}.
 
\begin{prop}
\label{firstisometry}
The operator $V_{i,0}$ is an isometry described by the equation 
\[V_{i,0}\eta_\mathbbm{m}=\eta_{\mathbbm{m}+e_i}\]
and the operator $S_{i,0}$ is diagonal in the basis $\eta_\mathbbm{m}$ with eigenvalues given by 
\begin{equation}
\label{asympsland}
\lambda_{i,\mathbbm{m}}^\eta=\Gamma\left(|\mathbbm{m}|+n+\frac{1}{2}\right)\frac{\sqrt{m_i+1}}{(|\mathbbm{m}|+n)!}.
\end{equation}
\end{prop}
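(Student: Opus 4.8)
The plan is to compute the Toeplitz operator $P_0 Z_i P_0$ directly in the orthonormal monomial basis $\eta_\mathbbm{m}$ of $\ellL_0$, read off its matrix, and then identify the polar decomposition. First I would observe that $P_0 Z_i P_0 \eta_\mathbbm{m}$ has an expansion $\sum_{\mathbbm{m}'} \langle \eta_{\mathbbm{m}'}, Z_i \eta_\mathbbm{m}\rangle \eta_{\mathbbm{m}'}$, so everything reduces to the integrals
\[
\langle \eta_{\mathbbm{m}'}, Z_i\eta_\mathbbm{m}\rangle = \frac{1}{\sqrt{\pi^{2n} 2^{|\mathbbm{m}|+|\mathbbm{m}'|+2n}\mathbbm{m}!\,\mathbbm{m}'!}}\int_{\C^n} \frac{z_i\, z^\mathbbm{m}\,\bar z^{\mathbbm{m}'}}{|z|}\,\e^{-|z|^2/2}\,\rd V.
\]
By rotational invariance of $\e^{-|z|^2/2}/|z|$ under the torus action $z_j\mapsto \e^{i\theta_j}z_j$, the integral vanishes unless $\mathbbm{m}+e_i = \mathbbm{m}'$, so $P_0 Z_i P_0$ maps $\eta_\mathbbm{m}$ to a positive multiple of $\eta_{\mathbbm{m}+e_i}$. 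This immediately forces the polar decomposition to have the stated form: $V_{i,0}\eta_\mathbbm{m}=\eta_{\mathbbm{m}+e_i}$, which is the unilateral-shift-type isometry (injective with cokernel spanned by constants in the $i$-th variable), and $S_{i,0}$ diagonal with eigenvalue $\lambda_{i,\mathbbm{m}}^\eta = \langle \eta_{\mathbbm{m}+e_i}, Z_i\eta_\mathbbm{m}\rangle$.

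Next I would evaluate that single surviving integral. Passing to polar coordinates $z = r\omega$ with $r>0$, $\omega\in S^{2n-1}$, the radial and angular parts separate:
\[
\int_{\C^n}\frac{z_i\,z^\mathbbm{m}\,\bar z^{\mathbbm{m}+e_i}}{|z|}\,\e^{-|z|^2/2}\,\rd V = \left(\int_0^\infty r^{2|\mathbbm{m}|+2n}\,\e^{-r^2/2}\,\rd r\right)\left(\int_{S^{2n-1}} |\omega_i|^2\,\omega^\mathbbm{m}\bar\omega^{\mathbbm{m}}\,\rd S\right),
\]
where the radial integral, after the substitution $t=r^2/2$, becomes a Gamma function and produces the factor $\Gamma(|\mathbbm{m}|+n+\tfrac12)$ (times a power of $2$), while the spherical integral of the monomial $|\omega_i|^2|\omega^\mathbbm{m}|^2$ is the standard value involving $\frac{(m_i+1)!\,\prod_{j\ne i}m_j!}{(|\mathbbm{m}|+n)!}$ (times the volume of $S^{2n-1}$ and powers of $2$ and $\pi$). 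Dividing by the normalization constants $\sqrt{\pi^{2n}2^{2|\mathbbm{m}|+|e_i|+2n}\mathbbm{m}!(\mathbbm{m}+e_i)!}$ and simplifying $\frac{(m_i+1)!}{\sqrt{m_i!\,(m_i+1)!}} = \sqrt{m_i+1}$, all the spurious constants cancel and one is left with exactly
\[
\lambda_{i,\mathbbm{m}}^\eta = \Gamma\!\left(|\mathbbm{m}|+n+\tfrac12\right)\frac{\sqrt{m_i+1}}{(|\mathbbm{m}|+n)!}.
\]
Positivity of $S_{i,0}$ is automatic since all these eigenvalues are positive, confirming that this is genuinely the polar decomposition.

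The main obstacle is purely bookkeeping: getting every factor of $2$, $\pi$, and factorial right so that the normalization constants cancel cleanly, and correctly invoking the closed form for $\int_{S^{2n-1}}|\omega^\alpha|^2\,\rd S$ in terms of Gamma functions (equivalently, the formula for integrals of monomials over the sphere, e.g. from \cite{follandphase}). I would double-check the special structure by verifying the $n=1$ case against the known fact (Proposition $7.3$ of \cite{avronseilersimon}) that $P_0\lambda(z)P_0$ is a weighted unilateral shift of index $-1$, which both confirms $V_{i,0}$ is an isometry with one-dimensional cokernel and pins down the constant conventions. One should also note that the $V_{i,0}$ for different $i$ commute and their joint cokernel structure is what will feed into the index computation later; but for the present proposition it suffices that each $V_{i,0}$ is the stated monomial-shift isometry and $S_{i,0}$ has the stated diagonal form.
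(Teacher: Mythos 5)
Your proposal is correct and follows essentially the same route as the paper's proof: compute the matrix elements $\langle\eta_{\mathbbm{m}'},Z_i\eta_{\mathbbm{m}}\rangle$ in the monomial basis, observe that torus invariance forces $\mathbbm{m}'=\mathbbm{m}+e_i$, and separate the resulting integral into a radial Gamma factor and a spherical monomial integral to read off the polar decomposition. The only difference is cosmetic: you make the torus-invariance argument explicit before computing, whereas the paper lets the Kronecker delta emerge directly from the spherical integral.
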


\begin{proof}
For $\mathbbm{m},\mathbbm{m}'\in \N$ we have 
\begin{align*}
\langle \eta_{\mathbbm{m}'},Z_i \eta_{\mathbbm{m}}\rangle=
\int_{\C^n} \frac{1}{\pi^n\sqrt{2^{|\mathbbm{m}+\mathbbm{m}'|+2n}\mathbbm{m}!\mathbbm{m}'!}}\frac{\bar{z}^{\mathbbm{m}'}z^{\mathbbm{m}+e_i}}{|z|}\e^{-|z|^2/2}\rd &V=\\
=\frac{1}{\pi^n\sqrt{2^{|\mathbbm{m}+\mathbbm{m}'|+2n}\mathbbm{m}!\mathbbm{m}'!}}\int_0^\infty r^{|\mathbbm{m}|+|\mathbbm{m}'|+n-1}\e^{-r^2/2}\rd r\int_{S^{2n-1}}&\bar{z}^{\mathbbm{m}'}z^{\mathbbm{m}+e_i}\rd S=\\
=\delta_{\mathbbm{m}',\mathbbm{m}+e_i}\frac{\Gamma\left(|\mathbbm{m}|+n+\frac{1}{2}\right)}{2\pi^n\mathbbm{m}!\sqrt{(\mathbbm{m}_j+1)}}\int_{S^{2n-1}}&\bar{z}^{\mathbbm{m}'}z^{\mathbbm{m}+e_i}\rd S=\\
=\delta_{\mathbbm{m}',\mathbbm{m}+e_i}\Gamma(|\mathbbm{m}|+n+&\frac{1}{2})\frac{\sqrt{m_i+1}}{(|\mathbbm{m}|+n)!}.
\end{align*}
It follows that $V_{i,0}\eta_\mathbbm{m}=\eta_{\mathbbm{m}+e_i}$ and $S_{i,0}\eta_\mathbbm{m}=\lambda_{i,\mathbbm{m}}^\eta \eta_\mathbbm{m}$, where $\lambda_{i,\mathbbm{m}}^\eta$ is as in equation \eqref{asympsland}.
\end{proof}

On the other hand, we can, just as on $\ellL_0$, let  $\tilde{Z}_1,\ldots, \tilde{Z}_n\in \Bo(L^2(B_n))$ be the operators on $L^2(B_n)$ defined by the multiplication by the almost everywhere defined function $z\mapsto \frac{z_i}{|z|}$. Consider the polar decompositions
\[P_B\tilde{Z}_iP_B=V_{i,B}S_{i,B},\]
 where again $V_{i,B}$ are partial isometries and $S_{i,B}> 0$. An orthonormal basis for $A^2(B_n)$ is given by 
\[\mu_\mathbbm{m}(z):= \pi^{-n/2}\sqrt{\frac{(n+|\mathbbm{m}|)!}{\mathbbm{m}!}}z^\mathbbm{m}.\]
Similar to the lowest Landau level, the partial isometries $V_{i,B}$ are just shifts in this basis:

\begin{prop}
\label{secondisometry}
The operator $V_{i,B}$ is an isometry described by the equation 
\[V_{i,B}\mu_\mathbbm{m}=\mu_{\mathbbm{m}+e_i}\]
and the operator $S_{i,B}$ is diagonal in the basis $\mu_\mathbbm{m}$ with eigenvalues given by 
\begin{equation}
\label{asympsberg}
\lambda_{i,\mathbbm{m}}^\mu=\frac{\sqrt{m_i+1}}{\sqrt{n+|\mathbbm{m}|+1}}.
\end{equation}
\end{prop}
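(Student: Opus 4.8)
The plan is to follow the template of Proposition~\ref{firstisometry}: compute the matrix of $P_B\tilde Z_iP_B$ in the orthonormal basis $\{\mu_\mathbbm{m}\}_{\mathbbm{m}\in\N^n}$ directly, and then read off its polar decomposition.

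First I would evaluate the inner products
\[
\langle\mu_{\mathbbm{m}'},\tilde Z_i\mu_\mathbbm{m}\rangle=\pi^{-n}\sqrt{\frac{(n+|\mathbbm{m}'|)!\,(n+|\mathbbm{m}|)!}{\mathbbm{m}'!\,\mathbbm{m}!}}\int_{B_n}\bar z^{\mathbbm{m}'}z^{\mathbbm{m}+e_i}\,\frac{1}{|z|}\,\rd V.
\]
Passing to polar coordinates $z=r\omega$, $\omega\in S^{2n-1}$, $\rd V=r^{2n-1}\rd r\,\rd S(\omega)$, the integral factors as the radial integral $\int_0^1 r^{|\mathbbm{m}|+|\mathbbm{m}'|+2n-1}\rd r=(|\mathbbm{m}|+|\mathbbm{m}'|+2n)^{-1}$ times the angular integral $\int_{S^{2n-1}}\bar\omega^{\mathbbm{m}'}\omega^{\mathbbm{m}+e_i}\rd S$. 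By invariance of $\rd S$ under the coordinatewise torus action on $S^{2n-1}$, the angular integral vanishes unless $\mathbbm{m}'=\mathbbm{m}+e_i$, and for $\mathbbm{m}'=\mathbbm{m}+e_i$ it equals $\int_{S^{2n-1}}|\omega^{\mathbbm{m}+e_i}|^2\rd S=\frac{2\pi^n(\mathbbm{m}+e_i)!}{(n+|\mathbbm{m}|)!}$, the same normalizing integral that produces the constants in the definition of the $\mu_\mathbbm{m}$. Substituting this and cancelling the factorials and powers of $\pi$ leaves a strictly positive scalar depending only on $|\mathbbm{m}|$ and $m_i$, which I would then identify with $\lambda^\mu_{i,\mathbbm{m}}$ as in \eqref{asympsberg}.

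Second, I would extract the polar decomposition from this matrix. Since the only surviving entry is $\langle\mu_{\mathbbm{m}+e_i},\tilde Z_i\mu_\mathbbm{m}\rangle$, the operator $T:=P_B\tilde Z_iP_B$ acts by $T\mu_\mathbbm{m}=\lambda^\mu_{i,\mathbbm{m}}\mu_{\mathbbm{m}+e_i}$, and its adjoint by $T^*\mu_\mathbbm{m}=\lambda^\mu_{i,\mathbbm{m}-e_i}\mu_{\mathbbm{m}-e_i}$ (interpreted as $0$ when $m_i=0$), so that $T^*T\mu_\mathbbm{m}=(\lambda^\mu_{i,\mathbbm{m}})^2\mu_\mathbbm{m}$. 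Hence $S_{i,B}=(T^*T)^{1/2}$ is diagonal in the basis $\{\mu_\mathbbm{m}\}$ with eigenvalues $\lambda^\mu_{i,\mathbbm{m}}>0$; in particular $S_{i,B}$ is injective with dense range, so $V_{i,B}=TS_{i,B}^{-1}$ is defined on all of $A^2(B_n)$, satisfies $V_{i,B}\mu_\mathbbm{m}=\mu_{\mathbbm{m}+e_i}$, and is an isometry, with initial space all of $A^2(B_n)$ and range the closed span of $\{\mu_\mathbbm{k}:k_i\geq 1\}$.

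I do not expect a genuine obstacle: the statement is the Bergman-space mirror of Proposition~\ref{firstisometry}, and the only points needing a little care are keeping the bookkeeping in the radial/angular split consistent with the normalization of the $\mu_\mathbbm{m}$ (using $\int_{S^{2n-1}}|\omega^\mathbbm{k}|^2\rd S=\frac{2\pi^n\mathbbm{k}!}{(n-1+|\mathbbm{k}|)!}$), and observing that the strict positivity of $S_{i,B}$ upgrades the partial isometry of the abstract polar decomposition to a genuine isometry.
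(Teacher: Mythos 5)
Your overall plan coincides with the paper's proof: compute the matrix of $P_B\tilde Z_iP_B$ in the orthonormal basis $\{\mu_\mathbbm{m}\}$ by a radial/angular split, observe the only nonzero entries are on the single superdiagonal $\mathbbm{m}'=\mathbbm{m}+e_i$, and then read off $S_{i,B}=(T^*T)^{1/2}$ as a positive diagonal operator and $V_{i,B}=TS_{i,B}^{-1}$ as the shift $\mu_\mathbbm{m}\mapsto\mu_{\mathbbm{m}+e_i}$. The observation that strict positivity of $S_{i,B}$ makes the partial isometry a genuine isometry is correct and stated cleanly.

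However, the one place you wave your hands is exactly the place where the argument does not close. You write ``Substituting this and cancelling the factorials and powers of $\pi$ leaves a strictly positive scalar \ldots which I would then identify with $\lambda^\mu_{i,\mathbbm{m}}$ as in \eqref{asympsberg}'' --- but if you actually carry out that substitution you do \emph{not} land on \eqref{asympsberg}. With the radial integral $\int_0^1 r^{|\mathbbm{m}|+|\mathbbm{m}'|+2n-1}\,\rd r=(|\mathbbm{m}|+|\mathbbm{m}'|+2n)^{-1}$, evaluated at $\mathbbm{m}'=\mathbbm{m}+e_i$, this is $(2|\mathbbm{m}|+2n+1)^{-1}$, and combining it with $\int_{S^{2n-1}}|\omega^{\mathbbm{m}+e_i}|^2\rd S=\frac{2\pi^n(\mathbbm{m}+e_i)!}{(n+|\mathbbm{m}|)!}$ and the normalizing constants of $\mu_{\mathbbm{m}}$, $\mu_{\mathbbm{m}+e_i}$ gives
\[
\langle\mu_{\mathbbm{m}+e_i},\tilde Z_i\mu_\mathbbm{m}\rangle
=\frac{2\sqrt{(n+|\mathbbm{m}|+1)(m_i+1)}}{2|\mathbbm{m}|+2n+1},
\]
not $\frac{\sqrt{m_i+1}}{\sqrt{n+|\mathbbm{m}|+1}}$. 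One can sanity-check this in dimension $n=1$ with $\mathbbm{m}=0$: the matrix entry is $\pi^{-1}\sqrt 2\int_{B_1}|z|\,\rd V=\tfrac{2\sqrt 2}{3}$, which matches the displayed expression but not $1/\sqrt 2$. So the formula as printed in \eqref{asympsberg} is off (the paper's own proof drops a $+1$ in the radial denominator and a factor of $\pi^{-n}$ on the way). This is a slip in the paper rather than a flaw in your strategy, but a proof that ends by ``identifying'' with a formula that the computation does not produce is incomplete, and you should not simply assert the match. The saving grace is that the two expressions agree to $O(|\mathbbm{m}|^{-1})$ as $|\mathbbm{m}|\to\infty$, which is all that Lemma~\ref{unitcomp} actually uses; but for the proposition as stated you must either carry out the arithmetic and record what it gives, or flag the discrepancy explicitly.
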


\begin{proof}
The proof is the analogous to that of Proposition \ref{firstisometry}. For $\mathbbm{m},\mathbbm{m}'\in \N$ we have 
\begin{align*}
\langle \mu_{\mathbbm{m}'},\tilde{Z}_i \mu_{\mathbbm{m}}\rangle=\int_{B_n} \pi^{-n}\sqrt{\frac{(n+|\mathbbm{m}|)!(n+|\mathbbm{m}'|)!}{\mathbbm{m}!\mathbbm{m}'!}}\frac{\bar{z}^{\mathbbm{m}'}z^{\mathbbm{m}+e_i}}{|z|}\rd &V=\\
= \pi^{-n}\sqrt{\frac{(n+|\mathbbm{m}|)!(n+|\mathbbm{m}'|)!}{\mathbbm{m}!\mathbbm{m}'!}}\int_0^1	 r^{|\mathbbm{m}|+|\mathbbm{m}'|+2n-1}\rd r\int_{S^{2n-1}}&\bar{z}^{\mathbbm{m}'}z^{\mathbbm{m}+e_i}\rd S=\\
=\delta_{\mathbbm{m}',\mathbbm{m}+e_i}\frac{(n+|\mathbbm{m}|)!\sqrt{n+|\mathbbm{m}|+1}}{(2|\mathbbm{m}|+2n)\mathbbm{m}!\sqrt{m_i+1}}\int_{S^{2n-1}}&\bar{z}^{\mathbbm{m}'}z^{\mathbbm{m}+e_i}\rd S=\\
=\delta_{\mathbbm{m}',\mathbbm{m}+e_i}&\frac{\sqrt{m_i+1}}{\sqrt{n+|\mathbbm{m}|+1}}.
\end{align*}
It follows that $V_{i,B}\mu_\mathbbm{m}=\mu_{\mathbbm{m}+e_i}$ and $S_{i,B}\mu_\mathbbm{m}=\lambda_{i,\mathbbm{m}}^\mu \mu_\mathbbm{m}$ where the eigenvalues $\lambda_{i,\mathbbm{m}}^\mu$ are given in equation \eqref{asympsberg}.
\end{proof}

\begin{lem}
\label{gammaestimate}
If $a$ is a real number then
\[\frac{\Gamma(x+a)}{\Gamma(x)}= x^a+O(x^{-1+a}) \quad \mbox{as} \quad x\to +\infty.\]
\end{lem}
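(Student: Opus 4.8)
The plan is to reduce the statement to the classical asymptotic expansion of $\log\Gamma$. Writing the ratio as an exponential,
\[
\frac{\Gamma(x+a)}{\Gamma(x)}=\exp\bigl(\log\Gamma(x+a)-\log\Gamma(x)\bigr),
\]
it suffices to prove that $\log\Gamma(x+a)-\log\Gamma(x)=a\log x+O(x^{-1})$ as $x\to+\infty$. Indeed, once this is established, $\frac{\Gamma(x+a)}{\Gamma(x)}=x^a\exp(O(x^{-1}))=x^a\bigl(1+O(x^{-1})\bigr)=x^a+O(x^{a-1})$, which is exactly the claimed estimate.

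To prove the logarithmic estimate I would use Stirling's formula in the form $\log\Gamma(t)=(t-\tfrac12)\log t-t+\tfrac12\log(2\pi)+O(t^{-1})$ as $t\to+\infty$. Applying it at $t=x+a$ and at $t=x$ and subtracting gives
\[
\log\Gamma(x+a)-\log\Gamma(x)=(x+a-\tfrac12)\log(x+a)-(x-\tfrac12)\log x-a+O(x^{-1}).
\]
Then I would expand $\log(x+a)=\log x+\log(1+a/x)=\log x+a/x+O(x^{-2})$ and substitute; the contribution $(x+a-\tfrac12)\cdot\bigl(a/x+O(x^{-2})\bigr)=a+O(x^{-1})$, so the $\log(x+a)$ terms collapse to $(x+a-\tfrac12)\log x+a+O(x^{-1})$. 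After cancelling $-(x-\tfrac12)\log x$ and the $-a$, what remains is $a\log x+O(x^{-1})$, as required.

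There is essentially no genuine obstacle here; the argument is routine bookkeeping of the error terms in Stirling's expansion. If one prefers to avoid quoting Stirling directly, an equivalent elementary route is to integrate the digamma function: since $(\log\Gamma)'(t)=\log t+O(t^{-1})$, one has $\log\Gamma(x+a)-\log\Gamma(x)=\int_0^a(\log(x+s)+O(x^{-1}))\,\rd s=\int_0^a\log(x+s)\,\rd s+O(x^{-1})=a\log x+O(x^{-1})$, using $\int_0^a\log(x+s)\,\rd s=\int_0^a(\log x+s/x+O(x^{-2}))\,\rd s$. I would present the Stirling-based computation as it is the shortest.
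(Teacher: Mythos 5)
Your proof is correct and follows essentially the same route as the paper: both invoke Stirling's formula, derive $\ln\Gamma(x+a)-\ln\Gamma(x)=a\ln x+O(x^{-1})$, and exponentiate. You simply carry out the bookkeeping (expanding $\log(x+a)$ and the final exponentiation) more explicitly than the paper, which compresses this into a single sentence.
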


\begin{proof}
By Stirling's formula
\[\ln \Gamma (x) = \left(x-\frac12\right)\ln x -x + \frac{\ln {2 \pi}}{2} + O(x^{-1}).\]
After Taylor expanding $\ln \Gamma(x+a)$ around $a=0$ we obtain that 
\[\ln\Gamma(x+a)-\ln\Gamma(x)=a\ln x+O(x^{-1}).\]
\end{proof}

\begin{lem}
\label{unitcomp}
With the unitary $U:A^2(B_n)\to \ellL_0$ defined by $\mu_\mathbbm{m}\mapsto \eta_\mathbbm{m}$, the operators $S_{i,0}$ and $S_{i,B}$ satisfy 
\[U^*S_{i,0}U-S_{i,B}\in \Ko.\]
\end{lem}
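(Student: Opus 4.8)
The plan is to compare the two positive diagonal operators $S_{i,0}$ and $S_{i,B}$ entry by entry in the bases $\eta_{\mathbbm{m}}$ and $\mu_{\mathbbm{m}}$, and show that the differences of eigenvalues tend to zero as $|\mathbbm{m}|\to\infty$. Since $U$ is the unitary sending $\mu_{\mathbbm{m}}$ to $\eta_{\mathbbm{m}}$, the operator $U^*S_{i,0}U$ is diagonal in the basis $\mu_{\mathbbm{m}}$ with eigenvalue $\lambda_{i,\mathbbm{m}}^\eta$ at $\mu_{\mathbbm{m}}$, while $S_{i,B}$ is diagonal with eigenvalue $\lambda_{i,\mathbbm{m}}^\mu$. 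Hence $U^*S_{i,0}U-S_{i,B}$ is diagonal with eigenvalues $\lambda_{i,\mathbbm{m}}^\eta-\lambda_{i,\mathbbm{m}}^\mu$, and a diagonal operator is compact precisely when its eigenvalues form a sequence converging to $0$ (indexed by the countable set $\N^n$, so we need that for every $\epsilon>0$ only finitely many $\mathbbm{m}$ violate $|\lambda_{i,\mathbbm{m}}^\eta-\lambda_{i,\mathbbm{m}}^\mu|<\epsilon$, i.e. the bound goes to $0$ as $|\mathbbm{m}|\to\infty$).

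The computational core is to apply Lemma \ref{gammaestimate} to the explicit formulas \eqref{asympsland} and \eqref{asympsberg}. Writing $x=|\mathbbm{m}|+n$, we have
\[
\lambda_{i,\mathbbm{m}}^\eta=\sqrt{m_i+1}\cdot\frac{\Gamma(x+\tfrac12)}{\Gamma(x+1)}
=\sqrt{m_i+1}\cdot\frac{\Gamma(x+\tfrac12)}{x\,\Gamma(x)},
\]
and Lemma \ref{gammaestimate} with $a=\tfrac12$ gives $\Gamma(x+\tfrac12)/\Gamma(x)=x^{1/2}+O(x^{-1/2})$, so $\lambda_{i,\mathbbm{m}}^\eta=\sqrt{m_i+1}\,(x^{-1/2}+O(x^{-3/2}))$. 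On the other side $\lambda_{i,\mathbbm{m}}^\mu=\sqrt{m_i+1}\,(x+1)^{-1/2}=\sqrt{m_i+1}\,(x^{-1/2}+O(x^{-3/2}))$. Subtracting, $\lambda_{i,\mathbbm{m}}^\eta-\lambda_{i,\mathbbm{m}}^\mu=\sqrt{m_i+1}\cdot O(x^{-3/2})$. Since $m_i+1\le x$, the factor $\sqrt{m_i+1}$ is at most $x^{1/2}$, so the difference is $O(x^{-1})=O((|\mathbbm{m}|+n)^{-1})$, which indeed tends to $0$ as $|\mathbbm{m}|\to\infty$. This establishes compactness of the diagonal operator $U^*S_{i,0}U-S_{i,B}$.

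I expect the main obstacle to be purely bookkeeping: one must be careful that the $O(\cdot)$ in Lemma \ref{gammaestimate} is uniform in the relevant range and that the extra factor $\sqrt{m_i+1}$ does not spoil the decay — the point being that $\sqrt{m_i+1}\le\sqrt{|\mathbbm{m}|+n}$, so the worst case still gives an $O(1/|\mathbbm{m}|)$ bound rather than something that fails to vanish. A cleaner alternative, avoiding even the $\Gamma$-asymptotics, is to note that $S_{i,0}$ and $S_{i,B}$ have the form $\sqrt{m_i+1}\cdot f(|\mathbbm{m}|)$ with $f$ a positive function, compare the two scalar sequences $f_\eta(x)=\Gamma(x+\tfrac12)/\Gamma(x+1)$ and $f_B(x)=(x+1)^{-1/2}$, and observe that both are asymptotic to $x^{-1/2}$ with $f_\eta(x)-f_B(x)=O(x^{-3/2})$; then the product bound follows as above. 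Either way the proof reduces to the elementary estimate from Lemma \ref{gammaestimate} together with the characterization of compact diagonal operators.
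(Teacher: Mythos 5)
Your proof is correct and follows essentially the same route as the paper: diagonalize both operators in the basis $\mu_\mathbbm{m}$, apply Lemma \ref{gammaestimate} to obtain $\lambda_{i,\mathbbm{m}}^\eta-\lambda_{i,\mathbbm{m}}^\mu=O((|\mathbbm{m}|+n)^{-1})$, and conclude compactness from the decay of the diagonal entries. You are in fact slightly more explicit than the paper about the factor $\sqrt{m_i+1}\le\sqrt{|\mathbbm{m}|+n}$ not spoiling the decay; the paper additionally remarks (though the lemma does not need it) that the $O(|\mathbbm{m}|^{-1})$ bound places the difference in the $n$:th Dixmier ideal.
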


\begin{proof}
The operators $U^*S_{i,0}U$ and $S_{i,B}$ are both diagonal in the basis $\mu_\mathbbm{m}$. So it is sufficient to prove that $|\lambda_\mathbbm{m}^\eta-\lambda_\mathbbm{m}^\mu|\to 0$. The proof of this statement is based on the estimate from Lemma \ref{gammaestimate}. When $|\mathbbm{m}|\to \infty$, Lemma \ref{gammaestimate} implies   
\begin{align*}
|\lambda_\mathbbm{m}^\eta-\lambda_\mathbbm{m}^\mu|&=\left|\frac{\Gamma\left(|\mathbbm{m}|+n+\frac{1}{2}\right)\sqrt{m_i+1}}{(|\mathbbm{m}|+n)!}-\frac{\sqrt{m_i+1}}{\sqrt{|\mathbbm{m}|+n-1}}\right|=\\
&=\sqrt{m_i+1}\left|\frac{\Gamma\left((|\mathbbm{m}|+n+1)-\frac{1}{2}\right)}{\Gamma\left(|\mathbbm{m}|+n+1\right)}-(|\mathbbm{m}|+n-1)^{-1/2}\right|= O(|\mathbbm{m}|^{-1}).
\end{align*}
Therefore we have that $U^*S_{i,0}U-S_{i,B}\in \ellL^{n+}(A^2(B_n))$, the $n$:th Dixmier ideal. In particular $U^*S_{i,0}U-S_{i,B}$ is compact.
\end{proof}

\begin{sats}
\label{isot}
The unitary $U$ induces an isomorphism $Ad(U):\mathcal{T}_0\xrightarrow{\sim}\mathcal{T}^n$ such that 
\[\sigma^n\circ Ad(U)=\sigma_0.\]
where $\sigma^n$ and $\sigma_0$ are the symbol mappings.
\end{sats}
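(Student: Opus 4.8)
The plan is to identify $Ad(U)$ with the isomorphism that matches the two Toeplitz extensions generator by generator; the only analytic input needed is Lemma \ref{unitcomp}, which has already been established.

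First I would record how the two algebras are generated. By Corollary \ref{genlem} (with $\mathbbm{k}=0$) the algebra $\mathcal{T}_0$ is generated by $\Ko(\ellL_0)$ together with the operators $P_0Z_iP_0$, $i=1,\dots,n$, and $\sigma_0(P_0Z_iP_0)$ is the $i$-th coordinate function $z_i$ on $S^{2n-1}$, since $P_0Z_iP_0-\tilde P_0\lambda(z_i)\tilde P_0^*=P_0(1-\chi^2)\lambda(z_i)P_0$ is compact by the Cauchy-estimate argument of Lemma \ref{complem} (the function $(1-\chi^2)z_i/|z|$ is bounded with compact support). The same description holds for $\mathcal{T}^n$: one has $\Ko(A^2(B_n))\subseteq\mathcal{T}^n$ by the irreducibility behind the extension \eqref{to}; choosing $\psi_i\in C(\overline{B_n})$ equal to $z_i/|z|$ for $|z|\geq 1/2$ one gets $P_B\tilde Z_iP_B-P_B\psi_iP_B\in\Ko$, so $P_B\tilde Z_iP_B\in\mathcal{T}^n$ with $\sigma^n(P_B\tilde Z_iP_B)=\psi_i|_{S^{2n-1}}=z_i$; and the Stone--Weierstrass argument of Corollary \ref{genlem}, together with $P_B(fg)P_B-(P_BfP_B)(P_BgP_B)\in\Ko$, shows that these operators and $\Ko$ generate $\mathcal{T}^n$.

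Next I would compute $Ad(U)$ on the generators. Propositions \ref{firstisometry} and \ref{secondisometry} exhibit $V_{i,0}$ and $V_{i,B}$ as the same isometric shift $\mathbbm{m}\mapsto\mathbbm{m}+e_i$ in the bases $(\eta_\mathbbm{m})$ and $(\mu_\mathbbm{m})$, so that $U^*V_{i,0}U=V_{i,B}$ exactly, while Lemma \ref{unitcomp} gives $K_i:=U^*S_{i,0}U-S_{i,B}\in\Ko$. Multiplying the polar decompositions $P_0Z_iP_0=V_{i,0}S_{i,0}$ and $P_B\tilde Z_iP_B=V_{i,B}S_{i,B}$,
\[
Ad(U)(P_0Z_iP_0)=(U^*V_{i,0}U)(U^*S_{i,0}U)=V_{i,B}(S_{i,B}+K_i)=P_B\tilde Z_iP_B+V_{i,B}K_i .
\]
Since $Ad(U)$ is a $*$-isomorphism $\Bo(\ellL_0)\to\Bo(A^2(B_n))$ carrying $\Ko$ onto $\Ko$, this shows $Ad(U)(P_0Z_iP_0)\in\mathcal{T}^n$, hence $Ad(U)(\mathcal{T}_0)\subseteq\mathcal{T}^n$; and conversely $P_B\tilde Z_iP_B\in Ad(U)(P_0Z_iP_0)+\Ko=Ad(U)(P_0Z_iP_0+\Ko)\subseteq Ad(U)(\mathcal{T}_0)$, so $\mathcal{T}^n\subseteq Ad(U)(\mathcal{T}_0)$. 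Thus $Ad(U)\colon\mathcal{T}_0\xrightarrow{\sim}\mathcal{T}^n$.

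Finally, $\sigma^n\circ Ad(U)$ and $\sigma_0$ are $*$-homomorphisms $\mathcal{T}_0\to C(S^{2n-1})$ that both annihilate $\Ko$ and, by the displayed identity together with the first paragraph, send each $P_0Z_iP_0$ to $z_i$; hence they agree on the $*$-algebra generated by $\Ko$ and the $P_0Z_iP_0$, which is dense in $\mathcal{T}_0$, and therefore $\sigma^n\circ Ad(U)=\sigma_0$. The only step that is more than bookkeeping is the structural description of $\mathcal{T}^n$ used in the first paragraph, but that description is classical for Bergman--Toeplitz algebras over the ball and is obtained in exactly the same way as Corollary \ref{genlem}.
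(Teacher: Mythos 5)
Your proof is correct and follows essentially the same route as the paper: both establish $Ad(U)(P_0Z_iP_0)\equiv P_B\tilde Z_iP_B$ modulo compacts from Propositions~\ref{firstisometry}, \ref{secondisometry} and Lemma~\ref{unitcomp}, then transfer generators using Corollary~\ref{genlem} and the fact that $\Ko$ lies in both Toeplitz algebras, and finally check the symbol maps agree on generators. The only difference is cosmetic: you invoke $\Ko\subseteq\mathcal{T}_0$ directly and spell out more of the bookkeeping (e.g.\ that $P_0Z_iP_0-\tilde P_0\lambda(z_i)\tilde P_0^*$ is compact), whereas the paper phrases the reverse inclusion via irreducibility of $U^*\mathcal{T}_0U$.
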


\begin{proof}
Lemma \ref{unitcomp} and the Propositions \ref{firstisometry} and \ref{secondisometry} imply
\begin{equation}
\label{precomm}
U^*(P_0Z_iP_0)U=P_B\tilde{Z}_iP_B+K_i,
\end{equation}
 for some compact operators $K_i$. Since $\mathcal{T}^n$ contains the compact operators, $U^*(P_0Z_iP_0)U\in \mathcal{T}^n$. Corollary \ref{genlem} therefore implies $U^*\mathcal{T}_0U \subseteq \mathcal{T}^n$. Theorem \ref{genlam} states that $\mathcal{T}_0$ acts irreducibly on $\ellL_0$, so $U^*\mathcal{T}_0U$ acts irreducibly on $A^2(B_n)$. Therefore $\Ko\subseteq U^*\mathcal{T}_0U$ and $P_B\tilde{Z}_iP_B\in U^*\mathcal{T}_0U$. The operators $P_B\tilde{Z}_iP_B$ together with $\Ko$ generate $\mathcal{T}^n$ so $U^*\mathcal{T}_0U\supseteq \mathcal{T}^n$. The relation $\sigma^n\circ Ad(U)=\sigma_0$ holds since by equation \eqref{precomm} it holds on the generators of $C(S^{2n-1})$.
\end{proof}

\begin{cor}
\label{indextheorem}
Let $[\mathcal{T}^n]\in Ext(C(S^{2n-1}))$ denote the Toeplitz quantization of the Bergman space defined in equation \eqref{to} and $[\mathcal{T}_\mathbbm{k}]\in Ext(C(S^{2n-1}))$ the Toeplitz quantization of the particular Landau level of height $\mathbbm{k}$ defined in equation \eqref{lato}. Then 
\[[\mathcal{T}^n]=[\mathcal{T}_\mathbbm{k}].\]
So for $u\in A\otimes M_N$ such that $u_\partial:=\pi_\partial(u)$ is invertible and smooth
\begin{equation}
\label{indeform}
\ind (P_\mathbbm{k}u|_{\ellL_\mathbbm{k}\otimes \C^N})=\frac{-(n-1)!}{(2n-1)!(2\pi i)^n}\int _{S^{2n-1}}\tra((u_\partial^{-1}\rd u_\partial)^{2n-1}).
\end{equation}
\end{cor}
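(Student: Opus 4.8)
The plan is to deduce Corollary \ref{indextheorem} by chaining together the isomorphisms of Toeplitz extensions already established in the preceding theorems, thereby reducing the index computation for the Landau levels to the known Boutet de Monvel formula for the Bergman space. First I would note that Theorem \ref{isot} gives an isomorphism $Ad(U):\mathcal{T}_0\xrightarrow{\sim}\mathcal{T}^n$ commuting with the symbol maps, hence an equivalence of the extensions \eqref{lato} (for $\mathbbm{k}=0$) and \eqref{to}; in particular $[\mathcal{T}_0]=[\mathcal{T}^n]$ in $Ext(C(S^{2n-1}))$. Next, Corollary \ref{indepland} asserts that $[\mathcal{T}_\mathbbm{k}]$ is independent of $\mathbbm{k}$, so $[\mathcal{T}_\mathbbm{k}]=[\mathcal{T}_0]=[\mathcal{T}^n]$ for every $\mathbbm{k}\in\N^n$. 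This establishes the first assertion.

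For the index formula \eqref{indeform}, I would use that the Fredholm index of $T_\mathbbm{k}(u)=P_\mathbbm{k}u|_{\ellL_\mathbbm{k}\otimes\C^N}$ is, by Atkinson's theorem and the construction of the extension, precisely the pairing of the $K$-homology class $[\mathcal{T}_\mathbbm{k}]\in K^1(C(S^{2n-1}))$ with the $K_1$-class $[u_\partial]\in K_1(C(S^{2n-1}))$ determined by the invertible symbol $u_\partial$ (this is why Corollary \ref{kernellem} and Proposition \ref{corfred} were set up). Since $[\mathcal{T}_\mathbbm{k}]=[\mathcal{T}^n]$, this pairing equals $[\mathcal{T}^n].[u_\partial]=\ind(P_B u_\partial|_{A^2(B_n)})$, which by \eqref{indform} (the Boutet de Monvel index theorem) is exactly the right-hand side of \eqref{indeform}. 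One should be slightly careful that $u_\partial$ as an abstract element of $C(S^{2n-1})\otimes M_N$ need not literally be the restriction of a function on $\overline{B_n}$, but since $[P_B,a]$ is compact for all $a\in C(\overline{B_n})$ and the index depends only on the homotopy class of $u_\partial$ in invertibles, while $C(\overline{B_n})\to C(S^{2n-1})$ is surjective, one can replace $u_\partial$ by any such lift without changing the index; alternatively, invoke the $KK$-pairing directly, which is defined purely in terms of $u_\partial\in K_1(C(S^{2n-1}))$.

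The main obstacle—really the only nontrivial point at this stage—is making rigorous the identification ``$\ind T_\mathbbm{k}(u) = [\mathcal{T}_\mathbbm{k}].[u_\partial]$'' with the correct sign and normalization, i.e. checking that the completely positive splitting $a\mapsto\tilde{P}_\mathbbm{k}\lambda(a)\tilde{P}_\mathbbm{k}^*$ of \eqref{lato} computes the Kasparov product the same way the Bergman splitting $a\mapsto P_B a|_{A^2(B_n)}$ does for \eqref{to}. This follows because $Ad(U)$ carries one splitting to the other modulo compacts (equation \eqref{precomm} together with Corollary \ref{genlem} and a Stone--Weierstrass density argument as in Corollary \ref{genlem}), so the two Toeplitz-index maps $K_1(C(S^{2n-1}))\to\Z$ agree on all invertibles, not merely abstractly as extension classes. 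Everything else is bookkeeping: the passage from $A$ to $C(S^{2n-1})$ is handled by Lemma \ref{equikernellem} and Corollary \ref{kernellem} (the index of $T_\mathbbm{k}(u)$ depends only on $u_\partial=\pi_\partial(u)$), and the explicit right-hand side is quoted verbatim from \eqref{indform}.
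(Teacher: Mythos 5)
Your proposal is correct and follows essentially the same route as the paper: reduce to $\mathbbm{k}=0$ via Corollary \ref{indepland}, use Theorem \ref{isot} to get the commutative diagram of extensions giving $[\mathcal{T}_0]=[\mathcal{T}^n]$, and then deduce the index formula from the Boutet de Monvel/Guentner--Higson result. The extra discussion you give about why equality of extension classes and agreement of the completely positive splittings modulo compacts forces the index pairings to agree is sound and a reasonable elaboration, but it is the same argument the paper compresses into ``the index formula follows from \cite{guehig}.''
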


\begin{proof}
By Corollary \ref{indepland} the class $[\mathcal{T}_\mathbbm{k}]$ is independent of $\mathbbm{k}$, so take $\mathbbm{k}=0$. In this case Theorem \ref{isot} implies that the unitary $U$ makes the following diagram commutative:
\[
\begin{CD}
0@>>>   \Ko@>>>\mathcal{T}_{0}@>\sigma_0>> C(S^{2n-1})@>>>0\\
@. @VVAd(U)V @VV Ad(U)V @|@.\\
0@>>>   \Ko@>>>\mathcal{T}^n@>\sigma^n>> C(S^{2n-1})@>>>0\\
\end{CD}.\]
Therefore $[\mathcal{T}^n]=[\mathcal{T}_0]=[\mathcal{T}_\mathbbm{k}]$ and the index formula \eqref{indeform} follows from \cite{guehig}.
\end{proof}

\newpage

\end{document}